\def\mc {\mathcal}
\def\mk {\mathfrak}
\def\rk  {{\rm rk}}
\def\coker {{\rm coker}}
\def\im {{\rm im}}
\def\hol {{\rm hol}}
\def\Hol {{\rm Hol}}
\def\Hom {{\rm Hom}}
\def\mc {\mathcal}
\def\mk {\mathfrak}
\def\ZZ {{\mathbb Z}}
\def\CC {{\mathbb C}}
\def\RR {{\mathbb R}}
\def\QQ {{\mathbb Q}}
\def\NN {{\mathbb N}}
\def\KK {{\mathbb K}}
\newtheorem{prop}[theorem]{Proposition}
\newtheorem{rem}{Remark}[section]
\newtheorem{fact}{Fact}[section]
\begin{document}

\title{
Non-abelian Quantum Statistics on Graphs
}
\author{Tomasz Maci\k{a}\.zek$^{1,2}$ \and Adam Sawicki$^1$}
\institute {$^1$ Center for Theoretical Physics, Polish Academy of Sciences, Al. Lotnik\'ow 32/46, 02-668 Warszawa, Poland \and $^2$ School of Mathematics, University of Bristol, Bristol BS8 1TW, UK}
\date{}
\maketitle
\tableofcontents

\begin{abstract}
We show that non-abelian quantum statistics can be studied using certain topological invariants which are the homology groups of configuration spaces. In particular, we formulate a general framework for describing quantum statistics of particles constrained to move in a topological space $X$. The framework involves a study of isomorphism classes of flat complex vector bundles over the configuration space of $X$ which can be achieved by determining its homology groups. We apply this methodology for configuration spaces of graphs.  As a conclusion, we provide families of graphs which are good candidates for studying simple effective models of anyon dynamics as well as models of non-abelian anyons on networks that are used in quantum computing. These conclusions are based on our solution of the so-called universal presentation problem for homology groups of graph configuration spaces for certain families of graphs. 
\end{abstract}

\section{Introduction}\label{chap:introduction}

The main conceptual difference in the description of classical and quantum particles is the indistinguishability of the latter. Mathematically, indistinguishability of particles can be imposed already on the level of many particle configuration space. For $n$ particles that live in a topological space $X$ this is done by considering some particular tuples of length $n$ that consist of points from $X$, i.e. elements of $X^{\times n}$. Namely, these are the unordered tuples of distinct points from $X$. In other words, we consider space $C_n(X)$ defined as follows.
\[
C_n(X):=(X^{\times n}-\Delta_n)/S_n,
\]
where $\Delta_n:=\{(x_1,\dots,x_n)\in X^{\times n}:\ \exists_{i\neq j}\ x_i=x_j\}$ and $S_n$ is the permutation group that acts on $X^{\times n}$ by permuting coordinates \cite{L-M}. It is easy to see that exchanges of particles on $X$ correspond to closed loops in $C_n(X)$ \cite{Souriau,L-M,Wilczek}. Under this identification all possible quantum statistics (QS) are classified by unitary representations of the fundamental group $\pi_1(C_n(X))$. When $X=\mathbb{R}^2$ this group is known to be the braid group and when $X=\mathbb{R}^k$, where $k\geq 3$, it is the permutation group $S_n$. QS corresponding to a one-dimensional unitary representation of $\pi_1(C_n(X))$ is called abelian whereas QS  corresponding to a higher dimensional non-abelian unitary representation is called non-abelian. Quantum statistics can be also viewed as a flat connection on the configuration space $C_n(X)$ that modifies definition of the momentum operator according to minimal coupling principle. The flatness of the connection ensures that there are no classical forces associated with it and the resulting physical phenomena are purely quantum \cite{BR97,ChruJam} (cf. Aharonov-Bohm effect \cite{AB})

The first part of this paper (sections \ref{chap:introduction}-\ref{chap:bundles}) contains a meta analysis of literature concerning connections between topology of configuration spaces and the existence of different types of quantum statistics. Because the relevant literature is rather scarce, it was a nontrivial task to make such a meta analysis and we consider it an essential step in describing our results. This is because we see the need of introducing in a systematic and concise way the framework for studying quantum statistics which is designed specifically for graphs. The most challenging part in formulating such a framework is to avoid the language of differential geometry, as graph configuration spaces are not manifolds, whereas the great majority of results in the field concerns quantum statistics on manifolds. As a result, we obtain a universal framework whose many features can be utilised for a very wide class of topological spaces. The framework relies on the following mains steps: i) defining flat bundles as quotients of the trivial bundle over the universal cover of the configuration space (theorem \ref{thm:flat-bundles}), ii) defining Chern characteristic classes solely by pullbacks of the universal bundle (subsection \ref{sec:universal-bundles}) , iii) pointing out the role of the moduli space of flat $U(n)$-bundles as an algebraic variety in $U(n)^{\times r}$, $r$ being the rank of the fundamental group of the respective configuration space (subsection \ref{sec:flat}).

We particularly emphasise the important role of nontrivial flat vector bundles that can lead to spontaneously occurring non-abelian quantum statistics. This is motivated by the fact that in $\mathbb{R}^3$ fermions and bosons correspond to two non-isomorphic vector bundles that admit flat connections.  Our approach to classification of quantum statistics is connected to classification of possible quantum kinematics, i.e. defining the space of wave functions and deriving momentum operators that satisfy the canonical commutation rules. Then our classification scheme for quantum kinematics of rank $k$ on a topological space $X$ is divided into two steps
\begin{enumerate}
\item \textbf{Topological classification of wave functions}. Classify isomorphism classes of flat hermitian vector bundles of rank $k$ over $C_n(X)$. Here we also point out that in fact physically meaningful is the classification of vector bundles with respect to the so-called stable equivalence, as nonisomorphic but stable equivalent vector bundles have identical Chern numbers. An important role is played by the reduced $K$-theory and (co)homology groups of $C_n(X)$. Calculation of those groups for various graph configuration spaces is the main problem we solve in section \ref{chap:computations}.
\item \textbf{Classification of statistical properties}. If $X$ is a manifold, for each flat hermitian vector bundle, classify the flat connections. The parallel transport around loops in $C_n(X)$ determines the statistical properties. For general paracompact $X$, this point can be phrased as classification of the $U(k)$ - representations of the corresponding braid group, i.e. the fundamental group of $C_n(X)$.
\end{enumerate}
The above two-step distinction is relevant, as on a bundle which is isomorphic to the trivial bundle, one can define such a connection that the resulting representation of the braid group is trivial. However, one cannot obtain a trivial braiding for wavefunctions which are sections of a non-trivial bundle. Therefore, the very fact that the considered wavefunction lives on a non-trivial bundle excludes the possibility of having trivial braiding. This may be relevant in situations where changing the braiding properties is possible by tuning some parameters of the considered quantum system.

General methods that we describe in the first three sections of this paper, are applied to a special class of configuration spaces of particles on graphs (treated as $1$-dimensional CW complexes). Graph configuration spaces serve as simple models for studying quantum statistical phenomena in the context of abelian anyons \cite{HKR11,HKRS} or multi-particle dynamics of fermions and bosons on networks \cite{Bolte17,Bolte13a,Bolte13b}. Quantum graphs already proved to be useful in other branches of physics such as quantum chaos and scattering theory \cite{Uzy,RSS,sirko}. Of particular relevance to this paper are explicit physical models of non-abelian anyons on networks. One of the most notable directions of studies in this area is constructing models for Majorana fermions which can be braided thanks to coupling together a number of Kitaev chains \cite{kitaev,alicea}. Such models lead to new robust proposals of architectures for topological quantum computers that are based on networks. Another general way of constructing models for anyons is via an effective Chern-Simons interaction \cite{wilczek-anyons,lee}. Such models can also be adapted to the setting of graphs as self-adjoint extensions of a certain Chern-Simons hamiltonian which is defined locally on cells of the graph configuration space \cite{BE92}.  All such physical models realise some unitary representations of a graph braid group. 

In section \ref{chap:computations} we compute homology groups of graph configuration spaces to determine a coarse grained picture of isomorphism classes of flat $U(n)$ bundles over the graph configuration space. The core result of our paper concerns solving the so-called {\it universal presentation} problem of homology groups. This problem relies on constructing 
\begin{itemize}
\item a set of universal generators which generate all homology groups of graph configuration spaces
\item a set of universal relations which generate all relations between universal generators.
\end{itemize}
From the physical point of view, this is the most relevant direction of studying the homology groups of graph configuration spaces. This is because our goal is to produce universal and general statements concerning quantum statistics on graphs without the need of performing complicated calculations for every graph which would be of interest. The only way to accomplish such a general understanding is to tackle the problem of universal presentation of homology groups. We solved the above problem for i) wheel graphs (subsection \ref{sec:wheel}), ii) graph $K_{3,3}$ (subsection \ref{sec:k33}), iii) graphs $K_{2,p}$ (subsection \ref{sec:K_2p}).  The universal generators were so-called product cycles (subsection \ref{sec:prod-cycles}) and triple tori (subsection \ref{sec:K_2p}). We also solved the universal presentation problem for the second homology group of graph configuration spaces of a large class of graphs that have at most one essential vertex of degree greater than three. Solving the universal presentation problem for the above families of graphs allows us to predict the coarse-grained structure of quantum statistics independently of the number of particles. In particular, the vanishing of torsion in the homology of wheel graphs tells us that in the asymptotic limit of bundles with a sufficiently high rank, there is just one isomorphism class of flat $U(n)$ bundles.

While solving the universal presentation problem we used not only the state-of-the-art methods that have been used previously in a different context by us and other authors, but also developed new computational tools. The already existing methods were in particular i) discrete models of graph configuration spaces by Abrams and \'{S}wi\k{a}tkowski \cite{AbramsPhD,swiatkowski}, ii) the product-cycle {\it ansatz} introduced in our previous paper concerning tree graphs \cite{MS17}, iii) the vertex blow-up method introduced by Knudsen et. al. \cite{Knudsen}, iv) discrete Morse theory for graph configuration spaces introduced by Farley and Sabalka \cite{FSbraid}. However, these methods have not been used before to tackle the universal presentation problem. New computational tools we used mainly relied on i) introducing explicit techniques for calculating homology groups appearing in the homological exact sequence stemming from the vertex blow-up, ii) demonstrating a new strategy of decomposing a given graph by a sequence of vertex blow-ups and using inductive arguments to compute the homology groups, iii) further formalising and developing the product-cycle {\it ansatz} so that it can be adapted for more general graphs than just tree graphs iv) new {\it ansatz} for non-product universal generators which are homeomorphic to triple tori, v) implementing discrete Morse theory for graph configuration spaces in a computer code. A non-trivial combination of the above methods that we have applied has proved to be very effective in tackling the universal presentation problem. Nevertheless, while formulating our general framework for studying quantum statistics we already arrive at a number of new very general corollaries. This in particular concerns the structure of abelian statistics on spaces with a finitely-generated fundamental group and pointing out the role of $K$-theory in studying non-abelian statistics of a high rank.

\subsection{Quantum kinematics on smooth manifolds}
A quantisation procedure for configuration spaces, where $X$ is a smooth manifold, known under the name of {\it Borel quantisation}, has been formulated by H.D. Doebner et. al. and formalised in a series of papers \cite{Doebner97,Doebner99,tolar,DN96,DST96}. Borel quantisation on smooth manifolds can be also viewed as a version of the geometric quantisation \cite{qiang}. The main point of Borel quantisation is the fact that the possible quantum kinematics on $C_n(X)$ are in a one-to-one correspondence with conjugacy classes of unitary representations of the fundamental group of the configuration space. We denote this fact by
\[QKin_k(C_n(X))\cong \sfrac{{\rm Hom}(\pi_1(C_n(X)),U(k))}{U(k)},\]
where $QKin_k$ are the quantum kinematics of rank $k$. i.e. kinematics, where wave functions have values in $\CC^k$ and $\pi_1$ is the fundamental group. Let us next briefly describe the main ideas standing behind the Borel quantisation which will be the starting point for building an analogous theory for indistinguishable particles on graphs.

In Borel quantisation on smooth manifolds, wave functions are viewed as square-integrable sections of hermitian vector bundles. For a fixed hermitian vector bundle, the momentum operators are constructed by assigning a self-adjoint operator $\hat p_A$ acting on sections of $E$ to a vector field $A$ that is tangent to $C_n(X)$ in the way that respects the Lie algebra structure of tangent vector fields. Namely, we require the standard commutation rule for momenta, i.e.
\begin{equation}\label{commutation-pp}
[\hat p_A,\hat p_B]=\iota \hat p_{[A,B]},\ A,B\in TC_n(X).
\end{equation}
Moreover, for the position operator that acts on sections as multiplication by smooth functions
\[\hat q_f(\sigma):=f\sigma,\ f\in C^\infty(C_n(X)),\ \sigma\in Sec(E),\]
we require the remaining standard commutation rules, i.e.
\begin{equation}\label{commutation-qp}
[\hat p_A,\hat q_f]=\hat q_{Af}.
\end{equation}
It turns out that such a requirement implies the form of the momentum operator which is well-known form the minimal coupling principle, namely 
\begin{equation}\label{momentum}
\hat p_A=\iota\nabla_A+\frac{\iota}{2}{\rm div}(A),
\end{equation}
where $\nabla_A$ is a covariant derivative in the direction of $A$ that is compatible with the hermitian structure. Moreover, commutation rule (\ref{commutation-pp}) implies that $\nabla_A$ is necessarily the covariant derivative stemming from a flat connection. The component proportional to ${\rm div}(A)$ in formula (\ref{momentum}) comes from the fact that map $A\to \hat p_A$ must be valid for an arbitrary complete vector field. Usually, one considers momentum operators coming from some specific vector fields that form an orthonormal basis of local sections of $TC_n(X)$. The divergence of such a basis sections usually vanishes and formula (\ref{momentum}) describes the standard minimal coupling principle, see example \ref{example:aharonov-bohm} below. Flat hermitian connections of rank $k$ are classified by conjugacy classes of $U(k)$ representations of $\pi_1(C_n(X))$ (see \cite{Kobayashi}). Representatives of these classes can be picked by specifying the holonomy on a fixed set of loops generating the fundamental group. In order to illustrate these concepts, consider the following example of one particle restricted to move on the plane and its scalar wave functions. 

\begin{example}\label{example:aharonov-bohm}
Quantum kinematics of rank $1$ for a single particle on the plane. The momentum has two components that are given by (\ref{momentum}) for $A=\partial_x=:\partial_1$ and $A=\partial_y=:\partial_2$.
\[\hat p_1:=\hat p_{\partial_x}=\frac{1}{\iota}\partial_x-\alpha_1,\ \hat p_2:=\hat p_{\partial_y}=\frac{1}{\iota}\partial_y-\alpha_2.\]
By a straightforward calculation, one can check that commutation rule (\ref{commutation-qp}) is satisfied.
\[\forall_\Psi\ \ [\hat p_i,\hat q_f]\Psi=-\iota\Psi\partial_if=\hat q_{-\iota\partial_if}\Psi.\]
However, commutation rule (\ref{commutation-pp}) requires $[\hat p_1,\hat p_2]=0$. The commutator reads
\[\forall_\Psi\ \ [\hat p_1,\hat p_2]\Psi=\iota\Psi(\partial_1\alpha_2-\partial_2\alpha_1).\]
Therefore, in order to satisfy the momentum commutation rule, we need $\partial_1\alpha_2-\partial_2\alpha_1=0$. This is precisely the condition for the connection form $\Gamma:=\alpha_1dx+\alpha_2dy$ to have zero curvature, i.e. $d\Gamma=0$. The plane is a contractible space, hence the problem of classifying flat connections is trivial and there are no topological effects in the quantum kinematics. However, we can make the problem nontrivial by considering the situation, where a particle is moving on a plane without a point, i.e. $X=\RR^2-\{*\}$. Then, $\pi_1(X)=\ZZ$ generated by a circle around $\{*\}$ travelled clockwise. Let us denote such a loop by $\gamma$. The parallel transport of $\Psi$ around $\gamma$ gives
\[\hat T_\gamma\Psi=e^{\iota\int_\gamma \Gamma}\Psi.\]
The phase factor $e^{\iota\int_\gamma \Gamma}$ does not depend on the choice of the circle. In order to see this, choose a different circle $\gamma'$ that contains $\gamma$. Denote by $D$ the area between the circles. We have $\partial D=\gamma'-\gamma$. Hence, by the Stokes theorem
\[0=\int_D dxdy(d\Gamma)=\int_{\partial D}\Gamma=\int_{\gamma'}\Gamma-\int_\gamma \Gamma.\]
Hence, all $U(1)$ representations of $\pi_1(X)$  are the representations that assign a phase factor $e^{i\phi}$ to a chosen non-contractible loop. Physically, these representations can be realised as the Aharonov-Bohm effect and phase $\phi$ is the magnetic flux through point $*$ that is perpendicular to the plane.
\end{example}
Let us next review two scenarios that originally appeared in the paper by Leinaas and Myrheim \cite{L-M} and that led to a topological explanation of the existence of bosons, fermions and anyons \cite{wilczek-anyons}. These are the scenarios of two particles in $\RR^2$ and $\RR^3$. In both cases, the configuration space can be parametrised by the centre of mass coordinate $R$ and the relative position $r$. In terms of the positions of particles, we have
\[R=\frac{1}{2}(x_1+x_2),\ r=x_2-x_1,\ x_i\in\RR^m.\]
Then, $C_2(\RR^m)=\{(R,r):R\in \RR^m,\ r\in\RR^m-{0}\}/S_2$. Permutation of particles results with changing $r$ to $-r$, while $R$ remains unchanged, hence 
\[C_2(\RR^m)=\RR^m\times \left(\left(\RR^m-{0}\right)/\sim\right)\cong\RR^m\times{\bf RP}^{m-1}.\]
In the above formula, ${\bf RP}^{m-1}:=S^{m-1}/\sim$ is the real projective space that is constructed by identifying pairs of opposite points of the sphere. Space $\left(\RR^m-{0}\right)/\sim$ can be deformation retracted to ${\bf RP}^{m-1}$ by contracting all vectors so that they have length $1$. In the case when $m=2$, ${\bf RP}^{1}$ is topologically a circle. Equivalently, $\left(\RR^2-{0}\right)/\sim$ is a cone. Hence, we have
\[\pi_1(C_2(\RR^2))=\ZZ,\]
so similarly to Example \ref{example:aharonov-bohm}, there is a continuum of $U(1)$-representations of the fundamental group that assign an arbitrary phase factor to the wave function when transported around a non-contractible loop. Note that a loop in the configuration space corresponds to an exchange of particles (see Fig. \ref{fig:cone}).
 \begin{figure}[ht]
\centering
\includegraphics[width=0.7\textwidth]{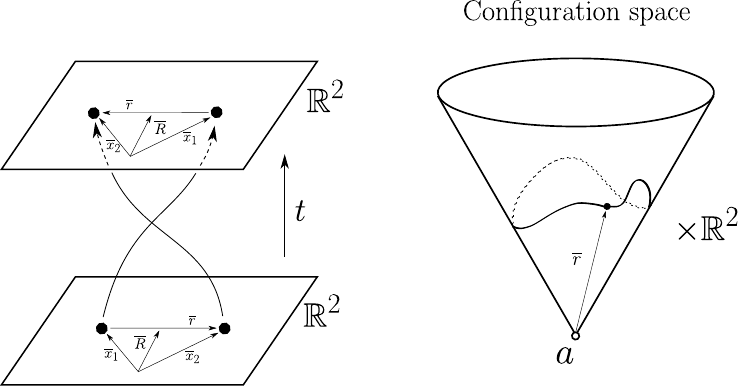}
\caption{Exchange of two particles on the plane and the resulting loop in $C_2(\RR^2)$.}
\label{fig:cone}
\end{figure}
The case of two particles moving in $\RR^3$ has an important difference when compared to the other cases analysed in this paper so far. Namely, there are two non-isomorphic hermitian vector bundles of rank $1$ that admit a flat connection. In all previous cases there was only one such vector bundle which was isomorphic to the trivial vector bundle $E_0\cong C_n(X)\times \CC$. For $m=3$, there is one more flat hermitian vector bundle which we denote by $E'$. Neglecting the $\RR^3$ - component of $C_2(\RR^3)$ which is contractible, bundles $E_0$ and $E'$ can be constructed from a trivial vector bundle on $S^2$ in the following way.
\begin{gather*}
E_0=\left(S^2\times\CC\right)/\sim,\ (r,z)\sim(-r,z)\cong {\rm RP}^2\times \CC,\\
E'=\left(S^2\times\CC\right)/\sim',\ (r,z)\sim'(-r,-z).
\end{gather*}
Intuitively, nontrivial bundle $E'$ is constructed from the trivial vector bundle on $S^2$ by twisting fibres over antipodal points. In order to determine the statistical properties corresponding to each bundle, we consider $U(1)$ representations of the fundamental group for each vector bundle. The choice of statistical properties for each vector bundle is a consequence of a general construction of flat vector bundles which we describe in more detail in section \ref{sec:flat}. The fundamental group reads
\[\pi_1(C_2(\RR^3))\cong\pi_1({\rm RP}^2)\cong\ZZ_2.\]
There are two types of loops, the contractible ones and the non-contractible ones which become contractible when composed twice (see Fig. \ref{fig:r3loops}).
 \begin{figure}[ht]
\centering
\includegraphics[width=0.5\textwidth]{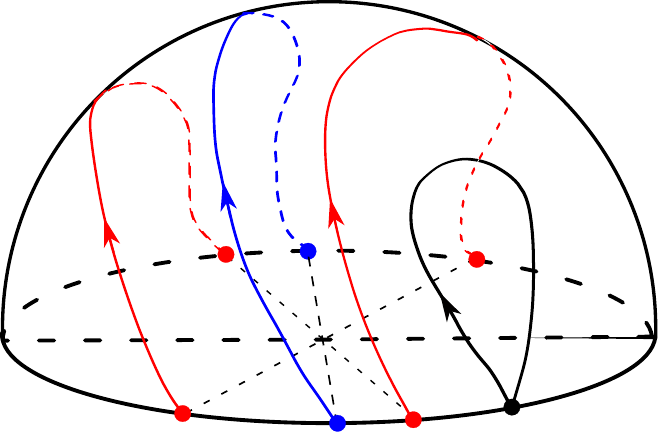}
\caption{Two types of loops in ${\rm RP}^2$ pictured as a half-sphere with the opposite points on the circumference of the base identified. Black loop and red loop are contractible, while blue loop is non-cntractible. Blue loop becomes homotopy equivalent to the red loop when crossed twice.}
\label{fig:r3loops}
\end{figure}
\noindent Bundle $E_0$ corresponds to the trivial representation of $\pi_1$, while $E'$ corresponds to the alternating representation that acts with multiplication by a phase factor $e^{i\pi}$. Consequently, the holonomy group changes the sign of the wave function from $E'$ when transported along a non-contractible loop, while the transport of a wave function from the trivial bundle results with the identity transformation. Therefore, bundle $E_0$  is called bosonic bundle, whereas bundle $E'$ is called the fermionic bundle. 

As we have seen in the above examples, there is a fundamental difference between anyons in $\RR^2$ and bosons and fermions in $\RR^3$. Anyons emerge as different flat connections on the trivial line bundle over $C_2(\RR^2)$, while fermions and bosons emerge as flat connections on non isomorphic line bundles over $C_2(\RR^3)$. As we explain in section \ref{chap:bundles}, these results generalise to arbitrary numbers of particles.

In this paper, we approach the problem of classifying complex vector bundles by computing the cohomology groups of configuration spaces over integers. Such strategy has also been used used in \cite{Doebner97} to partially classify vector bundles over configuration spaces of distinguishable particles in $\RR^m$. To this end, we combine the following methods concerning the structure of ${\rm Vect^\CC}(B)$, the set of complex vector bundles over a paracompact base space $B$.
\begin{enumerate}
\item Classification of complex vector bundles by maps $f:\ B\to Gr_k(\CC^\infty)$ and Chern classes (subsections \ref{sec:universal-bundles} and \ref{sec:flat}). 
\item Classification of vector bundles of rank $1$ by the second cohomology group (subsection \ref{sec:universal-bundles}).
\item Classification of stable equivalence classes of vector bundles using $K$-theory (subsections \ref{sec:k-theory} and \ref{sec:flat}).
\end{enumerate}
A possible source of new signatures of topology in quantum mechanics would be the existence of non-trivial vector bundles that admit a flat connection. These bundles can be detected by the Chern classes which for flat bundles belong to torsion components of $H^{2i}(B,\ZZ)$. We explain this fact and its relation with quantum statistics in section \ref{sec:flat}.

\subsection{Quantum kinematics on graphs}\label{sec:qkin}
Configuration spaces of indistinguishable particles on graphs are defined as
\[C_n(\Gamma):=(\Gamma^{\times n}-\Delta_n)/S_n,\]
where $\Delta_n=\{(x_1,\dots,x_n)\in \Gamma^{\times n}:\ \exists_{i\neq j}\ x_i=x_j\}$ and graph $\Gamma$ is regarded as a $1$-dimensional cell complex.

\begin{example}\label{example:y-conf} 
{\bf Configuration space of two particles on graph $Y$.} In $Y\times Y$ there are $9$ two-cells. Six of them are products of distinct (but not disjoint) edges of $Y$. Their intersect with $\Delta_2$ is a single point which we denote by $(2,2)$. The three remaining two-cells are of the form $e\times e$. They have the form of squares which intersect $\Delta_2$ along the diagonal. Graph $Y$ and space $C_2(Y)$ are shown on Fig. \ref{fig:conf-y}.
 \begin{figure}[H]
\centering
\includegraphics[width=0.8\textwidth]{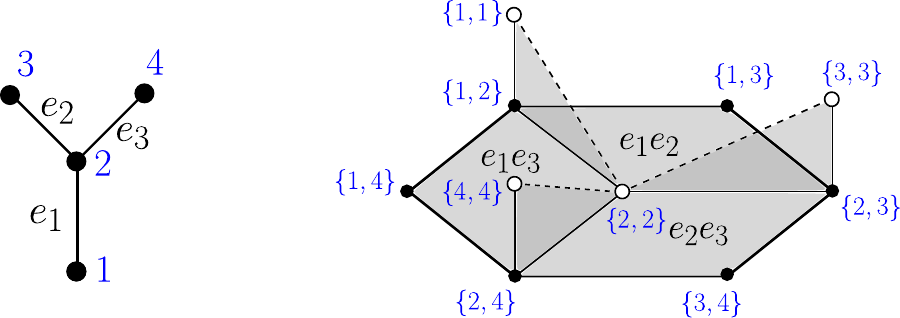}
\caption{Graph $Y$ and its two-particle configuration space. White dots and dashed lined denote the diagonal $\Delta_2$.}
\label{fig:conf-y}
\end{figure}
\end{example}
The fact that $C_n(\Gamma)$ is composed of pieces that are locally isomorphic to $\RR^n$ is the key property that allows one to define quantum kinematics as gluing the local quantum kinematics on $\RR^n$. Namely, the momentum operator on $(e_1\times e_2\times\dots\times e_n-\Delta_n)/S_n$ has $n$ components that are defined as
\[\hat p_i=-\iota \partial_i-\alpha_i,\ i=1,\dots,n.\]
We may define orthonormal coordinates and connection coefficients on each $n$-cell separately. For each $n$-cell we require that the connection $1$-form $\Gamma=\sum_{i=1}^n\alpha_i$ is closed, hence locally the connection is flat. In order to impose global flatness of the considered bundle, we require that the parallel transport does not depend on the homotopic deformations of curves that cross different pieces of $C_n(\Gamma)$. This requirement imposes conditions on the parallel transport operators along certain edges ($1$-dimensional cells) of $C_n(\Gamma)$. To see this, we need the following theorem by Abrams \cite{AbramsPhD}.
\begin{theorem}\label{lemma:dn}
Fix $n$ -- the number of particles. If $\Gamma$ has the following properties: i) each path between distinct vertices of degree not equal to $2$ passes through at least $n-1$ edges, ii) each nontrivial loop passes through at least $n+1$ edges, then $C_n(\Gamma)$ deformation retracts to a $CW$-complex $D_n(\Gamma)$ which is a subspace of $C_n(\Gamma)$ and consists of the $n$-fold products of disjoint cells of $\Gamma$.
\end{theorem}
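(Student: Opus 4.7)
The plan is to carry out the deformation retraction combinatorially by working with the natural cubical CW structure on $\Gamma^{\times n}$. First I would give $\Gamma^{\times n}$ the product cell structure whose $k$-cells are of the form $c_1\times\cdots\times c_n$ with exactly $k$ of the factors being open edges and the remaining $n-k$ being vertices. Removing $\Delta_n$ deletes precisely those product cells in which two factors are the same vertex, and passing to the $S_n$-quotient gives a CW structure on $C_n(\Gamma)$ whose cells are unordered $n$-tuples of pairwise distinct cells of $\Gamma$, possibly sharing vertex endpoints. The subcomplex $D_n(\Gamma)$ consists of those tuples that are pairwise \emph{disjoint} (no shared endpoints).

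Next I would partition the cells of $C_n(\Gamma)$ into $D_n(\Gamma)$ together with the \emph{bad} cells, i.e. tuples in which at least two factors share a vertex. The goal is to collapse all bad cells onto $D_n(\Gamma)$ by a sequence of elementary collapses $(\sigma,\tau)$, in which a bad top-cell $\tau$ is paired with a free face $\sigma$ that detects the offending coincidence. A natural rule is: locate the lowest-indexed particle involved in a collision (in some fixed linear order on vertices), and match the cell to the face obtained by pushing that particle one step along a canonically chosen incident edge. This mirrors the discrete Morse matchings used in Abrams' thesis and in \v{S}wi\k{a}tkowski's model.

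The two hypotheses enter precisely here. Condition (i) (at least $n-1$ edges between vertices of degree $\neq 2$) provides enough room along any "tube" of degree-$2$ vertices to separate all $n$ particles near an essential vertex onto distinct edges, so the collapse never runs out of pushing room near a branching point. Condition (ii) (at least $n+1$ edges on every nontrivial loop) guarantees that particles traversing a loop can be redistributed onto disjoint edges without forcing a return to the starting configuration, which is exactly what is needed to avoid closed cycles in the matching. Together they ensure that whenever we push a colliding particle one edge away, the target face is free and the moves are consistent.

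The main obstacle, and the technical heart of the proof, is to verify that this cell matching is \emph{acyclic}, in the sense that the induced modified Hasse diagram has no directed cycles. Only then do the elementary collapses assemble into an actual strong deformation retraction $C_n(\Gamma)\searrow D_n(\Gamma)$, rather than merely a chain-level quasi-isomorphism. This is where the numerical bounds $n-1$ and $n+1$ are sharp: failing either of them produces a cyclic matching around a short loop or near a congested essential vertex, obstructing the collapse. Once acyclicity is established, standard discrete Morse theory yields the desired deformation retract, and the resulting complex $D_n(\Gamma)$ is then, by construction, the subcomplex of disjoint-cell tuples.
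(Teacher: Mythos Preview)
The paper does not give its own proof of this statement; it is quoted as a theorem of Abrams and attributed to \cite{AbramsPhD}. So there is no in-paper argument to compare against, only the original source.

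That said, your sketch contains a genuine gap at the very first step. You assert that removing $\Delta_n$ ``deletes precisely those product cells in which two factors are the same vertex'' and that the quotient $C_n(\Gamma)$ then carries a CW structure whose cells are unordered $n$-tuples of pairwise \emph{distinct} cells of $\Gamma$. This is false. A product cell of the form $e\times e$ (same open edge twice) meets $\Delta_2$ along its diagonal, so after removing $\Delta_2$ and dividing by $S_2$ one obtains an open triangle---exactly the triangular flaps visible in the paper's Example~\ref{example:y-conf} and Figure~\ref{fig:conf-y}. These pieces are present in $C_n(\Gamma)$ but are \emph{not} of the form $\{c_1,\dots,c_n\}$ with the $c_i$ pairwise distinct, so your proposed cell inventory simply misses them. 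More generally, any cell with a repeated edge factor is cut by $\Delta_n$ rather than deleted, and the resulting pieces are not product cells. Consequently $C_n(\Gamma)$ does not inherit the CW structure you describe, and the discrete-Morse matching you outline has no complex on which to act.

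Abrams' original argument does not proceed via a global CW structure on $C_n(\Gamma)$ and elementary collapses. It is a direct geometric construction of the retraction: one shows that, under the subdivision hypotheses, every configuration can be pushed along edges to a configuration of pairwise disjoint closed cells, and that these local pushes can be made coherent (this is where conditions (i) and (ii) are used to guarantee enough room and to rule out obstructions around short loops). If you want a combinatorial/collapsing proof, you must first produce an honest CW model of $C_n(\Gamma)$ that accounts for the diagonal-cut cells, or else work with an intermediate complex (as in later treatments) rather than with $C_n(\Gamma)$ itself.
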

\noindent Complex $D_n(\Gamma)$ is called Abram's discrete configuration space and we elaborate on its construction in section \ref{chap:models}. For the construction of quantum kinematics, we only need the existence of the deformation retraction. This is because under this deformation, every loop in $C_n(\Gamma)$ can be deformed to a loop in $D_n(\Gamma)\subset C_n(\Gamma)$ which has a nicer structure of a $CW$-complex. Therefore, we only need to consider the parallel transport along loops in $D_n(\Gamma)$. Furthermore, every loop in $D_n(\Gamma)$ can be deformed homotopically to a loop contained in the one-skeleton of $D_n(\Gamma)$. The problem of gluing connections between different pieces of $C_n(\Gamma)$ becomes now discretised. Namely, we require that the unitary operators that describe parallel transport along the edges of $D_n(\Gamma)$ compose to the identity operator whenever the corresponding edges form a contractible loop. In other words,
\[U_{\sigma_1}U_{\sigma_2}\dots U_{\sigma_l}=\mathbbm{1}\ {\rm if}\ \sigma_1\rightarrow\sigma_2\rightarrow\dots\rightarrow\sigma_l\ {\rm is\ a\ contractible\ loop\ in}\ D_n(\Gamma).\]
By $\sigma_1\rightarrow\sigma_2\rightarrow\dots\rightarrow\sigma_l$ we denote the path constructed by travelling along $1$-cells $\sigma_i$ in $D_n(\Gamma)$. This is a closed path whenever $\sigma_l\cap\sigma_1\neq\emptyset$.

More formally, we classify all homomorphisms $\rho\in{\rm Hom}(\pi_1(C_n(\Gamma)),U(k))$ and consider the vector bundles that are induced by the action of $\rho$ on the trivial principal $U(k)$-bundle over the universal cover of $C_n(\Gamma)$. For more details, see section \ref{chap:bundles}.

Therefore, the classification quantum kinematics of rank $k$ on $C_n(\Gamma)$ is equivalent to the classification of the $U(k)$ representations of $\pi_1(D_n(\Gamma))$. The described method of classification of quantum kinematics in the case of rank $1$ becomes equivalent to the classification of discrete gauge potentials on $C_n(\Gamma)$ that were described in \cite{HKR11}.

\begin{example}\label{example:y-kinematics} 
{\bf Quantum kinematics of rank $1$ of two particles on graph $Y$.}
The two-particle discrete configuration space of graph $Y$ consists of $6$ edges that form a circle (Fig. \ref{fig:kinematics-y}). Therefore, any non-contractible loop in $C_2(Y)$ is homotopic with $D_2(Y)$.
 \begin{figure}[H]
\centering
\includegraphics[width=\textwidth]{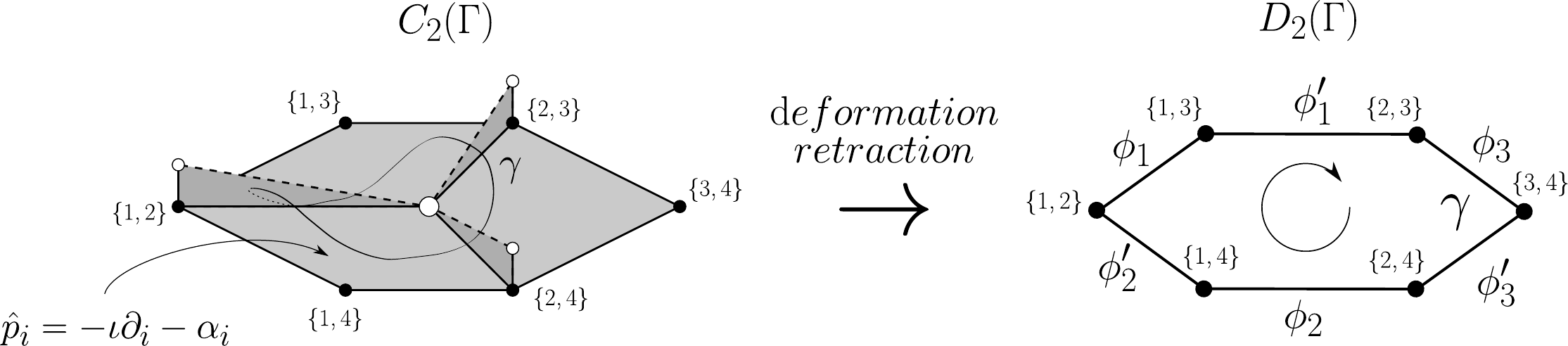}
\caption{Deformation of a loop from $C_2(Y)$ to $D_2(Y)$.}
\label{fig:kinematics-y}
\end{figure}
\noindent The classification of kinematics of rank $1$ boils down to writing down the consistency relation for $U(1)$ operators arising from the parallel transport along the edges in $D_2(Y)$. These operators are just phase factors
\[U_\sigma=e^{-i\phi_\sigma},\ \phi_\sigma=\int_\sigma \alpha_1.\]
The parallel transport of a wave function results with 
\[\hat T_\gamma\Psi=e^{-i\phi_0}\Psi,\ \phi_0=\phi_1+\phi_1'+\phi_2+\phi_2'+\phi_3+\phi_3'.\]
This is reflected in the fact that $\pi_1(C_2(Y))=\ZZ$.
\end{example}

\section{Methodology}
All topological spaces that are considered in this paper have the homotopy type of finite $CW$ complexes. This is due to the following two theorems.

\begin{theorem}
\cite{AbramsPhD,swiatkowski} The configuration space of any graph $\Gamma$ can be deformation retracted to a finite $CW$ complex which is a cube cumplex.
\end{theorem}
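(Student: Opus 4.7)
The plan is to combine a subdivision argument with Abrams' discretisation result (Theorem \ref{lemma:dn}), which is already stated in the excerpt. First, I would observe that for a \emph{fixed} number of particles $n$, any graph $\Gamma$ admits a subdivision $\Gamma'$ satisfying the two hypotheses of Theorem \ref{lemma:dn}: (i) every edge-path between two vertices of degree $\neq 2$ contains at least $n-1$ edges, and (ii) every essential closed path contains at least $n+1$ edges. This is achieved by inserting sufficiently many degree-$2$ vertices along each original edge; the number required is bounded by $n$ times the combinatorial diameter of $\Gamma$. Since replacing an edge by a path of the same length subdivided into $k$ edges is a cellular homotopy equivalence, $\Gamma \simeq \Gamma'$, and consequently there is an induced homotopy equivalence $C_n(\Gamma) \simeq C_n(\Gamma')$. (One has to check this last statement a little carefully, but it follows from the fact that subdivision of $\Gamma$ extends to a cellular subdivision of $\Gamma^{\times n}$ that respects the diagonal $\Delta_n$ and the $S_n$-action.)

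Second, once $\Gamma'$ satisfies Abrams' conditions, Theorem \ref{lemma:dn} gives a deformation retraction $C_n(\Gamma') \searrow D_n(\Gamma')$, where $D_n(\Gamma')$ is the discrete configuration space consisting of unordered $n$-tuples of pairwise disjoint cells of $\Gamma'$.

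Third, I would verify that $D_n(\Gamma')$ is naturally a finite cube complex. Every cell of $D_n(\Gamma')$ has the form $\{\sigma_1,\dots,\sigma_n\}$ where each $\sigma_i$ is either a vertex or an open edge of $\Gamma'$, and the $\ol{\sigma_i}$ are pairwise disjoint. Since an open edge is homeomorphic to $(0,1)$ and a vertex is a point, each such cell is homeomorphic to $[0,1]^k$ where $k$ is the number of $\sigma_i$'s that are edges; the boundary is obtained by replacing one of those edges by one of its endpoints, provided disjointness with the other $\sigma_j$'s is preserved. These are exactly the face-identifications of a cube complex. Finiteness of $\Gamma'$ bounds the number of cells of $D_n(\Gamma')$ by $\binom{|\Gamma'|}{n}$, so $D_n(\Gamma')$ is finite. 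Composing the subdivision homotopy equivalence with Abrams' retraction yields the desired deformation retraction of $C_n(\Gamma)$ onto a finite cube complex.

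The main obstacle is the first step: producing a subdivision $\Gamma'$ that simultaneously satisfies conditions (i) and (ii), and showing that subdivision genuinely induces a homotopy equivalence of configuration spaces. Condition (i) is easy because subdividing lengthens paths; condition (ii) requires more care for graphs with short cycles (for example, loops at a single vertex), since one must add at least $n+1$ vertices on such cycles, not merely $n-1$. The homotopy equivalence at the level of $C_n$ then follows from the fact that the homeomorphism $\Gamma \cong \Gamma'$ extends to an $S_n$-equivariant homeomorphism $\Gamma^{\times n} \cong (\Gamma')^{\times n}$ carrying $\Delta_n$ to $\Delta_n$, so in fact $C_n(\Gamma) \cong C_n(\Gamma')$ already on the nose and no homotopy argument is needed.
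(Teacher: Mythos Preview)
The paper does not actually prove this theorem; it merely states it with citations to Abrams' thesis and \'{S}wi\k{a}tkowski's paper, and later in Section~\ref{chap:models} describes the two discrete models $D_n(\Gamma)$ and $S_n(\Gamma)$ without reproving the deformation retraction. So there is no ``paper's own proof'' to compare against---your proposal is a reconstruction of the argument behind the cited references, specifically the Abrams side.

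Your outline is correct and is essentially how the Abrams result is established. One remark: the exposition wobbles between ``homotopy equivalence'' and ``homeomorphism'' for the subdivision step, and you only straighten this out in the final paragraph. Since the theorem asserts a \emph{deformation retraction}, not merely a homotopy equivalence, you should lead with the observation that subdividing $\Gamma$ is a homeomorphism of the underlying topological space, hence induces a homeomorphism $C_n(\Gamma)\cong C_n(\Gamma')$; then the deformation retraction $C_n(\Gamma')\searrow D_n(\Gamma')$ transports back to an honest deformation retraction of $C_n(\Gamma)$ onto (the homeomorphic image of) a finite cube complex. Your earlier sentence about ``composing the subdivision homotopy equivalence with Abrams' retraction'' would not by itself yield a deformation retraction, so it is the homeomorphism that carries the argument. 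Also note that the paper's citation of \'{S}wi\k{a}tkowski points to an alternative cube complex $S_n(\Gamma)$ of smaller dimension, which requires no subdivision hypothesis; your proof only covers the Abrams model, which is sufficient for the statement as written.
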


\begin{theorem}
\cite{Roth,FH01} The configuration space of $n$ particles in $\RR^k$ has the homotopy type of a finite $CW$-complex.
\end{theorem}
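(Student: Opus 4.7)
The plan is to proceed by induction on $n$, using the Fadell--Neuwirth fibration for the ordered configuration space $F_n(\RR^k) := (\RR^k)^{\times n}\setminus \Delta_n$, and then to descend to the symmetric quotient $C_n(\RR^k)=F_n(\RR^k)/S_n$.

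First I would set up the fibration $p\colon F_n(\RR^k)\to F_{n-1}(\RR^k)$ that forgets the last coordinate. It is well known to be a locally trivial fiber bundle with fiber $\RR^k\setminus\{n-1\text{ points}\}$, which is homotopy equivalent to a wedge of $n-1$ copies of $S^{k-1}$, hence a finite $CW$-complex. The base case is $F_1(\RR^k)=\RR^k\simeq \text{pt}$, and inductively I may assume $F_{n-1}(\RR^k)$ has the homotopy type of a finite $CW$-complex. I would then invoke the standard fact (going back to Stasheff and Milnor's paper on spaces having the homotopy type of $CW$-complexes, and made quantitative in Fadell--Husseini) that the total space of a locally trivial fibration over a paracompact base having the homotopy type of a finite $CW$-complex, with fiber having the homotopy type of a finite $CW$-complex, itself has the homotopy type of a finite $CW$-complex. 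This closes the induction for $F_n(\RR^k)$.

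Second, I would descend to the unordered configuration space. Because $S_n$ acts freely and properly discontinuously by diffeomorphisms on the smooth manifold $F_n(\RR^k)$, the quotient $C_n(\RR^k)$ is again a smooth manifold. To transfer the finite $CW$-structure to the quotient, I would either (a) use an $S_n$-equivariant triangulation (or equivariant Morse function) on $F_n(\RR^k)$ with finitely many orbits of cells and then take the quotient, or (b) redo the Fadell--Neuwirth argument directly at the level of unordered configurations, which still produces a fiber bundle structure after an appropriate choice of base point ordering on a trivializing cover.

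The substantive difficulty is the \emph{finiteness}. Turning an arbitrary open smooth manifold into a \emph{finite} $CW$-complex requires additional input, since not every open manifold admits such a model. For $F_n(\RR^k)$, the cleanest route is to use the Fulton--MacPherson compactification $\overline{FM}_n(\RR^k)$: it is a compact smooth manifold with corners, and the inclusion of its interior is a homotopy equivalence, immediately yielding a finite $CW$-structure on $F_n(\RR^k)$ (and, in the equivariant setting, on $C_n(\RR^k)$). As an alternative model, one can use the Salvetti complex associated with the real arrangement of diagonal hyperplanes $\{x_i=x_j\}$, which furnishes an explicit finite $CW$-complex homotopy equivalent to $F_n(\RR^k)$; combined with the symmetric-group symmetry of the arrangement, this also gives a finite equivariant model and hence a finite $CW$-model for $C_n(\RR^k)$.
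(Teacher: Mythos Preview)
The paper does not prove this theorem; it simply cites \cite{Roth,FH01} and uses the result as input. There is therefore no proof in the paper to compare your proposal against.

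On the substance of your argument: the overall plan is sound, and you correctly isolate that \emph{finiteness} is the real content. Two points. First, the ``standard fact'' that fibrations transmit finite $CW$-type is not a theorem as stated---Wall's finiteness obstruction in $\tilde K_0(\ZZ[\pi_1])$ can in principle obstruct it---so it is indeed the Fulton--MacPherson compactification that carries the argument, as you yourself recognise; for $\RR^k$ one should add that the naive compactification is not compact because $\RR^k$ is not, but after quotienting by translations and positive dilations (a free action of a contractible group, hence a homotopy equivalence for $n\ge 2$) one lands on a compact $S_n$-manifold with corners, and the equivariant finite triangulation exists and descends to $C_n(\RR^k)$. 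Second, the Salvetti complex is specific to complements of \emph{complexified real hyperplane} arrangements, so it furnishes a finite model for $F_n(\CC)=F_n(\RR^2)$ but not directly for $F_n(\RR^k)$ with $k\ge 3$; for general $k$ you would want a subspace-arrangement model (e.g.\ the Bj\"orner--Ziegler construction) rather than Salvetti.
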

Using the structure of a $CW$-complex makes some computational problems more tractable. This is especially useful, while computing the homology groups of graph configuration spaces, because the corresponding $CW$-complexes have a simple, explicit form.

One of the central notions in the description of quantum statistics is the notion of the fundamental group. Importantly, the fundamental group of a finite $CW$ complex is finitely generated \cite{delaharpe}. This means that in all scenarios that are relevant in this paper, the fundamental group can be described by choosing a finite set of generators $a_1,\dots,a_r$ and considering all combinations of generators and their inverses, subject to certain relations 
\[\pi_1(X)=\langle a_1,a_2,\dots,a_r:\ W_1(a_1,\dots,a_r)=e,\dots,\ W_R(a_1,\dots,a_r)=e\rangle.\]
Relations $\{W_i\}$ have the form of words in $a_1,\dots,a_r$. The fundamental group of the $n$-particle configuration space of some topological space $X$ will be referred to as the $n$-strand braid group of $X$ and denoted by $Br_n(X)$. Notably, there is a wide variety of braid groups when the underlying topological space $X$ is changed. Let us next briefly review some of the flag examples.
\begin{enumerate}
\item The $n$-strand braid group of $\RR^3$ is the permutation group, $Br_n(\RR^3)=S_n$.
\item The $n$-strand braid group of $\RR^2$ is often simply called {\it braid group} and denoted by $Br_n$. It has $n-1$ generators denoted by $\sigma_1,\dots,\sigma_{n-1}$. One can illustrate the generators by arranging particles on a line. In such a setting, $\sigma_i$ corresponds to exchanging particles $i$ and $i+1$ in a clockwise manner. By composing such exchanges, one arrives at the following presentation of $Br_n(\RR^2)$
\begin{gather*}
Br_n(\RR^2)=\langle\sigma_1,\dots,\sigma_{n-1}:\ \sigma_i\sigma_{i+1}\sigma_i=\sigma_{i+1}\sigma_i\sigma_{i+1}\ {\rm for\ }i=1,\dots,n-2,\\ 
\sigma_i\sigma_j=\sigma_j\sigma_i{\rm\ for\ }|i-j|\geq 2\rangle.
\end{gather*}
\item The $n$-strand braid group of a sphere $S^2$ has the same set of generators and relations as $Br_n(\RR^2)$, but with one additional relation: $\sigma_1\sigma_2\dots\sigma_{n-1}\sigma_{n-1}\dots\sigma_2\sigma_1=e$.
\item The $n$-strand braid group of a torus $T^2$. Group $Br_n(T^2)$ is generated by i) generators $\sigma_1,\dots,\sigma_{n-1}$ where the relations are the same as in the case of $\RR^2$ and ii) generators $\tau_i,\ \rho_i$, $i=1,\dots,n$ that transport particle $i$ around one of the two fundamental loops on $T^2$ respectively. As the full set of relations defining $Br_n(T^2)$ is quite long, we refer the reader to \cite{torus-statistics}.
\item Fundamental groups of $n$-particle configuration spaces of graphs, also called graph braid groups \cite{FSbraid,FSpresentations}. The study of integral homology of graph braid groups is a central point of this paper. 
\end{enumerate}
Graph configuration spaces and $C_n(\RR^2)$ are Eilenberg-MacLane spaces of type $K(G,1)$, i.e. the fundamental group is their only non-trivial homotopy group. Such spaces are also called aspherical. In the following example we aim to provide some intuitive understanding of complications and difficulties that are met while dealing with graph braid groups.
\begin{example}[Braid groups for two or three particles on $\Theta$-graphs]
Consider graph $\Theta$ that consists of two vertices and three parallel edges that connect the vertices. As we show schematically in Fig. \ref{theta-braid}, group $Br_2(\Gamma_\Theta)$ is a free group that has three generators, $Br_2(\Gamma_\Theta)=\langle \alpha_D,\alpha_U,\gamma_L\rangle$. Generators $\alpha_U$ and $\alpha_D$ correspond to a single particle travelling around a simple cycle in $\Gamma_\Theta$ while generator $\gamma_L$ denotes a pair of particles exchanging on the left junction. Clearly, it is possible to have an analogous exchange on the right junction, $\gamma_R$. Such an exchange can be expressed by the above generators as 
\begin{equation}\label{theta-rel}
\gamma_R\sim \alpha_D\alpha_U\gamma_L^{-1}\left(\alpha_U\alpha_D\right)^{-1}
\end{equation}
 \begin{figure}[H]
\centering
\includegraphics[width=0.6\textwidth]{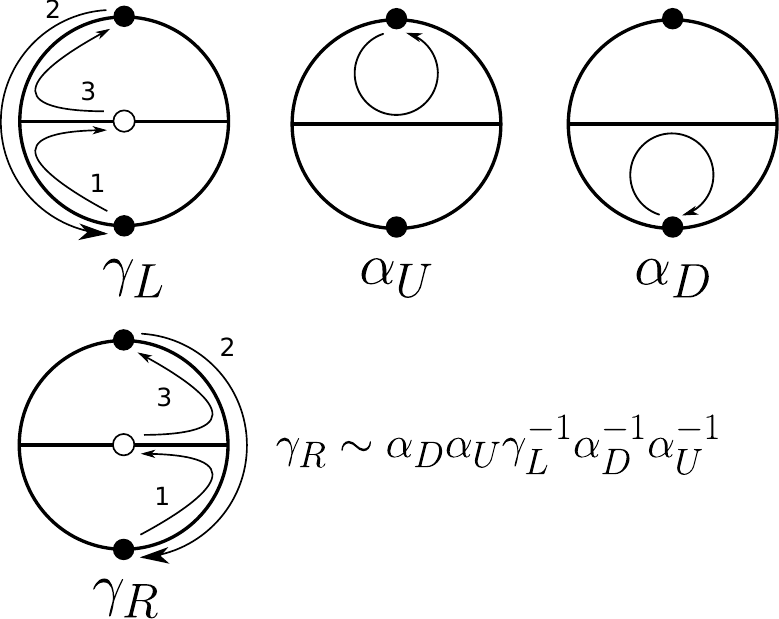}
\caption{Group $Br_2(\Gamma_\Theta)$ is a free group with three generators: $\alpha_U,\alpha_D,\gamma_L$.}
\label{theta-braid}
\end{figure}
A physical model for a $U(2)$ representation of $Br_2(\Gamma_\Theta)$ can be constructed using general theory of exchanging Majorana fermions on networks of quantum wires presented in \cite{alicea}. Here we only briefly sketch the main ideas of this construction. The role of particles is played by two Majorana fermions placed on the spots of black dots from Fig. \ref{theta-braid}. The two fermions are at the endpoints of the so-called topological region in a network of superconducting quantum wires. Majorana fermions are braided by adiabatically changing physical parameters of the quantum wire. 

An example of a graph whose braid group has a more complicated structure is graph $\Theta_4$ which has four parallel edges that connect two vertices. Space $C_3(\Gamma_{\Theta_4})$ has the homotopy type of a closed two-dimensional surface of genus $3$ \cite{KoPark}. Hence, the corresponding graph braid group has six generators subject to one relation 
\[Br_3(\Gamma_{\Theta_4})=\Big{\langle}\alpha_1,\gamma_1,\dots,\alpha_3,\gamma_3:\ \prod_{i=1}^3\alpha_i\gamma_i\alpha_i^{-1}\gamma_i^{-1}=e\Big{\rangle}.\]
\end{example}

In this paper we focus on calculating cellular homology of graph configuration spaces. It is done by assigning to $C_n(\Gamma)$ a finite chain complex $\mk{C}$ in the way which is described in section \ref{chap:models}. Homology groups of complex $\mk{C}$ are finitely generated abelian groups, i.e. have the following form
\[H_d(\mk{C},\ZZ)=\ZZ^K\oplus\bigoplus_{i=1}^L\ZZ_{p_i},\]
where $K, L\in \NN$, and $\{p_i\}_{i=1}^L$ are natural numbers such that $p_i$ divides $p_{i+1}$ for all $i$. 
\noindent Number $K$ is called the rank of $H_d(\mk{C},\ZZ)$, and is equal to the $d$th Betti number of complex $X$.
\[K={\rm rk} (H_d(\mk{C},\ZZ))=\beta_d(X).\]
The cyclic part of $H_d(\mk{C},\ZZ)$ is called the torsion part and denoted by $T(H_d(\mk{C},\ZZ))$ or $T_d(\mk{C},\ZZ)$. An important theorem that we will often use reads \cite{HatcherAT}:

\begin{theorem}
If $X$ has the homotopy type of a finite $CW$ complex, then ranks of $H^k(X,\ZZ)$ and $H_k(X,\ZZ)$ are equal and the torsion of $H^k(X,\ZZ)$ is equal to the torsion of $H_{k-1}(X,\ZZ)$.
\end{theorem}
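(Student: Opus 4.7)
The plan is to deduce the statement from the Universal Coefficient Theorem (UCT) for cohomology, which is the standard tool linking integral cohomology to integral homology. Since all spaces considered in the paper have the homotopy type of a finite $CW$ complex, the groups $H_k(X,\ZZ)$ are finitely generated abelian, so the structure theorem $H_k(X,\ZZ)\cong \ZZ^{r_k}\oplus T_k$ (with $T_k$ a finite torsion subgroup) applies and makes the $\mathrm{Hom}$ and $\mathrm{Ext}$ computations entirely algebraic.

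First I would recall (or cite) the UCT short exact sequence
\[
0\longrightarrow \mathrm{Ext}^1\bigl(H_{k-1}(X,\ZZ),\ZZ\bigr)\longrightarrow H^k(X,\ZZ)\longrightarrow \mathrm{Hom}\bigl(H_k(X,\ZZ),\ZZ\bigr)\longrightarrow 0,
\]
and note that this sequence splits, albeit non-naturally, giving
\[
H^k(X,\ZZ)\;\cong\;\mathrm{Hom}\bigl(H_k(X,\ZZ),\ZZ\bigr)\;\oplus\;\mathrm{Ext}^1\bigl(H_{k-1}(X,\ZZ),\ZZ\bigr).
\]

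Next I would evaluate the two summands using the finitely generated structure. For $A=\ZZ^{r}\oplus\bigoplus_i \ZZ_{p_i}$, one has $\mathrm{Hom}(A,\ZZ)\cong \ZZ^{r}$, because any homomorphism must kill torsion, and $\mathrm{Ext}^1(A,\ZZ)\cong \bigoplus_i \ZZ_{p_i}$, which is seen from the standard resolution $0\to\ZZ\xrightarrow{\cdot p_i}\ZZ\to \ZZ_{p_i}\to 0$ applied summand by summand. Substituting $A=H_k(X,\ZZ)$ in the first factor and $A=H_{k-1}(X,\ZZ)$ in the second yields
\[
H^k(X,\ZZ)\;\cong\;\ZZ^{\,\mathrm{rk}\,H_k(X,\ZZ)}\;\oplus\;T\bigl(H_{k-1}(X,\ZZ)\bigr),
\]
from which both claims, equality of ranks and equality of torsion subgroups with the stated shift of one in degree, follow immediately.

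The only conceptual step worth flagging is the splitting of the UCT sequence, which requires a justification (for instance via the freeness of $\mathrm{Hom}(H_k,\ZZ)$, making the short exact sequence split by lifting generators); beyond that, the argument is purely formal. There is no genuine obstacle, since the finite generation of homology is already supplied by the $CW$ hypothesis stated earlier in the section.
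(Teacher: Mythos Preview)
Your argument via the Universal Coefficient Theorem is correct and is exactly the standard route; the paper itself does not supply a proof of this statement, treating it as a known result (and later invokes the universal coefficient theorem, citing Spanier, for precisely this torsion shift). So your proposal is in complete agreement with what the paper has in mind.
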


\section{Vector bundles and their classification}\label{chap:bundles}

The main motivation for studying (co)homology groups of configuration spaces comes from the fact that they give information about the isomorphism classes of vector bundles over configuration spaces. In the following section, we review the main strategies of classifying vector bundles and make the role of homology groups more precise. Throughout, we do not assume that the configuration space is a differentiable manifold, as the configuration spaces of graphs are not differentiable manifolds. We only assume that $C_n(X)$ has the homotopy type of a finite $CW$-complex. This means that $C_n(X)$ can be deformation retracted to a finite $CW$-complex. As we explain in section \ref{chap:models}, configuration spaces of graphs are such spaces. The lack of differentiable structure means that the flat vector bundles have to be defined without referring the notion of a connection and all the methods that are used have to be purely algebraic. We provide such an algebraic definition of flat bundles in section \ref{sec:flat}.

In this paper, we consider only complex vector bundles $\pi:\  E\to B$, where $E$ is a total space and $B$ is the base. Two vector bundles are isomorphic iff there exists a homeomorphism between their total spaces which preserves the fibres. If two vector bundles belong to different isomorphism classes, there is no continuous function which transforms the total spaces to each other, while preserving the fibres. Hence, the wave functions stemming from sections of such bundles must describe particles with different topological properties. The classification of vector bundles is the task of classifying isomorphism classes of vector bundles. The set of isomorphism classes of vector bundles of rank $k$ will be denoted by $\mc{E}_k^\KK(B)$. 

%

Before we proceed to the specific methods of classification of vector bundles, we introduce an equivalent way of describing vector bundles which involves {\it principal bundles} (principal $G$-bundles). A principal $G$-bundle $\xi:\ P\to B$ is a generalisation of the concept of vector bundle, where the total space is equipped with a free action of group $G$ \footnote{The action of $G$ on $P$ can be left or right. In this work we pick up the convention of right action. This means that $g(h(p))=(gh)(p)$ for $g,h\in G$, $p\in P$. Group action is free iff for all $g\in G$ and $p\in P$, $gp\neq p$.} and the base space has the structure of the orbit space $B\cong P/G$. Fibre $\pi^{-1}(p)$ is isomorphic to $G$ is the sense that map $\pi:\ P\to B$ is $G$-invariant, i.e. $\pi(ge)=\pi(e)$. Moreover, all relevant morphisms are required to be $G$-equivariant. The set of isomorphism classes of principal $G$-bundles over base space $B$ will be denoted by 
$\mc{P}_G(B)$.

While interpreting sections of vector bundles as wave functions, we need the notion of a hermitian product on $E$. This means that we consider hermitian vector bundles, i.e. bundles with hermitian product $\langle\cdotp,\cdotp\rangle_p$ on fibres $\pi^{-1}(p),\ p\in B$ that depends on the base point and varies between the fibres in a continuous way.  Choosing sets of unitary frames, we obtain a correspondence between hermitian vector bundles and principal $U(k)$-bundles. If the base space is paracompact, any complex vector bundle can be given a hermitian metric \cite{Milnor}. Using the fact that principal $U(k)$-bundles corresponding to different choices of the hermitian structure are isomorphic \cite{Milnor}, we have the following bijection
\[\mc{P}_{U(k)}(B)\cong \mc{E}_k^\CC(B).\]
From now on, we will focus only on the problem of classification of principal $U(k)$-bundles.

\subsection{Universal bundles and Chern classes}\label{sec:universal-bundles}
Recall that all vector bundles of rank $k$ over a paracompact topological space can be obtained from a vector bundle which is universal for all base spaces. This is done in the following way. Any continuous map $f:\ B'\to B$ between base spaces induces a pullback map of vector bundles over $B$ to vector bundles over $B'$. The pullback bundle is defined as $f^*E=\{(p,e)\in B'\times E:\ f(p)=\pi(e)\}$. Similarly, one defines the pullback of principal $G$-bundles. For a fixed principal $G$-bundle $\xi:\ P\to B$, the pullback map induces a map from $[A,B]$, i.e. from the space of homotopy classes of continuous maps from $A$ to $B$, to the set of isomorphism classes of principal $G$-bundles over $A$ by $f\mapsto f^* \xi$. A space $B$ for which such a map is bijective regardless the choice of space $A$, is called a {\it classifying space} for $G$ and is denoted by $BG$. If this is the case, bundle $\xi$ is called a {\it universal bundle}. For principal $U(k)$-bundles, the classifying space is the infinite Grassmannian \cite{Milnor}
\[BU(k)=Gr_k(\CC^\infty),\]
and the corresponding universal bundle is denoted by $\gamma^k_\CC$. Therefore, any principal $U(k)$-bundle over a paracompact Hausdorff space $B$ can be written as $f^*(\gamma^k_\CC)$ for $f:\ B\to Gr_k(\CC^\infty)$. The isomorphism class of $f^*(\gamma^k_\CC)$ is determined uniquely by the homotopy class of $f$ and vice versa.  However, the classification of such homotopy classes of maps, as well as differentiating between different classes are difficult tasks. A more computable criterion for comparing isomorphism classes of vector bundles are invariants called  Chern characteristic classes. Let us next briefly introduce this notion. A characteristic class is a map that assigns to each principal $G$-bundle $\xi:\ P\to B$ an element of the cohomology ring of $B$ with some coefficients. Characteristic classes are invariant under isomorphisms of principal bundles, and those that  describe principal $U(k)$-bundles have values in $H^*(B,\ZZ)$. Such characteristic classes are called integral Chern classes. They are evaluated as follows. Let $a\in H^q(BU(k),\ZZ)$. We assign to this element a characteristic class $c_a$ which is defined defined by its values on an arbitrary principal bundle $\xi:\ P\to B$. By the classification theorem, we have $\xi=f^*_\xi(\gamma_\CC^k)$ for some continuous map $f_\xi:\ B\to BU(k)$. Hence, $c_a$ is evaluated as $c_a(\xi):=f^*_\xi(a)$, where $f^*_\xi:\ H^q(BU(k),\ZZ)\to H^q(B,\ZZ)$ is the pullback of cohomology rings via map $f_\xi$. Map $f^*_\xi$ is often called the {\it characteristic homomorphism}. It turns out that the only nonzero Chern classes are of even degree.
  
Chern classes are especially useful in classifying line bundles, as the set of homotopy classes of maps $[B,BU(1)]$ is in a bijective correspondence with $H^2(B,\ZZ)$. Hence, we arrive at the first direct application of the knowledge of cohomology ring of space $B$, namely
\[\mc{E}_1^\CC(B)\cong H^2(B,\ZZ).\]
More applications of Chern classes and cohomology ring $H^*(B,\ZZ)$ follow in the remaining parts of this section. In particular, they appear in $K$-theory and while studying characteristic classes of flat vector bundles.

\subsection{Reduced $K$-theory}\label{sec:k-theory}

We start with recalling the definition  of  {\it stable} equivalence of vector bundles. 
\begin{definition}\label{defin:stable-equivalence}
Vector bundles $\xi$ and $\xi'$ are stably equivalent $\xi\sim_{s}\xi'$ iff
\[\exists_{k_1,k_2\in\ZZ}\ [\xi\oplus\tau_{k_1}]=[\xi'\oplus\tau_{k_2}].\]
\end{definition}
\noindent The set of stable equivalence classes of vector bundles over a compact Hausdorff space has the structure of an abelian group which is called the reduced Grothendieck group $\tilde K(B)$. If the base space has the homotopy type of a finite $CW$-complex, group $\tilde K(B)$ fully describes isomorphism classes of vector bundles that have a sufficiently high rank \cite{Husemoller}. This concerns vector bundles, whose rank is in the {\it stable range}, i.e. is greater than or equal to
\[k_s:=\left\lceil\frac{1}{2}\dim B\right\rceil,\]
where $\lceil x \rceil$ denotes the smallest integer that is greater than or equal to $x$. The set of stable equivalence classes of $Vect^\CC(B)$ is equal to $\mc{E}_{k_s}^\CC(B)$. Moreover, $\mc{E}_k^\CC(B)$ is the same for all $k\geq k_s$ and equal to $\mc{E}_{k_s}^\CC(B)$. Therefore, 
\[\mc{E}_k^\CC(B)\cong \tilde K(B)\ {\rm for}\ k\geq k_s.\]

The relation between reduced $K$-theory and cohomology is phrased via the {\it Chern character} which induces isomorphism from $\tilde{K}(B)$ to $H^*(B,\QQ)$ when $B$ has the homotopy type of a finite $CW$-complex.
%

As a consequence, the classification of vector bundles in the stable range asserts that
\[\mc{E}^\CC_k\cong \bigoplus_{i=1}H^{2i}(B,\QQ),{\rm\ for}\ k\geq \frac{1}{2}\dim B,\]
on condition that the even integral cohomology groups of $B$ are torsion-free. In the case when there is non-trivial torsion in $H^*(B,\ZZ)$, torsion of $\tilde K(B)$ is determined by the Atiyah-Hirzebruch spectral sequence \cite{AH}. However, the correspondence between torsion of even cohomology and $\tilde K(B)$ is not an isomorphism. In particular, torsion in $\tilde K(B)$ can vanish, despite the existence of nonzero torsion in $H^{2i}(B,\ZZ)$. Finally, we note that stable equivalence of vector bundles is physically important in situations when one has access only to Chern classes or other topological invariants stemming from Chern classes, e.g. the Chern numbers. This is because Chern classes of stably equivalent vector bundles are equal.

\subsection{Flat bundles and quantum statistics}\label{sec:flat}
In this section, we describe the structure of the set of flat principal $G$-bundles over base space $B$. More precisely, we consider the set of pairs $(\xi,\mc{A})$, where $\xi$ is a principal $G$-bundle, and $\mc{A}$ is a connection $1$-form on $\xi$. We divide the set of such pairs into equivalence classes $[(\xi,\mc{A})]$ that consist of vector bundles isomorphic to $\xi$ and the set of flat connections that are congruent to $\mc{A}$ under the action of the gauge group. The quotient space with respect to this equivalence relation is called the {\it moduli space of flat connections} and is denoted by $\mc{M}(B,G)$. The culminating point of this section is to introduce the fundamental relation which says that $\mc{M}(B,G)$ is in a bijective correspondence with the set of conjugacy classes of homomorphisms of the fundamental group of $B$.
\begin{equation}\label{eq:flat-hom}
\mc{M}(B,G)\cong {\rm Hom}(\pi_1(B),G)/G.
\end{equation}
We use this relation to explain some key properties of quantum statistics that were sketched in the introduction of this paper.

Recall the description of the moduli space of flat connections in the case when $B$ is a smooth manifold. Having fixed a principal connection $H$ on $P$, we consider parallel transport of elements of $P$ around loops in $B$. Parallel transport around loop $\gamma\subset B$ is a morphism of fibres $\Gamma_\gamma:\ \pi^{-1}(b)\to\pi^{-1}(b)$ which assigns the end point of the horizontal lift of $\gamma$ (denote it by $\tilde\gamma$) to its initial point
\[\Gamma_\gamma:\ \tilde\gamma(0)\mapsto \tilde\gamma(1).\]
Because fibres are homogeneous spaces for the action of $G$, for every choice of the initial point $p=\tilde\gamma(0)$ there is a unique group element $g\in G$ such that $\tilde\gamma(1)=gp$. We denote this element by $\hol_p(H,\gamma)$ and call the {\it holonomy} of connection $H$ around loop $\gamma$ at point $p$. Moreover, by the $G$-equivariance of the connection, we get that 
\[\Gamma_\gamma(gp)=g\Gamma_\gamma(p),\ p\in P.\]
This means that $\hol_{gp}(H,\gamma)=g^{-1}\hol_p(H,\gamma)g$. If connection $H$ is flat, the parallel transport depends only on the topology of the base space \cite{KobNomizu}, i.e. i) $\Gamma_\gamma$ depends only on the homotopy class of $\gamma$, ii) parallel transport around a contractible loop is trivial,
iii) parallel transport around two loops that have the same base point is the composition of parallel transports along the two loops $\Gamma_{\gamma_1\circ\gamma_2}=\Gamma_{\gamma_1}\circ\Gamma_{\gamma_2}$.
These facts show that if $H$ is flat, map $\pi_1(B)\ni[\gamma]\mapsto \hol_p(H,\gamma)\in G$ is a homomorphism of groups. Because holonomies at different points from the same fibre differ only by conjugation in $G$, it is not necessary to specify the choice of the initial point. Instead, we consider map 
\[\mc{S}_H:\ \pi_1(B)\ni[\gamma]\mapsto \Hol(H,\gamma)\in Conj(G),\]
where $\Hol(H,\gamma)=\{\hol_p(H,\gamma):\ p\in \pi^{-1}(\gamma(0))\}$ is a conjugacy class of group $G$. There is one more symmetry of this map that we have not discussed so far, namely the gauge symmetry. A gauge transformation is a map $f:\ P\to G$ which is $G$-equivariant, i.e. $f(gp)=g^{-1}f(p)g$. A gauge transformation induces an automorphism of $P$ which acts as $p\to f(p) p$. Consequently, transformation $f$ induces a pullback of connection forms. It can be shown that map $\mc{S}_H$ is gauge invariant \cite{KobNomizu}, i.e. depends only on the gauge equivalence class of connection $H$.

An important conclusion regarding flat bundles on spaces that do not have a differential structure comes from the second part of correspondence (\ref{eq:flat-hom}). This is the reconstruction of a flat principal bundle from a given homomorphism $\Hom(\pi_1(B),G)$. It turns out that any flat bundle over $B$ can be realised as a particular quotient bundle of the trivial bundle over the {\it universal cover} of $B$. In order to formulate the correspondence, we first introduce the notion of a covering space and a universal cover \footnote{Universal covers of graph configuration spaces have a particularly nice structure, as they have the homotopy type of a $CAT(0)$ cube complex \cite{AbramsPhD} which is contractible.}. The following theorem is also a definition of a flat principal bundle for spaces that are not differential manifolds.
\begin{theorem}\label{thm:flat-bundles}
Any flat principal $G$-bundle $P\to B$ can be constructed as the following quotient bundle of the trivial bundle over the universal cover of $B$.
\[P=(\tilde B\times G)/\pi_1(B).\]
In the above formula, group $\pi_1(B)$ acts on $\tilde B$ via deck transformations. Action on $G$ is defined by picking a homomorphism $\rho: \pi_1(B)\to G$. Then the action reads $ag:=\rho(a)g$ for $a\in \pi_1(B)$, $g\in G$.
\end{theorem}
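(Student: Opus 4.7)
The strategy is to realize both directions of the correspondence (\ref{eq:flat-hom}): extract a monodromy homomorphism $\rho$ from the given flat bundle $(P,H)$, use it to trivialize the pullback of $P$ to the universal cover $\tilde B$, and then recover the original bundle as the quotient under the diagonal $\pi_1(B)$-action. Concretely, fix a basepoint $b_0\in B$ and a frame $p_0\in\pi^{-1}(b_0)$, and set $\rho([\gamma]):=\hol_{p_0}(H,\gamma)$. Homotopy invariance of parallel transport (property (i) of flat connections, recalled just before the theorem) makes $\rho$ well-defined on $\pi_1(B,b_0)$, and composition of parallel transports (property (iii)) makes it a homomorphism into $G$. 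Changing $p_0$ within the fibre conjugates $\rho$, so its conjugacy class is canonical.

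Let $q:\tilde B\to B$ be the universal covering projection and consider the pullback bundle $q^*P$ with the pulled-back connection $q^*H$, which is again flat. Every loop in $\tilde B$ is contractible, so by property (ii) the holonomy of $q^*H$ is trivial. Fix $\tilde b_0\in q^{-1}(b_0)$ and $\tilde p_0\in (q^*P)_{\tilde b_0}$ lifting $p_0$. Parallel transport of $\tilde p_0$ along any path from $\tilde b_0$ to $\tilde b\in\tilde B$ yields a horizontal lift whose endpoint depends only on $\tilde b$ (simple connectedness plus vanishing holonomy), producing a global horizontal section. Extending by the free $G$-action gives a $G$-equivariant bundle isomorphism $\Phi:q^*P\cong\tilde B\times G$.

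It remains to compare the two natural $\pi_1(B)$-actions. On $q^*P$ there is a canonical lift of the deck action because $q\circ\tau_a=q$ for every deck transformation $\tau_a$; on $\tilde B\times G$ we declare $a\cdot(\tilde b,g)=(\tau_a(\tilde b),\rho(a)g)$. To see that $\Phi$ intertwines them, observe that the deck image of $\tilde p_0$ lies over $\tau_a(\tilde b_0)$, and any path from $\tilde b_0$ to $\tau_a(\tilde b_0)$ projects to a loop in $B$ representing $a$; parallel-transporting $\tilde p_0$ along this path multiplies the $G$-component of $\Phi$ by $\hol_{p_0}(H,\gamma)=\rho(a)$, which matches the claimed action. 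Passing to quotients and using that $q$ is a regular covering with deck group $\pi_1(B)$ yields $(\tilde B\times G)/\pi_1(B)\cong(q^*P)/\pi_1(B)=P$. The main obstacle is precisely this equivariance check: the basepoint lifts must be chosen compatibly and the deck action on $q^*P$ must be identified with its effect on horizontal sections. For base spaces $B$ that are not smooth manifolds, the footnote's local-systems viewpoint replaces parallel transport by a functor from the fundamental groupoid to $G$-sets, which supplies $\rho$ directly and makes the statement essentially definitional.
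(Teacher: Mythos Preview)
Your argument is correct and is the standard proof of this classical result. However, the paper does not actually prove Theorem~\ref{thm:flat-bundles}: it is stated without proof as a well-known fact, and the sentence immediately preceding it (``The following theorem is also a definition of a flat principal bundle for spaces that are not differential manifolds'') indicates that the authors are treating the quotient description as the \emph{definition} of flatness in the non-smooth setting rather than as something to be derived. So there is nothing in the paper to compare your proof against; you have supplied an argument where the paper simply cites the result.

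That said, your proof is fine in the smooth case, and your final remark about the local-systems/fundamental-groupoid viewpoint is exactly the right way to interpret the theorem when $B$ is merely a CW-complex---which is the generality the paper actually needs and the reason the authors chose to take the statement as definitional.
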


Summing up, in order to describe the moduli space of flat $G$-bundles, one has to classify conjugacy classes of homomorphisms $\pi_1(B)\to G$. All spaces that are considered in this paper have finitely generated fundamental group. This fact makes the classification procedure easier. Namely, one can fix a set of generators $a_1,\dots,a_r$ of $\pi_1(B)$ and represent them as group elements $g_1,\dots,g_r$. Matrices $g_1,\dots,g_r$ realise $\pi_1(B)$ in $G$ in a homomorphic way iff they satisfy the relations between the generators of $\pi_1(B)$. This way, the moduli space of flat connections can be given the structure of an algebraic variety. In other words, we consider map
\[\mc{Q}:\ G^{\times r}\to G^{\times n_R},\]
which returns the values of words describing the relations between generators of $\pi_1(B)$. Then, 
\[\mc{M}(B,G)=\mc{Q}^{-1}(e,\dots,e)/G.\]
We view $\mc{Q}^{-1}(e,\dots,e)$ as the zero locus of a set of multivariate polynomials. In general, such a zero locus has many path connected components. This reflects the topological structure of $\mc{M}(B,G)$. Namely, one can decompose the moduli space of flat connections into a number of disjoint components that are enumerated by the isomorphism classes of bundles
\[\mc{M}(B,G)=\bigsqcup_{[\xi]\in\mc{P}_G(B)} \mc{M}_{[\xi]}(B,G).\]
$\mc{M}_{[\xi]}(B,G)$ is the space of flat connections on principal bundles from the isomorphism class $[\xi]$ modulo the gauge group. The following fact gives a necessary condition for two flat structures to be non-isomorphic.
\begin{fact}\label{thm:flat-topology}
Two points in $\mc{M}(B,G)$ that correspond to two non-isomorphic flat bundles, belong to different path-connected components of $\mc{M}(B,G)$.
\end{fact}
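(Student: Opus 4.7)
The plan is to show that the assignment of the isomorphism class of the underlying principal bundle to a point in $\mc{M}(B,G)$ is locally constant, so that each stratum $\mc{M}_{[\xi]}(B,G)$ is a union of path components. Via the identification (\ref{eq:flat-hom}), this reduces to the following statement about homomorphisms: if $\rho_0, \rho_1 \in \Hom(\pi_1(B),G)$ lie in the same path component modulo conjugation, then the flat bundles obtained from them by Theorem \ref{thm:flat-bundles} are isomorphic as principal $G$-bundles.

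First I would reduce the problem from paths in $\Hom(\pi_1(B),G)/G$ to actual paths in $\Hom(\pi_1(B),G)$. Since $\pi_1(B)$ is finitely generated, $\Hom(\pi_1(B),G)$ embeds as a closed subset of $G^r$ via a set of generators, inheriting a natural topology. For $G=U(k)$ the conjugation action is by a compact group, hence proper, so the quotient map is open and paths in $\mc{M}(B,G)$ admit local lifts to $\Hom(\pi_1(B),G)$. By compactness of $[0,1]$, any path in the moduli space is covered by finitely many such local lifts, and on overlaps two lifts differ by a continuous family of conjugations, which yields an isomorphism of the corresponding flat bundles. Thus it suffices to prove the claim for a single continuous path $t \mapsto \rho_t$.

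Given such $\rho_t$, I would construct an interpolating principal $G$-bundle over $B \times [0,1]$ by mimicking Theorem \ref{thm:flat-bundles}. Because $[0,1]$ is contractible, the universal cover of $B \times [0,1]$ is $\tilde B \times [0,1]$, with the same deck group $\pi_1(B)$. Define
\[
P := \bigl(\tilde B \times [0,1] \times G\bigr)\big/\pi_1(B), \qquad a\cdot(\tilde b, t, g) = (a\tilde b,\, t,\, \rho_t(a)\, g).
\]
The continuity of $t \mapsto \rho_t$ ensures that this action is continuous and free, so $P$ is a principal $G$-bundle over $B \times [0,1]$ whose restriction to $B \times \{t\}$ is exactly the flat bundle associated to $\rho_t$.

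Finally, I would invoke the homotopy invariance of principal $G$-bundles over paracompact spaces: the inclusions $\iota_0, \iota_1 : B \hookrightarrow B \times [0,1]$ are homotopic, so the pullbacks $\iota_0^* P$ and $\iota_1^* P$ are isomorphic, giving $P_{\rho_0} \cong P_{\rho_1}$. The main obstacle I anticipate is making the local lifting step rigorous when the conjugation action has nontrivial stabilizers so that $\Hom(\pi_1(B),G) \to \mc{M}(B,G)$ is not a fibre bundle; one must argue using the semi-algebraic structure and compactness of $U(k)$ that local continuous sections (or equivalently local lifts of paths) nevertheless exist. Everything else — the construction of $P$ and the application of homotopy invariance — is essentially formal once the lift is in hand.
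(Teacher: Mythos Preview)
Your approach is correct and matches the paper's, although the paper does not actually give a proof: it states this as a ``Fact'' and offers only the one-line justification ``Path connecting the two points in $\mc{M}(B,G)$ gives a homotopy between the corresponding flat structures,'' which is precisely your bundle-over-$B\times[0,1]$ construction followed by homotopy invariance. Regarding the lifting concern you flag, it can be sidestepped entirely: since $U(k)$ is path-connected and $\Hom(\pi_1(B),U(k))\subset U(k)^r$ is a real algebraic variety (hence locally path-connected), the open quotient map with connected fibres induces a bijection on $\pi_0$, so two classes lie in the same path component of $\mc{M}(B,G)$ if and only if they have representatives joined by an honest path in $\Hom(\pi_1(B),G)$.
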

\noindent Equivalently, if two flat structures, i.e. points in $\mc{M}(B,G)$, belong to the same path-connected component of $\mc{M}(B,G)$, then the corresponding vector bundles are isomorphic. A path connecting the two points in $\mc{M}(B,G)$ gives a homotopy between the corresponding flat structures.

\begin{example}{\bf -- The moduli space of flat $U(1)$ bundles over spaces with finitely generated fundamental group.} 
As conjugation in $U(1)$ is trivial, we have 
\[\mc{M}(B,U(1))\cong \Hom(\pi_1(B),U(1)).\]
Moreover, $\Hom(\pi_1(B),U(1))$ is the same as the space of homomorphisms from the abelianization of $\pi_1(B)$ to $U(1)$. A standard result from algebraic topology says that
\[\sfrac{\pi_1(B)}{[\pi_1(B),\pi_1(B)]}\cong H_1(B,\ZZ),\]
where $[\cdot, \cdot]$ is the group commutator. $H_1(B,\ZZ)$ as any finitely generated abelian group decomposes as the sum of a free component and a cyclic (torsion) part
\[H_1(B,\ZZ)=\ZZ^{p}\oplus\bigoplus_{i=1}^q\ZZ_{p_i}.\]
Therefore, we can generate $H_1(B,\ZZ)$ as
\[H_1(B,\ZZ)=\langle a_1,\dots,a_p,b_1,\dots,b_q:\ b_i^{p_i}=e\rangle.\]
We represent $a_i$ as $e^{\iota\phi_i},\ \phi_i\in[0,2\pi[$ and the cyclic generators as roots of unity $e^{\iota 2k_i\pi/p_i}$, where $k_i=0,1,2,\dots,p_i-1$. This way, we get $\prod_{i=1}^qp_i$ connected components in the space of homomorphisms $\Hom(H_1(B,\ZZ),U(1))$ that are enumerated by different choices of numbers $k_i$. Each connected component is homeomorphic to a $p$-torus, whose points correspond to phases $\phi_i$. In fact, the connected components are in a one-to-one correspondence with isomorphism classes of flat bundles. To see this, recall the fact that set of $U(1)$-bundles has the structure of a group which is isomorphic to $H^2(B,\ZZ)$. Moreover, as we explain in Remark \ref{rem:flat-torsion}, Chern classes of flat bundles are torsion. This means that flat $U(1)$-bundles form a subgroup of the group of all $U(1)$-bundles which is isomorphic to the torsion of $H^2(B,\ZZ)$. By the universal coefficient theorem \cite{spanier}, torsion of $H^2(B,\ZZ)$ is the same as torsion of $H_1(B,\ZZ)$. Note that there is exactly the same number of connected components in $\Hom(H_1(B,\ZZ),U(1))$ as the number of group elements in the torsion component of $H_1(B,\ZZ)$. In this case, fact \ref{thm:flat-topology} implies that each connected component represents one isomorphism class of flat bundles.

 Recall that for particles in $\RR^2$ and $\RR^3$, we had
\[H_1(C_n(\RR^2),\ZZ)=\ZZ,\ H_1(C_n(\RR^3),\ZZ)=\ZZ_2.\]
Hence, the moduli spaces read (see also Fig. \ref{fig:moduli-real})
\begin{gather*}
\mc{M}(C_n(\RR^2),U(1))\cong\Hom(\ZZ,U(1))\cong S^1,\\
\mc{M}(C_n(\RR^3),U(1))\cong\Hom(\ZZ_2,U(1))\cong \{*,*'\}\subset T^2.
\end{gather*}
 \begin{figure}[ht]
\centering
\includegraphics[width=0.7\textwidth]{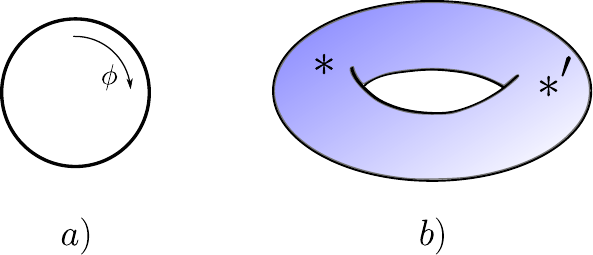}
\caption{The moduli space of flat $U(1)$ bundles a) for $n$ particles on a plane, b) $n$ particles in $\RR^3$. Homomorphisms from $\ZZ$ to $U(1)$ are parametrised by points from $S^1$ via the map $\phi\mapsto e^{\iota\phi}$. The corresponding homomorphism reads $n\mapsto e^{\iota n\phi}$. There is only one path-connected component in $\Hom(\ZZ,U(1))$ which reflects the fact that there is only one flat $U(1)$ bundle over $C_n(\RR^2)$ (the trivial one) and points form the circle parametrise different flat connections. For particles in $\RR^3$, there are two homomorphisms of $\ZZ_2=\{1,-1\}$ - the trivial one and $1\mapsto e^{2\pi\iota}$, $-1\mapsto e^{\iota\pi}$. They correspond to two isolated points on the torus $T^2=U(1)\times U(1)$. The trivial homomorphism corresponds to the bosonic bundle, while the other homomorphism corresponds to the fermionic bundle. The fundamental difference between these two types of quantum statistics is that anyons arise as different flat connections on the trivial bundle, whereas bosons and fermions arise as canonical flat connections on two non-isomorphic flat bundles.}
\label{fig:moduli-real}
\end{figure} 
\end{example}

\paragraph*{Characteristic classes of flat bundles}\ From this point, we can move away from considering connections and use the wider definition of flat $G$-bundles which makes sense for bundles over spaces that have a universal covering space. As stated in theorem \ref{thm:flat-bundles}, such flat bundles have the form
\[P=(\tilde B\times G)/\pi_1(B),\]
where we implicitly use a group homomorphism $\rho:\ \pi_1(B)\to G$ in the definition of the quotient. For such flat $U(n)$-bundles over connected $CW$-complexes we have the following general result about the triviality of rational Chern classes \cite{flat-chern}.
\begin{theorem}\label{thm:flat-chern}
Let $G$ be a compact Lie group, $B$ a connected $CW$-complex and $\xi: P\to B$ a flat $G$-bundle over $B$. Then, the characteristic homomorphism 
\[f_\xi^*:\ H^*(BG,\QQ)\to H^*(B,\QQ)\]
is trivial.
\end{theorem}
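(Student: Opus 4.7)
The plan is to factor the classifying map through the classifying space of the discrete group $G^\delta$, and then to apply a Chern--Weil vanishing. By Theorem \ref{thm:flat-bundles}, the flat bundle $\xi$ arises from a homomorphism $\rho\colon\pi_1(B)\to G$ via $P=(\tilde B\times G)/\pi_1(B)$. The identical construction with $G$ replaced by $G^\delta$ (the same underlying group equipped with the discrete topology) produces a principal $G^\delta$-bundle over $B$; call its classifying map $\bar f_\xi\colon B\to BG^\delta$. Since the set-theoretic identity $j\colon G^\delta\to G$ is a continuous group homomorphism (its domain being discrete), it induces $Bj\colon BG^\delta\to BG$, and the $G$-bundle associated to $\bar f_\xi$ through $Bj$ is exactly $\xi$. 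Hence, up to homotopy,
$$f_\xi \;\simeq\; Bj\circ\bar f_\xi, \qquad f_\xi^* \;=\; \bar f_\xi^*\circ(Bj)^*.$$
It therefore suffices to prove that $(Bj)^*\colon H^*(BG,\QQ)\to H^*(BG^\delta,\QQ)$ is zero in positive degrees.

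For this second step I would invoke Borel's theorem: for a compact Lie group $G$, the rational cohomology $H^*(BG,\QQ)$ is a polynomial algebra with finitely many generators of positive even degree, each of which is the Chern--Weil image of an $\mathrm{Ad}(G)$-invariant polynomial $P$ on the Lie algebra $\mathfrak g$. Pulled back along $Bj$, the universal $G$-bundle over $BG$ acquires a canonical flat structure, so any smooth realisation of this pullback admits a connection of zero curvature $\Omega\equiv 0$. The Chern--Weil representative of the pulled-back generator is then the identically vanishing form $P(\Omega)=P(0)=0$, so $(Bj)^*$ annihilates every positive-degree generator and, by multiplicativity, the whole of $H^*(BG,\QQ)$ in positive degrees.

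The main technical obstacle is to make the Chern--Weil argument rigorous in the CW setting, since neither $B$ nor $BG^\delta$ is \emph{a priori} a smooth manifold. This can be handled in one of two equivalent ways: (i) replace $B$ by a smooth model -- every finite CW complex is homotopy-equivalent to a smooth manifold, for instance a regular neighbourhood of a PL embedding into some $\RR^N$ -- so that the entire discussion reduces to the classical manifold case and is then transported back along the homotopy equivalence; or (ii) appeal directly to the classical result of Chern--Simons and Kamber--Tondeur, which asserts that $H^*(BG,\RR)\to H^*(BG^\delta,\RR)$ is zero in positive degrees for compact $G$, and then descend to $\QQ$ via the injectivity of the change-of-coefficients map $H^*(-,\QQ)\hookrightarrow H^*(-,\RR)$ on finitely generated cohomology. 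Once either route establishes the vanishing of $(Bj)^*$ in positive degrees, the factorisation from the first step yields $f_\xi^*=0$ there, completing the proof.
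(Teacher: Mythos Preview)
The paper does not give its own proof of this theorem; it is quoted from the literature with the citation \cite{flat-chern} (Kamber--Tondeur, \emph{Topology} 1967), so there is no in-paper argument to compare against. Your sketch is essentially the standard proof, and in option (ii) you end up invoking exactly the same Kamber--Tondeur result the paper cites.

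A small organisational remark: your two options are really two independent proofs, and option (i) does not need the $BG^\delta$ factorisation at all. If you replace $B$ by a homotopy-equivalent smooth manifold $M$ (open regular neighbourhood of an embedding suffices), the flat structure on $P$ transports to an honest flat connection on the pulled-back bundle over $M$, and ordinary Chern--Weil gives $f_\xi^*=0$ on $H^{>0}(BG,\RR)$ directly; the $\QQ$-statement then follows from $H^*(-,\QQ)\hookrightarrow H^*(-,\RR)$. The $BG^\delta$ factorisation is only genuinely needed if you want the universal vanishing of $(Bj)^*$, and for that your Chern--Weil paragraph is not self-contained since $BG^\delta$ is not a manifold (it is a $K(G^\delta,1)$ for an uncountable discrete group); this is precisely where one must appeal to Kamber--Tondeur or to a simplicial Chern--Weil theory. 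Either route is fine, but as written the two steps are somewhat entangled.
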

\begin{rem}\label{rem:flat-torsion}
Theorem \ref{thm:flat-chern} in particular means that if $B$ is a finite $CW$-complex, then by the universal coefficient theorem for cohomology (see e.g. \cite{spanier}), the image of the characteristic map $f_\xi^*:\ H^*(BG,\ZZ)\to H^*(B,\ZZ)$ consists only of torsion elements of $H^*(B,\ZZ)$.
\end{rem}
Specifying the above results for $U(n)$-bundles, we get that the lack of nontrivial torsion in $H^{2i}(B,\ZZ)$ has the following implications for the stable equivalence classes of flat vector bundles.
\begin{prop}\label{prop:torsion-free-flat}
Let $B$ be a finite $CW$ complex. If the integral homology groups of $B$ are torsion-free, then every flat complex vector bundle over $B$ is stably equivalent to a trivial bundle.
\end{prop}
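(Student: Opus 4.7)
The plan is to combine Remark~\ref{rem:flat-torsion} with the universal coefficient theorem and the Chern character, in order to show that the class $[\xi]\in\tilde K(B)$ of any flat complex vector bundle $\xi\to B$ vanishes.

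First I would verify that the integral cohomology $H^*(B,\ZZ)$ is torsion-free. By the universal coefficient theorem,
\[H^n(B,\ZZ)\cong\Hom(H_n(B,\ZZ),\ZZ)\oplus {\rm Ext}^1(H_{n-1}(B,\ZZ),\ZZ),\]
and since every $H_n(B,\ZZ)$ is finitely generated and free abelian by hypothesis, the ${\rm Ext}$ summand vanishes while the $\Hom$ summand is free. Hence $H^n(B,\ZZ)$ is torsion-free in every degree.

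Now Remark~\ref{rem:flat-torsion} asserts that every integral Chern class $c_i(\xi)\in H^{2i}(B,\ZZ)$ of a flat bundle lies in the torsion subgroup; together with the previous step this forces $c_i(\xi)=0$ for all $i\geq 1$. To upgrade this vanishing to stable triviality I would use the Chern character $ch:\tilde K(B)\to H^{\rm even}(B,\QQ)$. Since $ch$ becomes an isomorphism after tensoring with $\QQ$, its kernel equals the torsion subgroup of $\tilde K(B)$. Under the torsion-free hypothesis on $H^*(B,\ZZ)$, the Atiyah--Hirzebruch spectral sequence for complex $K$-theory collapses at its $E_2$ page, because its higher differentials are torsion cohomology operations of Bockstein type that vanish on torsion-free classes. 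Consequently $\tilde K(B)$ is itself torsion-free and $ch$ is injective. Since every positive-degree component of $ch(\xi)\in H^{\rm even}(B,\QQ)$ is a polynomial in the vanishing $c_i(\xi)$, we conclude $[\xi]=0$ in $\tilde K(B)$, so $\xi$ is stably equivalent to a trivial bundle.

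The main obstacle is the collapse of the Atiyah--Hirzebruch spectral sequence, which is what guarantees that $\tilde K(B)$ has no torsion; everything else is a routine combination of the universal coefficient theorem with Remark~\ref{rem:flat-torsion}. A cleaner alternative that avoids the AH spectral sequence is to invoke directly the stable-range identification $\mc{E}_k^\CC(B)\cong\bigoplus_i H^{2i}(B,\QQ)$ already recorded in Section~\ref{sec:k-theory}: under this description the class of $\xi$ is read off from its rational Chern character, and the vanishing of all integral Chern classes forces that class to be trivial.
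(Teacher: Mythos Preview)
Your argument is correct and follows essentially the same route as the paper's proof: pass from torsion-free homology to torsion-free cohomology (the paper leaves this implicit), use Remark~\ref{rem:flat-torsion} to kill the integral Chern classes, and then invoke the Chern character to conclude stable triviality. The paper's proof is terser---it simply cites the identification of $\tilde K(B)$ with even rational cohomology already stated in Section~\ref{sec:k-theory}, which is exactly the ``cleaner alternative'' you mention at the end; your detour through the Atiyah--Hirzebruch collapse is a valid way to justify that identification but is not needed once one accepts the statement from Section~\ref{sec:k-theory}.
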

\begin{proof}
If the integral cohomology of $B$ is torsion-free, then by the Chern character we get that the reduced Grothendieck group is isomorphic to the direct sum of even cohomology of $B$. Thus, if all Chern classes of a given bundle vanish, this means that this bundle represents the trivial element of the reduced Grothendieck group, i.e. is stably equivalent to a trivial bundle. 
\end{proof}
\noindent Interestingly, in the following standard examples of configuration spaces, there is torsion in cohomology.
\begin{enumerate}
\item Configuration space of $n$ particles on a plane. Space $C_n(\RR^2)$ is aspherical, i.e. is an Eilenberg-Maclane space of type $K(\pi_1,1)$, where the fundamental group is the braid group on $n$ strands $Br_n$. Cohomology ring $H^*(C_n(\RR^2),\ZZ)=H^*(Br_n,\ZZ)$ is known \cite{Arnold,Vainshtein}. Its key properties are i) {\bf finiteness} -- $H^{i}(Br_n,\ZZ)$ are cyclic groups, except $H^{0}(Br_n,\ZZ)=H^{1}(Br_n,\ZZ)=\ZZ$, ii) {\bf repetition} -- $H^{i}(Br_{2n+1},\ZZ)=H^{i}(Br_{2n},\ZZ)$, iii) {\bf stability} -- $H^{i}(Br_{n},\ZZ)=H^{i}(Br_{2i-2})$ for $n\geq 2i-2$. Description of nontrivial flat $U(n)$ bundles over $C_n(\RR^2)$ for $n> 2$ is an open problem.

\item Configuration space of $n$ particles in $\RR^3$. Much less is known about $H^*(C_n(\RR^3))$. Some computational techniques are presented in \cite{cohen-iterated-loop-spaces,bloore-homology}, but little explicit results are given. Ring $H^*(C_3(\RR^3)$ is equal to $\ZZ,0,\ZZ_2,0,\ZZ_3$ \cite{bloore-bundles} and $H^q(C_3(\RR^3))=0$ for $q>4$. However, it has been shown that there are no nontrivial flat $SU(n)$ bundles over $C_3(\RR^3)$. 

\item Configuration space of $n$ particles on a graph (a $1$-dimensional $CW$-complex $\Gamma$). Spaces $C_n(\Gamma)$ are Eilenberg-Maclane spaces of type $K(\pi_1,1)$. The calculation of their homology groups is a subject of this paper. Group $H_1(C_n(\Gamma),\ZZ)$ is known \cite{HKRS,KoPark} for an arbitrary graph. We review the structure of $H_1(C_n(\Gamma))$ in section \ref{sec:first-homology}. By the universal coefficient theorem, the torsion of $H^2(C_n(\Gamma))$ is equal to the torsion of $H_1(C_n(\Gamma))$ which is known to be equal to a number of copies of $\ZZ_2$, depending on the structure of $\Gamma$. We interpret this result as the existence of different bosonic or fermionic statistics in different parts of $\Gamma$. The existence of torsion in higher (co)homology groups of $C_n(\Gamma)$ which is different than $\ZZ_2$, is an open problem. In this paper, we compute homology groups for certain canonical families of graphs. However, the computed homology groups are either torsion-free, or have $\ZZ_2$-torsion.
\end{enumerate}
As we have seen while studying the example of anyons, the parametrisation of different path-connected components of the moduli space of flat bundles corresponds physically to changing some fields. On the other hand, while studying the example of particles in $\RR^3$, we learned that on each path-connected component of $\mc{M}(B,G)$ there may exist points that correspond to nontrivial action of the holonomy without the requirement of introducing any additional fields in the physical model. Such points are for example the isolated points of $\mc{M}(B,G)$. It is worthwhile to pursue the search of such canonical points in $\mc{M}(B,G)$, as they may lead to some new spontaneously occurring quantum statistical phenomena.

\section{Configuration spaces of graphs}\label{chap:models}
The general structure of configuration spaces of graphs has been introduced in section \ref{sec:qkin}. For computational purposes, we use discrete models of graph configuration spaces. By a discrete model we understand a $CW$-complex which is a deformation retract of $C_n(\Gamma)$. The existence of discrete models for graph configuration spaces enables us to use standard tools from algebraic topology to compute homology groups of graph configuration spaces. In particular, we use different kinds of homological exact sequences. There are two discrete models that we use. 
\begin{enumerate}
\item Abram's discrete configuration space \cite{AbramsPhD}. The Abram's deformation retract of $C_n(\Gamma)$ is denoted by $D_n(\Gamma)$. We use Abram's discrete model mainly in the first part of this paper, where we apply discrete Morse theory to the computation of homology groups of some small canonical graphs (section \ref{sec:morse}).

\item The discrete model by \'{S}wi\k{a}tkowski \cite{swiatkowski} that we denote by $S_n(\Gamma)$. We use this model in sections \ref{sec:wheel}-\ref{sec:K_2p} to compute homology groups of configuration spaces of wheel graphs and some families of complete bipartite graphs. 
\end{enumerate}
\'{S}wi\k{a}tkowski model has an advantage over Abram's model in the sense that its dimension agrees with the homological dimension of $C_n(\Gamma)$, and as such, stabilises for sufficiently large $n$. The dimension of Abram's model is equal to $n$ for sufficiently large $n$. Hence, the \'{S}wi\k{a}tkowski model is more suitable for rigorous calculations. However, sometimes it is more convenient to use Abram's model with the help of discrete Morse theory. The computational complexity of numerically calculating the homology groups of $C_n(\Gamma)$ for a generic graph is comparable in both approaches.

\paragraph*{Abrams discrete model}\label{sec:abrams}\ Let us next describe in detail the discrete configuration spaces $D_n(\Gamma)$ by Abrams. For the deformation retraction from $C_n(\Gamma)$ to $D_n(\Gamma)$ to be valid, the graph must be simple and {\it sufficiently subdivided} which means that 
\begin{itemize}
\item each path between distinct vertices of degree not equal to 2 passes through at least $n-1$ edges,
\item each nontrivial loop passes through at least $n+1$ edges. 
\end{itemize}
The discrete configuration space $D_n(\Gamma)$ is a cubic complex. The $n$-dimensional cells in $D_n(\Gamma)$ are of the following form.
\begin{eqnarray*}
\Sigma^{n}(D_n(\Gamma))=\{\{e_1,\dots,e_n\}:\ e_i\in E(\Gamma),\ e_i\cap e_j=\emptyset\ {\rm\ for\ }\ i\neq j\}.
\end{eqnarray*}
We denote cells of $D_n(\Gamma)$ by the set notation using curly brackets. Lower dimensional cells are described by sets of edges and vertices from $\Gamma$ that are mutually disjoint. A $d$-dimensional cell consists of $d$ edges and $n-d$ vertices. In other words, cells from $\Sigma^{d}(D_n(\Gamma))$ are of the form
\[\Sigma^{d}(D_n(\Gamma))=\{\sigma \subset E(\Gamma)\cup V(\Gamma):\ |\sigma|=n,\ |\sigma\cap E(\Gamma)|=d,\ \epsilon\cap \epsilon'=\emptyset\ \forall_{\epsilon,\epsilon'\in\sigma}\}.\]
In particular when there are not enough pairwise disjoint edges in the sufficiently subdivided $\Gamma$, the dimension of the discrete configuration space can be smaller than $n$.

In order to define the boundary map, we introduce a suitable order on vertices of $\Gamma$, following \cite{FSbraid,KoPark}. To this end, we choose a spanning tree $T\subset \Gamma$ and fix its planar embedding. We also fix the root $*$ of $T$ by picking a vertex of degree $1$ in $T$.  For every $v\in V(\Gamma)$ there is the unique path in $T$ that joins $v$ and $*$, called the geodesic $g_{v,*}$. For every vertex with $d(v)\geq 2$ we enumerate the edges adjacent to $v$ with numbers $0,1,\dots,d(v)-1$. The edge contained in $g_{v,*}$ has label $0$. The remaining edges are labelled increasingly, according to their clockwise order starting from edge $0$. The enumeration procedure for vertices goes in an inductive manner. The root has number $1$. If vertex $v$ has label $k$ and $d(v)=2$, the vertex adjacent to $v$ is given label $k+1$. Otherwise, if $d(v)\geq 2$, the vertex adjacent to $v$ in the lowest direction with vertices that have not been yet labelled is given label $k_{max}+1$, where $k_{max}$ is the maximal label among all of the already labelled vertices. If $d(v)=1$, we look for essential vertices in $g_{v,*}$ and go back to the closest essential vertex that contains a direction with unlabelled vertices. In other words, the vertices are labelled in the clockwise direction. This way every edge is given an initial and terminal vertex that we denote by $\iota(e)$ and $\tau(e)$ respectively. The terminal vertex is the vertex with the lower index, i.e. $\tau(e)<\iota(e)$. We can unambiguously specify an edge by calling its initial and terminal vertices, hence we denote the edges by $e_{\tau}^{\iota}$. Given a cell from $D_n(\Gamma)$
\[\sigma=\{e_1,\dots,e_d,v_1,\dots,v_{n-d}\},\]
we order the edges from $\sigma$ according to their terminal vertices, i.e. $\tau(e_1)<\tau(e_2)<\dots<\tau(e_d)$. The $i$th pair of faces from the boundary of $\sigma$ reads
\begin{eqnarray*}
\left(\partial^\iota\sigma\right)_i:=\{e_1,\dots,e_{i-1},e_{i+1},\dots,e_d,v_1,\dots,v_{n-d},\iota(e_i)\}, \\
\left(\partial^\tau\sigma\right)_i:=\{e_1,\dots,e_{i-1},e_{i+1},\dots,e_d,v_1,\dots,v_{n-d},\tau(e_i)\}.
\end{eqnarray*}
The full boundary of $\sigma$ is given by the following alternating sum of faces.
\begin{equation}\label{eq:boundary}
\partial\sigma=\sum_{i=1}^k(-1)^i\left(\left(\partial^\iota\sigma\right)_i-\left(\partial^\tau\sigma\right)_i\right).
\end{equation}
For examples, see section \ref{sec:first-homology} and section \ref{chap:computations}.

\paragraph*{\'{S}wi\k{a}tkowski discrete model}\ \ \'{S}wi\k{a}tkowski complex is denoted by $S_n(\Gamma)$. In order to define it, we regard graph $\Gamma$ as a set of edges $E$, vertices $V$ and half-edges $H$. A half-edge of $e\in E(\Gamma)$ assigned to vertex $v$, $h(v)\subset e$, is the part $e$ which is an open neighbourhood of vertex $v$. Intuitively, the half-edges are places, where the particles are allowed to `slide'. By $e(h)$ we will denote the unique edge, for which $e\cap h\neq \emptyset$. Similarly, we have vertex $v(h)$ as the vertex, for which $h$ is a neighbourhood. By $H(v)$ we will denote all half edges that are incident to vertex $v$. Chain complex $S(\Gamma)=\bigoplus_n S_n(\Gamma)$ reads
\[S(\Gamma)=\ZZ[E]\otimes\bigotimes_{v\in V}S_v,\]
where $S_v=\ZZ\langle v, h\in H(v), \emptyset\rangle$. This is a bigraded module with respect to the multiplication by $E(\Gamma)$ (a bigraded $\ZZ[E]$ module). The degrees of the components are
\[|v|=(0,1),\ |e|=(0,1),\ |h|=(1,1).\]
The boundary map reads
\[\partial v=\partial e=0,\ \partial h=e(h)-v(h).\]
The boundary map for elements of a higher degree is determined by the Eilenberg-Zilber theorem:
\[\partial(\chi\otimes\eta)=(\partial\chi)\otimes\eta+(-1)^{d}\chi\partial\eta\]
for $d$-chain $\chi$. There is a canonical basis for $S(\Gamma)$, whose elements of degree $(d,n)$ are of the form
\begin{gather}\label{cells}
h_1\dots h_dv_1\dots v_ke_1^{n_1}\dots e_l^{n_l},\ \{v_1\dots v_k\}\cap\{h_1,\dots h_d\}=\emptyset,\\\nonumber  d+k+n_1+\dots+n_l=n.
\end{gather}
The basis elements form a cube complex. In calculations we use the notion of support of a given cell or a chain.
\begin{definition}\label{def:supp}
The support of $d$-cell $c=h_1\dots h_dv_1\dots v_ke_1^{n_1}\dots e_l^{n_l}\in S_n(\Gamma)$ is the set of the corresponding edges and vertices of $\Gamma$
\[{\rm Supp}(c):=\left(\bigcup_{i=1}^d \{e(h_i),v(h_i)\}\right)\cup\{v_1,\dots, v_k\}\cup\{e_1,\dots,e_l\}\subset E(\Gamma)\cup V(\Gamma).\]
The support of a chain $b=\sum_i p_i c_i$, $p_i\in\ZZ$ is given by
\[{\rm Supp}(b):=\bigcup_{i=1}^d{\rm Supp}(c_i).\]
\end{definition}

In this paper we will also use a variation of $S(\Gamma)$ which we will call the reduced \'{S}wi\k{a}tkowski complex with respect to a subset of vertices $U\subset V(\Gamma)$ and denote by $\tilde S^U(\Gamma)$. In most cases, the reduced complexes lack a canonical basis, however they have a smaller number of generators than $S(\Gamma)$. The reduction is done by changing the generators at vertex $v$ to differences of half edges $h_{ij}:=h_i-h_j,\ h_i,h_j\in H(v)$, $\tilde S_v:=\ZZ\langle \emptyset, h_{ij}\rangle$.
\[\tilde S^U(\Gamma)=\ZZ[E]\otimes\bigotimes_{v\in V\backslash U}S_v\otimes\bigotimes_{v\in U}\tilde S_v.\]
Intuitively, this means that effectively, the particles always slide from one half-edge to another without staying at the central vertex. Both reduced and the non-reduced \'{S}wi\k{a}tkowski complexes have the same homology groups \cite{Knudsen}. From now on, the default complex we will work with is the complex which is reduced with respect to all vertices of degree one. Intuitively, this means that we do not consider redundant cells, where particles move from an edge to some vertex of valency one. Such complexes have the canonical basis which corresponds to cells of a cube complex of the form (\ref{cells}). By a slight abusion of notation, we will denote such a default reduced complex by $S(\Gamma)$. In other words, from now on
\[S(\Gamma):=\ZZ[E]\otimes\bigotimes_{v\in V: d(v)>1}S_v.\]
For examples, see figure \ref{sn-examples}.
\begin{figure}[ht]
\centering
\includegraphics[width=0.9\textwidth]{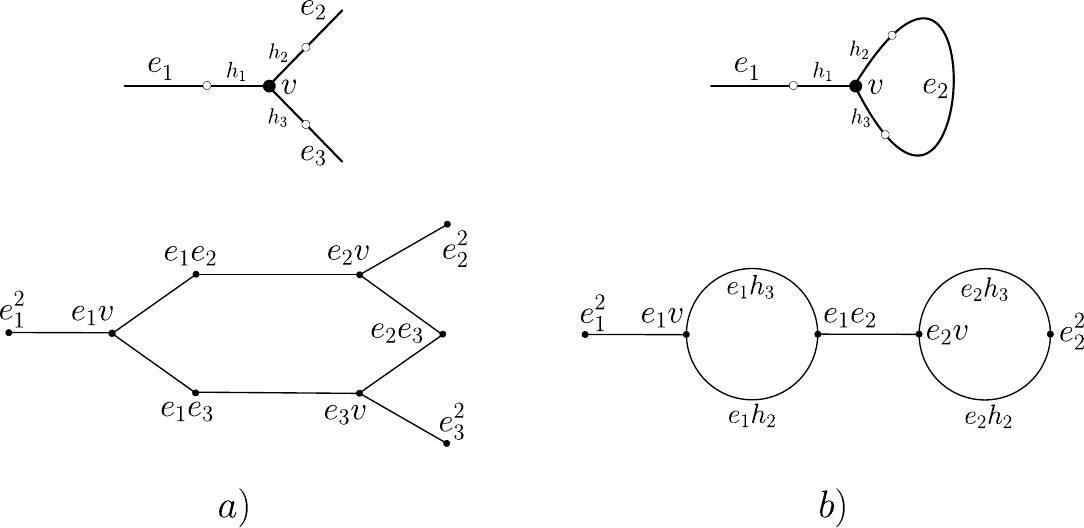}
\caption{\'{S}wi\k{a}tkowski complex of the $Y$-graph and of the lasso graph, where vertices of degree $1$ have been reduced. a) \'{S}wi\k{a}tkowski complex of $C_2(Y)$. Only vertices of $S_2(\Gamma)$ are captioned. The $Y$-cycle reads $e_1(h_2-h_3)+e_2(h_3-h_1)+e_3(h_1-h_2)$. b) \'{S}wi\k{a}tkowski complex of $C_2(\Gamma)$ for the lasso graph. Vertices and some chosen edges of $S_2(\Gamma)$ are captioned. The $O$-cycles are $e_1(h_2-h_3)$ and $e_2(h_2-h_3)$. The $Y$-cycle is their sum, hence can be written as $(e_1-e_2)(h_2-h_3)$.}
\label{sn-examples}
\end{figure}
As a direct consequence of the dimension of $S_n(\Gamma)$, we get the following fact.
\begin{fact}\label{fact:dimensions}
Let $\Gamma$ be a graph. Then, the following homology groups of $C_n(\Gamma)$ vanish.
\[H_d(C_n(\Gamma))=0{\rm\ if\ } d<n{\rm\ or\ }d>N_\Gamma,\]
where $N_\Gamma=|\{v\in V(\Gamma):\ d(v)\geq 3\}|$.
\end{fact}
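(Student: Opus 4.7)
The plan is to derive both vanishing statements directly from the cell count of the \'Swi\k{a}tkowski complex $S_n(\Gamma)$, which deformation retracts from $C_n(\Gamma)$ and therefore computes $H_*(C_n(\Gamma))$. Since $S(\Gamma) = \mathbb{Z}[E]\otimes \bigotimes_{v: d(v)>1} S_v$ is a tensor product over the vertex set, each basis cell automatically uses at most one generator at each vertex, so bounding the cellular dimension reduces to counting the vertices that can host a half-edge factor $h\in H(v)$.

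For the bound involving the particle number $n$, I would use the bigrading directly. A basis cell of bidegree $(d,n)$ has the form $h_1\cdots h_d\, v_1\cdots v_k\, e_1^{n_1}\cdots e_l^{n_l}$ with $d+k+\sum_i n_i = n$, so necessarily $d\le n$. Consequently $S_n(\Gamma)$ is concentrated in the cellular degrees $0,1,\dots,n$ and all homology outside this range must vanish. I read the statement's condition $d<n$ as this natural dimension bound coming from the tensor structure (the chain complex has no cells of cellular degree exceeding the total particle count); one cannot squeeze more than $n$ half-edge factors past the constraint $d+k+\sum_i n_i=n$.

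For the bound $d>N_\Gamma$, the idea is to pass to a smoother graph. Iteratively smooth each degree-2 vertex of $\Gamma$: if $v$ has degree $2$ with incident edges $e_1,e_2$, replace $v\cup e_1\cup e_2$ by a single edge $\tilde e$. The resulting graph $\Gamma'$ has $N_{\Gamma'}=N_\Gamma$ essential vertices and no degree-2 vertices, and the smoothing induces a homotopy equivalence $C_n(\Gamma)\simeq C_n(\Gamma')$ (so that $H_d(C_n(\Gamma))\cong H_d(C_n(\Gamma'))$). In $S_n(\Gamma')$, half-edges can only be placed at vertices of degree $\ge 3$, and since each such vertex contributes at most one half-edge factor to a basis cell, the cellular dimension of $S_n(\Gamma')$ is bounded above by $N_\Gamma$. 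Hence $H_d(C_n(\Gamma)) = H_d(C_n(\Gamma'))=0$ for $d>N_\Gamma$.

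The main obstacle will be justifying the smoothing step rigorously, since this is the only non-formal ingredient: one must verify that contracting a degree-2 vertex together with its two incident edges extends to a deformation retraction at the level of configuration spaces. This is classical (appearing implicitly in Abrams's and \'Swi\k{a}tkowski's constructions), and can be made explicit by a vector field that pushes any particle sitting on the smoothed vertex onto the interior of the merged edge while fixing all other particles; compatibility with the $S_n$-action and with the sufficient-subdivision condition needs to be checked, but no new ideas are required.
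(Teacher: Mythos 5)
Your proposal is correct and is essentially the paper's own argument: the paper offers no proof beyond the remark that the Fact is ``a direct consequence of the dimension of $S_n(\Gamma)$'', and your two counts --- the bigrading constraint $d+k+n_1+\dots+n_l=n$ forcing $d\le n$ (after reading the evident typo ``$d<n$'' as ``$d>n$''), and the one-half-edge-per-essential-vertex constraint forcing $d\le N_\Gamma$ --- are exactly what that remark leaves implicit. The only thing to add is that the ``main obstacle'' you identify is not one: erasing a degree-$2$ vertex is a homeomorphism of $\Gamma$ as a topological space, hence induces a homeomorphism (not merely a homotopy equivalence) of $C_n(\Gamma)$, no vector field or subdivision condition is needed (sufficient subdivision is an Abrams-model hypothesis, not a \'Swi\k{a}tkowski-model one), and the sole genuine caveat --- shared by the paper's statement itself --- is that a component with no essential vertex (a circle) cannot be fully smoothed and indeed violates the claimed bound, since $H_1(C_n(S^1))\cong\ZZ$ while $N_{S^1}=0$.
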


\paragraph*{Vertex blowup} \ In the following, we will explore relations on homology groups that stem from blowing up a vertex of $\Gamma$: $\Gamma\to\Gamma_v$ (Fig. \ref{blowup}). 
 \begin{figure}[H]
\centering
\includegraphics[width=0.4\textwidth]{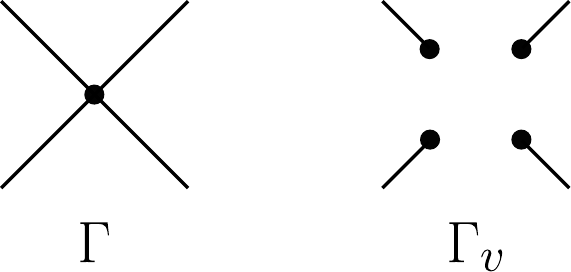}
\caption{Vertex blow up at vertex $v$ in $\Gamma$.}
\label{blowup}
\end{figure}
We borrow this nomenclature and the methodology of this subsection from \cite{Knudsen}.
We start with the reduced complex with respect to vertex $v$, $\tilde S^v(\Gamma)$. Any chain $b\in\tilde S^v(\Gamma)$ can be decomposed in a unique way by extracting the part that involves generators from $\tilde S_v$. In order to do it, we fix a half-edge $h_0\in H(v)$ and write $b$ as
\[b=b_0+\sum_{h\in H(v)\backslash h_0} (h_0-h)b_h.\]
Note that chains $b_0$ and $b_h$ belong to $S(\Gamma_v)$. We associate two chain maps to the above decomposition. The first map $\phi$ is the embedding of any chain $b_0$ from $S(\Gamma_v)$ to $\tilde S^v(\Gamma)$. Clearly, this map is injective and commutes with the boundary operator.
\begin{gather*}
\phi_n:\ S_n(\Gamma_v)\to \tilde S_n^v(\Gamma), \phi(b_0)=b_0\in \tilde S^v(\Gamma), \\
\end{gather*}
The other map $\psi$ is the projection of $b\in\tilde S^v(\Gamma)$ to its $h$-components. It assigns a number of $n-1$-particle $d-1$-chains to a $n$-particle $d$-chain in the following way
\begin{gather*}
\psi_n: \tilde S_n^v(\Gamma)\to\bigoplus_{h\in H(v)\backslash h_0}S_{n-1}(\Gamma_v),\ \psi(b)=\bigoplus_{h\in H(v)\backslash h_0} b_h.
\end{gather*}
Map $\psi$ is surjective, because any chain $b'\in S_{n-1}(\Gamma_v)$ can be obtained by $\psi$ for exmaple from chain $(h_0-h)b'\in\tilde S_n^v(\Gamma)$. In order to see that $\psi$ is a chain map, consider a cycle $c\in \tilde S_n^v(\Gamma)$. We have
\[0=\partial c=\partial c_0+\sum_{h\in H(v)\backslash h_0} \left((e(h_0)-e(h))c_h-(h_0-h)\partial c_h\right).\]
Grouping the summands that entirely belong to $S_{n-1}(\Gamma_v)$, we get
\begin{gather*}
\partial c_0+\sum_{h\in H(v)\backslash h_0} (e(h_0)-e(h))c_h=0, \\
\sum_{h\in H(v)\backslash h_0} (h_0-h)\partial c_h=0.
\end{gather*}
By the same argument, the second equation implies that $\partial c_h=0$ for all $h\in H(v)\backslash h_0$. We can write down the two maps as a short exact sequence
\begin{equation}\label{short-blowup}
0\rightarrow S_n(\Gamma_v)\xrightarrow{\phi_n}\tilde S_n^v(\Gamma)\xrightarrow{\psi_n}\bigoplus_{h\in H(v)\backslash h_0}S_{n-1}(\Gamma_v)\rightarrow0.
\end{equation}
Short exact sequence (\ref{short-blowup}) of chain maps implies the long exact sequence of homology groups
\begin{gather}\label{long-blowup}
\dots\xrightarrow{\Psi_{n,d+1}} \bigoplus_{h\in H(v)\backslash h_0}H_d\left(S_{n-1}(\Gamma_v)\right)\xrightarrow{\delta_{n,d}}H_d\left(S_n(\Gamma_v)\right)\xrightarrow{\Phi_{n,d}}H_d\left(\tilde S^v_n(\Gamma)\right)\xrightarrow{\Psi_{n,d}} \\ \nonumber \xrightarrow{\Psi_{n,d}}\bigoplus_{h\in H(v)\backslash h_0}H_{d-1}\left(S_{n-1}(\Gamma_v)\right)\xrightarrow{\delta_{n,d-1}}H_{d-1}\left(S_n(\Gamma_v)\right)\xrightarrow{\Phi_{n,d-1}}\dots,
\end{gather}
where the connecting homomorphism reads
\[\delta [b_h]=[\partial\left((h_0-h)b_h\right)]=e(h_0)[b_h]-e(h)[b_h].\]
Long exact sequence (\ref{long-blowup}) implies a collection of short exact sequences
\[0\rightarrow \coker\left(\delta_{n,d}\right)\xrightarrow{}H_d\left(\tilde S^v_n(\Gamma)\right)\xrightarrow{}\ker\left(\delta_{n,d-1}\right)\rightarrow 0.\]
Intuitively, the $\coker\left(\delta_{n,d}\right)$ identifies different distributions of free particles in $S_n(\Gamma_v)$ on the two sides of the junction $h_0-h$ and $\ker\left(\delta_{n,d-1}\right)$ is responsible for creating new cycles at vertex $v$ (for example, the $c_Y$ cycles).

\subsection{$O$-cycles and $Y$-cycles}\label{sec:first-homology}

There are some particular types of cycles that play an important role in this work. These are $O$-cycles and $Y$-cycles. We specify them for the Abram's model. The construction for $S_n(\Gamma)$ is fully analogous.

\begin{definition}\label{definition:o-cycle}
Let $O\subset \Gamma$ be a simple cycle (an embedding of $S^1$ in $\Gamma$). Choose sign coefficients $s_e\in\{-1,1\},\ e\in O$ such that $\partial \sum_{e\in O} s_e {e}=0$ in $D_1(\Gamma)$. An $O$-cycle in $D_n(\Gamma)$ is a $1$-chain of the form
\[c_O:=\sum_{e\in O}s_e\{e,v_1,\dots,v_{n-1}\},\]
where $\{v_1,\dots,v_{n-1}\}\cap O=\emptyset$ is some choice of vertices. In order to define an $O$-cycle in $S_n(\Gamma)$, note that for all $v\in V(\Gamma)\cap O$, set $H(v)\cap O$ contains exactly two half-edges. We denote these half-edges by $h_v, h_v'$, where the labels are such that $\partial \sum_{v\in V(\Gamma)\cap O}(h_v'-h_v)=0$. Then, 
\begin{gather*}c_O=\left(\sum_{v\in V(\Gamma)\cap O}(h_v'-h_v)\right)\otimes \left(\bigotimes_{w\in W}w\right)\otimes \left(\bigotimes_{e\in E(\Gamma)}e^{n_e}\right),\\ W\subset (V(\Gamma)-V(\Gamma)\cap O),\ \#W+\sum_{e\in E(\Gamma)}n_e=n-1.
\end{gather*}
\end{definition}

\begin{definition}\label{definition:Y-cycle}
Let $Y\subset \Gamma$ be a $Y$-subgraph of $\Gamma$ spanned on vertices $u_0,u_h,u_1,u_2$ such that $u_0,\ u_1,\ u_2$ are adjacent to $u_h$ and $u_0<u_h<u_1<u_2$. The $Y$-cycle in $D_2(\Gamma)$ associated to subgraph $Y$ is of the following form 
\[c_Y:=\{e_{u_h}^{u_1},u_0\}+\{e_{u_0}^{u_h},u_1\}+\{e_{u_h}^{u_2},u_1\}-\{e_{u_h}^{u_1},u_2\}-\{e_{u_0}^{u_h},u_2\}-\{e_{u_h}^{u_2},u_0\}.\]
A $Y$-cycle in $D_n(\Gamma)$ is formed by distributing the free particles outside of subgraph $Y$, i.e. 
\[c_Y^{(n)}:=\sum_{\sigma\in c_Y}s_\sigma \left(\sigma\cup \{v_1,\dots,v_{n-2}\}\right),\]
where $\{v_1,\dots,v_{n-2}\}\cap Y=\emptyset$ and $s_\sigma$ is the sign of cell $\sigma$ in cycle $c_Y$. In order to define the $Y$-cycle in $S_n(\Gamma)$, denote the half edges of subgraph $Y$ as $\{h_i\}_{i=0}^2$, where $h_i\in H(u_h)$ are such that $e(h_0)=e_{u_0}^{u_h}$, $e(h_1)=e_{u_h}^{u_1}$, $e(h_2)=e_{u_h}^{u_2}$. Then,
\[c_{Y}=e_{u_0}^{u_h}(h_2-h_3)+e_{u_h}^{u_1}(h_3-h_1)+e_{u_h}^{u_2}(h_1-h_2).\]
Cycle $c_Y^{(n)}\in S_n(\Gamma)$ is formed by multiplying $c_Y$ by a suitable polynomial in $V(\Gamma)$ and $E(\Gamma)$.
\[c_Y^{(n)}=c_Y\otimes \left(\bigotimes_{w\in W}w\right)\otimes \left(\bigotimes_{e\in E(\Gamma)}e^{n_e}\right),\ W\subset (V(\Gamma)-\{u_h\}),\ \#W+\sum_{e\in E(\Gamma)}n_e=n-2.\]
\end{definition}

\begin{figure}[H]
\centering
\includegraphics[width=0.9\textwidth]{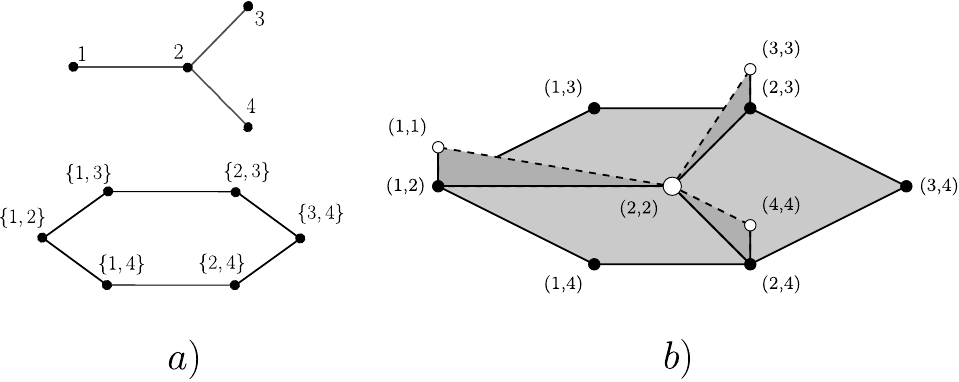}
\caption{A $Y$-graph, its configuration space (b) and its discrete configuration space $D_2(\Gamma)$ (a).}
\label{y_graph}
\end{figure}

It has been shown in \cite{HKRS} that subject to certain relations, cycles $c_O$ and $c_Y^{(n)}$ generate $H_1(D_n(\Gamma))$ (see also \cite{Knudsen} for the proof of an analogous fact for $H_1(S_n(\Gamma))$). The fundamental relation between $Y$-cycles is shown on Fig. \ref{y_relation} and Fig. \ref{rel_cycle_y}.
 \begin{figure}[H]
\centering
\includegraphics[width=0.7\textwidth]{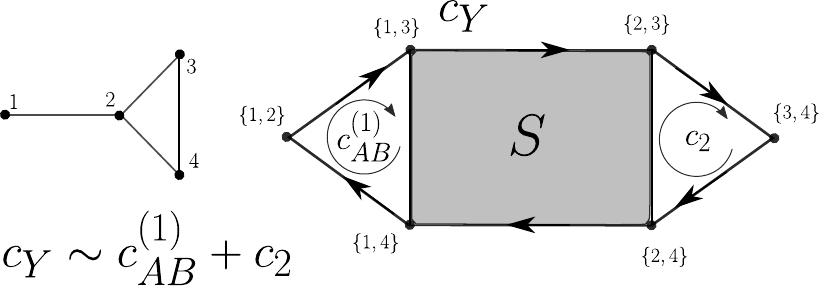}
\caption{The fundamental relation between the two-particle cycle on a $Y$-graph and the $AB$-cycle and a two-particle cycle $c_2$ in the lasso graph.}
\label{y_relation}
\end{figure} 
\noindent Cycle $c_{AB}^{(1)}$ is the cycle, where one particle goes around the cycle in the lasso graph and the other particle occupies vertex $1$.
\[c_{AB}^{(1)}=c_O\times\{1\}=\{e_2^3,1\}+\{e_3^4,1\}-\{e_2^4,1\}.\]
Cycle $c_2$ is the cycle, where two particles go around the cycle in lasso.
\[c_2=\{e_2^4,3\}-\{e_2^3,4\}-\{e_3^4,2\}.\]
It is straightforward to check that 
\begin{equation}\label{y-ab-rel}
c_{AB}^{(1)}+c_2-c_Y=\partial S,
\end{equation}
where $S=\{e_1^2,e_3^4\}$. Consider next a situation, where two disjoint $Y$-graphs share one cycle $c_O$ and their free ends are connected by a path $p_{v_1,v_2}$ which is disjoint with $c_O$ (Fig. \ref{rel_cycle_y}). In other words, consider an embedding of a graph which is isomorphic to the $\Theta$- graph\footnote{The $\Theta$ graph consists of two vertices which are connected by three edges. It can be also viewed as complete bipartite graph $K_{2,3}$.}.
 \begin{figure}[H]
\centering
\includegraphics[width=0.7\textwidth]{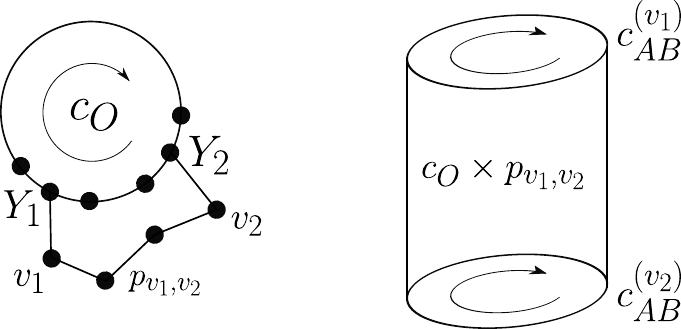}
\caption{Cycles $c_{Y_1}$ and $c_{Y_2}$ are homologically equivalent.}
\label{rel_cycle_y}
\end{figure}
\noindent Then,
\begin{gather*}
c_{AB}^{(v_1)}+c_2-c_{Y_1}=\partial S_1, \\
c_{AB}^{(v_2)}+c_2-c_{Y_2}=\partial S_2.
\end{gather*}
Subtracting both equations, we get 
\begin{equation}\label{ab-y-rel}
c_{Y_1}-c_{Y_2}=\partial(S_2-S_1)+c_{AB}^{(v_1)}-c_{AB}^{(v_2)}.
\end{equation}
But the existence of $p_{v_1,v_2}$ gives us that $c_{AB}^{(v_1)}-c_{AB}^{(v_2)}=\partial\left(c_O\times p_{v_1,v_2}\right)$. This in turn means that $c_{Y_1}$ and $c_{Y_2}$ are homologically equivalent. Relation
\begin{equation}\label{theta3-rel}
c_{Y_1}-c_{Y_2}=\partial\left(S_2-S_1+c_O\times p_{v_1,v_2}\right)
\end{equation}
will be called a $\Theta$-relation. It turns out that considering all $\Theta$-relations stemming from different $\Theta$-subgraphs and relations (\ref{ab-y-rel}) that express different distributions of particles in the $O$-cycles as differences of $Y$-cycles, one can compute the first homology group of $D_n(\Gamma)$. Let us next summarise the results concerning the structure of the first homology group of graph configuration spaces. We formulate the results assuming that the considered graphs are simple. The general form of the first homology group reads
\begin{equation}\label{eq:1st-hom}
H_1(D_n(\Gamma),\ZZ)=(\ZZ)^N\oplus (\ZZ_2)^L,
\end{equation}
where $N$ and $L$ are the numbers of copies of $\ZZ$ and $\ZZ_2$ respectively. Numbers $N$ and $L$ depend on the planarity and some combinatorial properties of the given graph \cite{HKRS,KoPark}. The $\ZZ_2$-components appear when $\Gamma$ is non-planar and have the interpretation of different fermionic/bosonic statistics that may appear locally in different parts of a given graph (see \cite{HKRS}).

\section{Calculation of homology groups of graph configuration spaces}\label{chap:computations}
This section contains the techniques that we use for computing homology groups of graph configuration spaces. We tackle this problem from the `numerical' and the `analytical' perspective. The numerical approach means using a computer code for creating the boundary matrices and then employing the standard numerical libraries for computing the kernel and the elementary divisors of given matrices. The procedures for calculating the boundary matrices of $D_n(\Gamma)$, $S_n(\Gamma)$ and the Morse complex (see section \ref{sec:morse}) were written by the authors of this paper, based on papers \cite{FSbraid,KoPark}. The analytical approach means computing the homology groups for certain families of graphs by suitably decomposing a given graph into simpler components and using various homological exact sequences. Recently in the mathematical community, there has been a growing interest in computing the homology groups of graph configuration spaces. A significant part of the recent work has been devoted to explaining certain regularity properties of the homology groups of $C_n(\Gamma)$ \cite{Ramos,RW17,Ramos2,Lut17,LC18,Knudsen-stabilisation}. 

 \subsection{Product cycles}\label{sec:prod-cycles}
 Considering simultaneous exchanges of pairs of particles on disjoint $Y$-subgraphs of $\Gamma$ and the $O$-type cycles with the remaining particles distributed on the free vertices of $\Gamma$, one can construct some generators of $H_*(D_n(\Gamma))$ or $H_*(S_n(\Gamma))$. Such cycles are products of $1$-cycles, hence are isomorphic to tori embedded in the discrete configuration space. To construct a product $d$ cycle in $D_n(\Gamma)$, we choose $Y$-subgraphs of $\Gamma$ $\{Y_i\}_{i=1}^{d_Y}$ and cycles in $\Gamma$ ($O$-subgraphs of $\Gamma$) $\{O_i\}_{i=1}^{d_O}$, where $d_Y+d_O=d$. All the chosen subgraphs must be mutually disjoint.
 \[Y_i\cap Y_j=O_i\cap O_j=\emptyset\ {\rm for}\ i\neq j,\ Y_i\cap O_j=\emptyset\ {\rm for\ all}\ i,j.\]
 Moreover, we choose vertices $\{v_1,\dots,v_{n-2d_Y-d_O}\}\subset V(\Gamma)$, so that $v_i\cap O_j=v_i\cap Y_j=\emptyset$ for all $i,j$. Product cycle on $Y_1\times\dots \times Y_{d_Y}\times O_1\times\dots \times O_{d_O}$ with the free particles distributed on $\{v_1,\dots,v_{n-2d_Y-d_O}\}$ is the following chain.
 \[c_{Y_1}\otimes\dots \otimes c_{Y_{d_Y}}\otimes c_{O_1}\otimes\dots\otimes c_{O_{d_O}}\otimes \{v_1,\dots,v_{n-2d_Y-d_O}\}.\]
In an analogous way, we form product cycles in $S_n(\Gamma)$.

We study such product cycles for configuration spaces of different graphs and describe relations between them. So far, it has been known that product cycles generate the second homology of the two particle configuration space of a simple graph \cite{BF09} and all homology groups for an arbitrary number of particles on tree graphs \cite{MS17} (see also \cite{FStree}). In this section, we find new families of graphs, for which product cycles generate some homology groups of their configuration spaces. These cases are
\begin{itemize}
\item all homology groups of the configuration spaces of wheel graphs (section \ref{sec:wheel}),
\item all homology groups of the configuration space of graph $K_{3,3}$, except the third homology group (section \ref{sec:k33}), 
\item the second homology group of a simple graph which has at most one vertex of degree greater than $3$.
\end{itemize}
 In sections \ref{sec:k33} and \ref{sec:K_2p} we also discuss examples of cycles that are different than tori. In particular, we compute all homology groups of configuration spaces of complete bipartite graphs $K_{2,p}$ that are often pointed out in the literature as an unsolved example, where the simple use of product cycles is not sufficient to generate the homology groups. We show that some of the generators of $H_*(S_n(K_{2,p}))$ are cycles of a new type that have the homotopy type of triple tori.

\subsection{Discrete Morse theory for Abrams model}\label{sec:morse}
In this subsection, we apply a version of Forman's discrete Morse theory \cite{Forman} for Abram's discrete model that was formulated in \cite{FSbraid} (see also \cite{ASmorse}). The results are listed in tables \ref{morse_table} and \ref{morse-petersen}.

The discrete Morse theory relies on constructing a discrete gradient flow $F$ which is a linear map mapping $d$-chains to $d$-chains. Moreover, map $F$ has the property that for any chain $c$, we have $F^{r+1}(c)=F^{r}(c)$ for some $r$. The Morse complex is the chain complex of chains invariant under $F$. The basis of such invariant chains consists of {\it critical cells}. There are {\it a priori} different ways to explicitly realise the discrete gradient flow for graph configuration spaces. We have chosen the realisation introduced in \cite{FSbraid}. Here, we do not review the details of this construction, but only present a pseudocode which shows schematically how to compute $H_d(D_n(\Gamma))$ using the knowledge of the boundary map in $D_n(\Gamma)$ and the list of critical cells of $F$ as cells in $D_n(\Gamma)$. We also direct the reader to public repository \cite{code} where we uploaded a Python implementation of the discrete Morse theory that we used in our work. The results of running the code for different graphs are collected in tables \ref{morse_table} and \ref{morse-petersen}.

\begin{algorithm}
\caption{Main steps of the algorithm for computing $H_d(D_n(\Gamma))$ via discrete Morse theory}\label{morse-alg}
\begin{algorithmic}[1]
\State \textbf{Input:} \text{Sufficiently subdivided graph $\Gamma$, number of particles $n$.}
\State \textbf{Output:} \text{$\beta_d(D_n(\Gamma))$, $T_d(D_n(\Gamma))$}
\State $\textit{F} \gets \text{flow of the discrete gradient vector field}$
\State $\partial \gets \text{the boundary map in $\mk{C}(D_n(\Gamma))$}$
\Procedure{MorseBoundaryMap}{d}
\State $\textit{critcellsd} \gets \text{list of critical $d$-cells}$
\State $\textit{critcellsdminus} \gets \text{list of critical $d-1$-cells}$
\State $D_{\mc{M}}\gets \text{integer matrix of size Length{\it(critcellsd)}$\times$Length{\it(critcellsdminus)}}$
\For{$i=0\ \bf{to}\ \text{Length({\it critcellsd})}$}
\State $\textit{b} \gets \text{$\partial(critcellsd[i])$}$
\Repeat 
\State $\textit{b} \gets \text{$F(b)$}$
\Until{$F(b) == b$}
\For{$\sigma'\ \bf{in}\ \textit{b}$}
\State $D_{\mc{M}}[i][{\rm Index}(\sigma', critcellsdminus)]\gets {\rm Coefficent}(\sigma',b)$
\EndFor
\EndFor
\State \textbf{return} $D_{\mc{M}}$
\EndProcedure
\State $D_d\gets {\rm MorseBoundaryMap}(d)$
\State $\textit{dimker} \gets \text{${\rm Length}(D_d[0])-{\rm MatrixRank}({\rm MorseBoundaryMap}(d))$}$
\State $D_{d+1}\gets {\rm MorseBoundaryMap}(d+1)$
\State $divisors\gets {\rm ElementaryDivisors}(D_{d+1})$
\State $nonzerodivisors\gets \text{number of nonzero elements of {\it divisors}}$
\State $torsion\gets \text{list of elements of {\it divisors} that are greater than $1$}$
\State \textbf{return} $(dimker-nonzerodivisors)$,\ torsion
\end{algorithmic}
\end{algorithm}

\begin{table}[ht]
\centering
\begin{tabular}{c|c|c|c|c}
$\Gamma$ & $n$ & $\beta_2(C_n(\Gamma))$ & $\beta_3(C_n(\Gamma))$ & $\beta_4(C_n(\Gamma))$ \\
\hline \hline
\multirow{4}{*}{$K_4$} & 3 & 3 & 0 & - \\
				& 4 & 9 & 0 & 0 \\
				& 5 & 15 & 0 & 0 \\
				& 6 & 21 & 4 & 0 \\
				& 7 & 27 & 16 & 0 \\
				& 8 & 33 & 40 & 1 \\
				& 9 & 39 & 80 & 6 \\
\hline \multirow{4}{*}{$K_{3,3}$} & 2 & 0 & - & - \\
				 & 3 & 8 & 0 & - \\
				& 4 & 19 & 1 & 0 \\
				& 5 & 28 & 10 & 0 \\
				& 6 & 37 & 39 & 0 \\
				& 7 & 46 & 88 & 0 \\
				& 8 & 55 & 157 & 15 \\
\hline
\multirow{5}{*}{$K_5$}  & 2 & 0 & - & - \\
 				& 3 & 30 & 0 & - \\
				& 4 & 76 & 1 & 0 \\
				& 5 & 116 & 77 & 0 \\
				& 6 & 156 & 381 & 0 \\
				& 7 & 196 & 961 & 0 \\

\hline
\end{tabular}
\caption{Betti numbers for chosen graphs computed using the discrete Morse theory \cite{FSbraid}. The calculated groups were torsion-free.}
\label{morse_table}
\end{table}

Table \ref{morse-petersen} presents the results for the second and third homology groups for graphs from the Petersen family (fig. \ref{fig:petersen}). 
\begin{figure}[ht]
\centering
\includegraphics[width=0.8\textwidth]{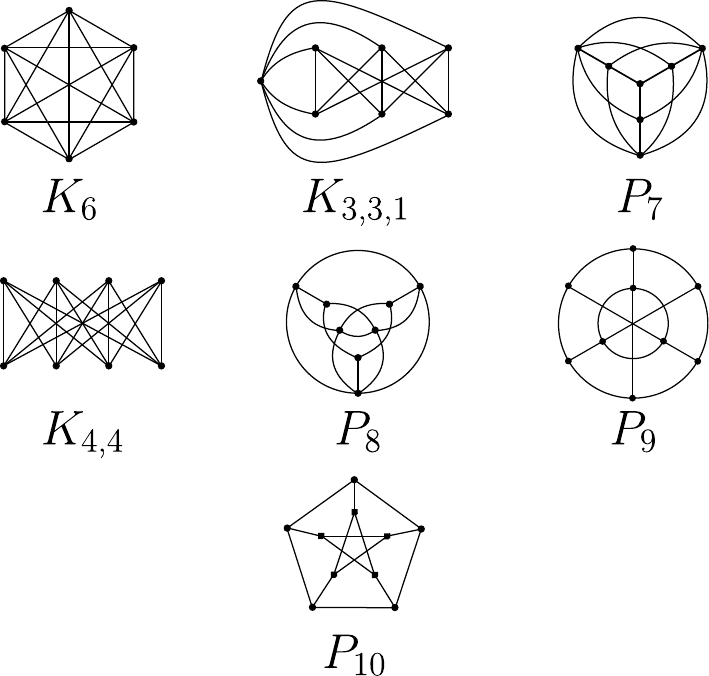}
\caption{Graphs that form the Petersen family.}
\label{fig:petersen}
\end{figure}
These graphs serve as examples, where torsion in higher homology groups appears. Interestingly, the torsion subgroups are always equal to a number of copies of $\ZZ_2$. This phenomenon can be explained by embedding a nonplanar graph in $\Gamma$ and considering suitable product cycles. The question about the existence of torsion different than $\ZZ_2$ in higher homologies remains open.

\begin{table}[H]
\centering
\begin{tabular}{c||c|c|c|c|c|c|c}
 & $K_6$ & $P_7$ & $K_{3,3,1}$ & $K_{4,4}$ & $P_8$ & $P_9$ & $P_{10}$ \\
\hline 
\hline\\[-1em]
$\beta_2(C_4(\Gamma))$ & $264$ & $177$ & $172$ & $144$ & $114$ & $70$ & $40$ \\
$T_2(C_4(\Gamma))$ & $\ZZ_2$ & $\ZZ_2$ & $\ZZ_2$ & $\left(\ZZ_2\right)^2$ & $\ZZ_2$ & $\ZZ_2$ & $\ZZ_2$ \\
\hline\\[-1em]
$\beta_3(C_6(\Gamma))$ & $4137$ & $2058$ & $1919$ & $1460$ & $986$ & $452$ & $191$ \\
$T_3(C_6(\Gamma))$ & $0$ & $0$ & $0$ & $\left(\ZZ_2\right)^{73}$ & $0$ & $0$ & $0$ \\
\hline
\end{tabular}
\caption{The first regular homology groups of order $2$ and $3$ for the Petersen family.}
\label{morse-petersen}
\end{table}

\subsection{Wheel graphs}\label{sec:wheel}
In this section, we deal with the class of wheel graphs. A wheel graph of order $m$ is a simple graph that consists of a cycle on $m-1$ vertices, whose every vertex is connected by an edge (called a spoke) to one central vertex (called the hub). We provide a complete description of the homology groups of configuration spaces for wheel graphs. In particular, we show that all homology groups are free. Therefore, in addition to tree graphs, wheel graphs provide another family of configuration spaces with a simplified structure of the set of flat complex vector bundles. The general methodology of computing homology groups for configuration spaces of wheel graphs is to consider only the product cycles and describe the relations between them. We justify this approach in subsection \ref{sec:wheel-swiatkowski}.

The simplest example of a wheel graph is graph $K_4$ which is the wheel graph of order $4$. Let us next calculate all homology groups of graph $K_4$ and then present the general method for any wheel graph. 

\subsubsection{Graph $K_4$}\label{sec:K4}
Graph $K_4$ is shown on figure \ref{K4}. It is the $3$-connected, complete graph on $4$ vertices.
 \begin{figure}[ht]
\centering
\includegraphics[width=0.3\textwidth]{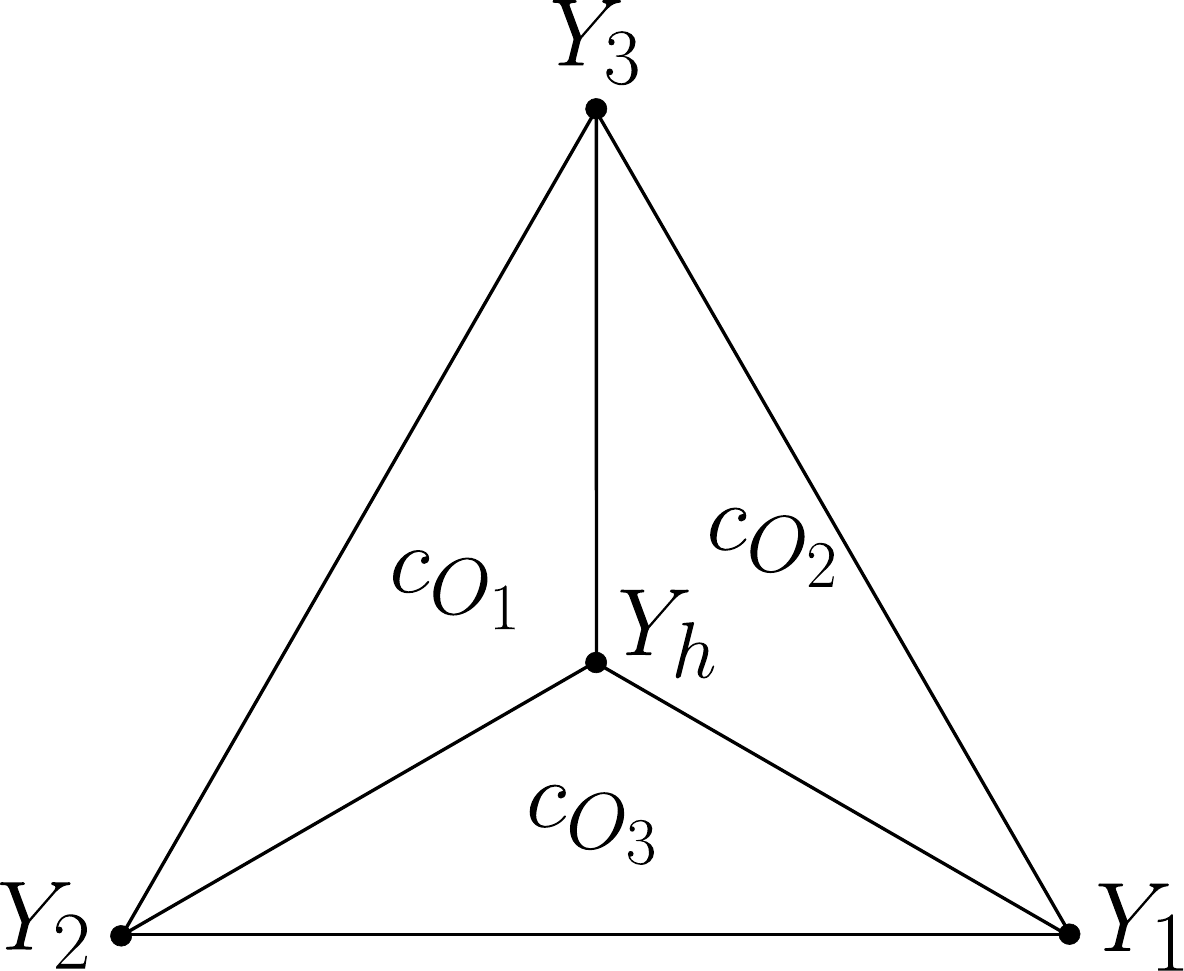}
\caption{Graph $K_4$ and the relevant $Y$-subgraphs and cycles. We omit the subdivision of edges in the picture.}
\label{K4}
\end{figure} 

\paragraph*{Second homology group}\ There are three independent cycles in $K_4$ graph. These are the cycles that contain the hub and two neighbouring vertices from the perimeter. However, any two such cycles always share some vertices. Hence, there are no tori that come from the products of $c_O$ cycles. Hence, the product $2$-cycles are either $c_Y\otimes c_O$ or $c_Y\otimes c_{Y'}$. There are four cycles of the first kind: $c_{Y_1}\otimes c_{O_1}$, $c_{Y_2}\otimes c_{O_2}$, $c_{Y_3}\otimes c_{O_3}$ and $c_{Y_h}\otimes c_{O}$, where $c_{O}$ is the outermost cycle. However, cycle $c_{Y_h}\otimes c_{O}$ can be expressed as a linear combination of cycles $c_{Y_1}\otimes c_{O_1}$, $c_{Y_2}\otimes c_{O_2}$, $c_{Y_3}\otimes c_{O_3}$. Therefore, the second homology of the three-particle configuration space is 
\[H_2(D_3(K_4))=\ZZ^3.\]
If $n>3$, there are still three independent $O\times Y$-cycles, as the differences between distributions of free particles in such cycles can always be expressed as combinations of $Y\times Y$-cycles. To see this, consider the following example. For $n=4$, consider the $O\times Y$-cycles that involve cycle $c_{O_1}$, subgraph $Y_1$ and one of three possible free vertices (Fig. \ref{K4-rel-ab}). The cycles are $c_{Y_1}\otimes c_{AB}^u$, $c_{Y_1}\otimes c_{AB}^v$, $c_{Y_1}\otimes c_{AB}^w$, where $c_{AB}^v:=c_{O_1}\times v$. From (\ref{y-ab-rel}) we have 
\begin{gather*}
c_{Y_2}\sim c_2+c_{AB}^v,\ c_{Y_3}\sim c_2+c_{AB}^w,\ c_{Y_h}\sim c_2+c_{AB}^u.
\end{gather*}
Subtracting the above equations and multiplying the results by $c_{Y_1}$, we get
\begin{gather*}
c_{Y_1}\otimes c_{Y_h}-c_{Y_1}\otimes c_{Y_2}\sim c_{Y_1}\otimes c_{AB}^u-c_{Y_1}\otimes c_{AB}^v, \\
c_{Y_1}\otimes c_{Y_h}-c_{Y_1}\otimes c_{Y_3}\sim c_{Y_1}\otimes c_{AB}^u-c_{Y_1}\otimes c_{AB}^w.
\end{gather*}
This means that the differences between distribution of particles in $AB$-cycles can be expressed as combinations of $Y\times Y$ cycles. This fact generalises to $n>4$ in a straightforward way.

 \begin{figure}[ht]
\centering
\includegraphics[width=0.4\textwidth]{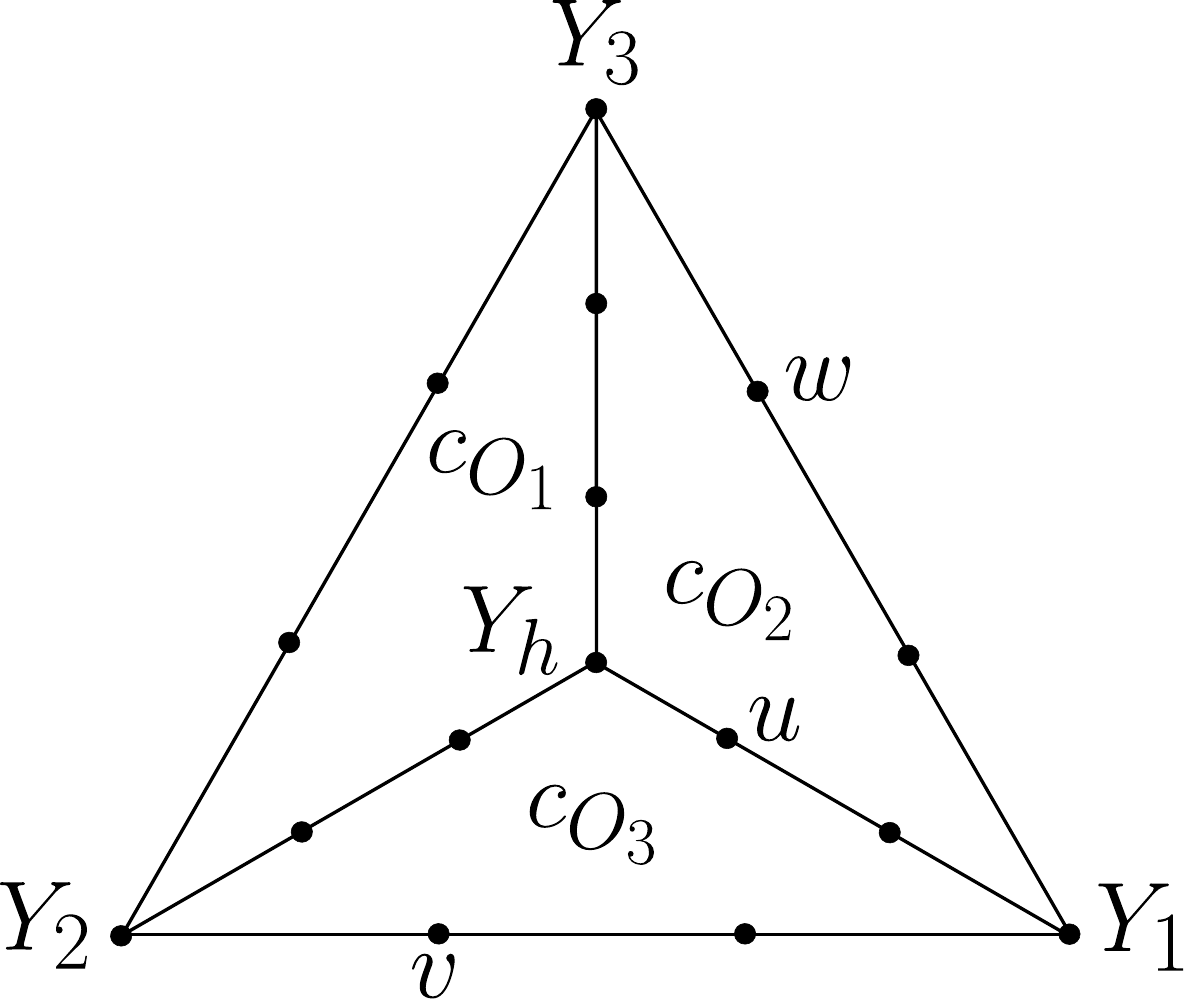}
\caption{Graph $K_4$ subdivided for $n=4$. Differences $c_{AB}^u-c_{AB}^v$ and $c_{AB}^u-c_{AB}^w$ are homologically equivalent to combinations of $Y\times Y$-cycles. $c_{Y_1}\otimes c_{Y_h}-c_{Y_1}\otimes c_{Y_2}$ and $c_{Y_1}\otimes c_{Y_h}-c_{Y_1}\otimes c_{Y_3}$ respectively. }
\label{K4-rel-ab}
\end{figure} 

Consider next all possible ways of choosing two $Y$-subgraphs. There are six $Y\times Y$-cycles modulo the distribution of free particles. Hence, if there are no free particles, i.e. when $n=4$, we have
\[H_2(D_4(K_4))=\ZZ^3\oplus\ZZ^6.\]
If $n>4$, we have to take into account the distribution of free particles in $\Gamma-(Y\cup Y')$. For a sufficiently subdivided graph one always ends up with two connected components (Fig. \ref{K4_removal}).
 \begin{figure}[ht]
\centering
\includegraphics[width=0.5\textwidth]{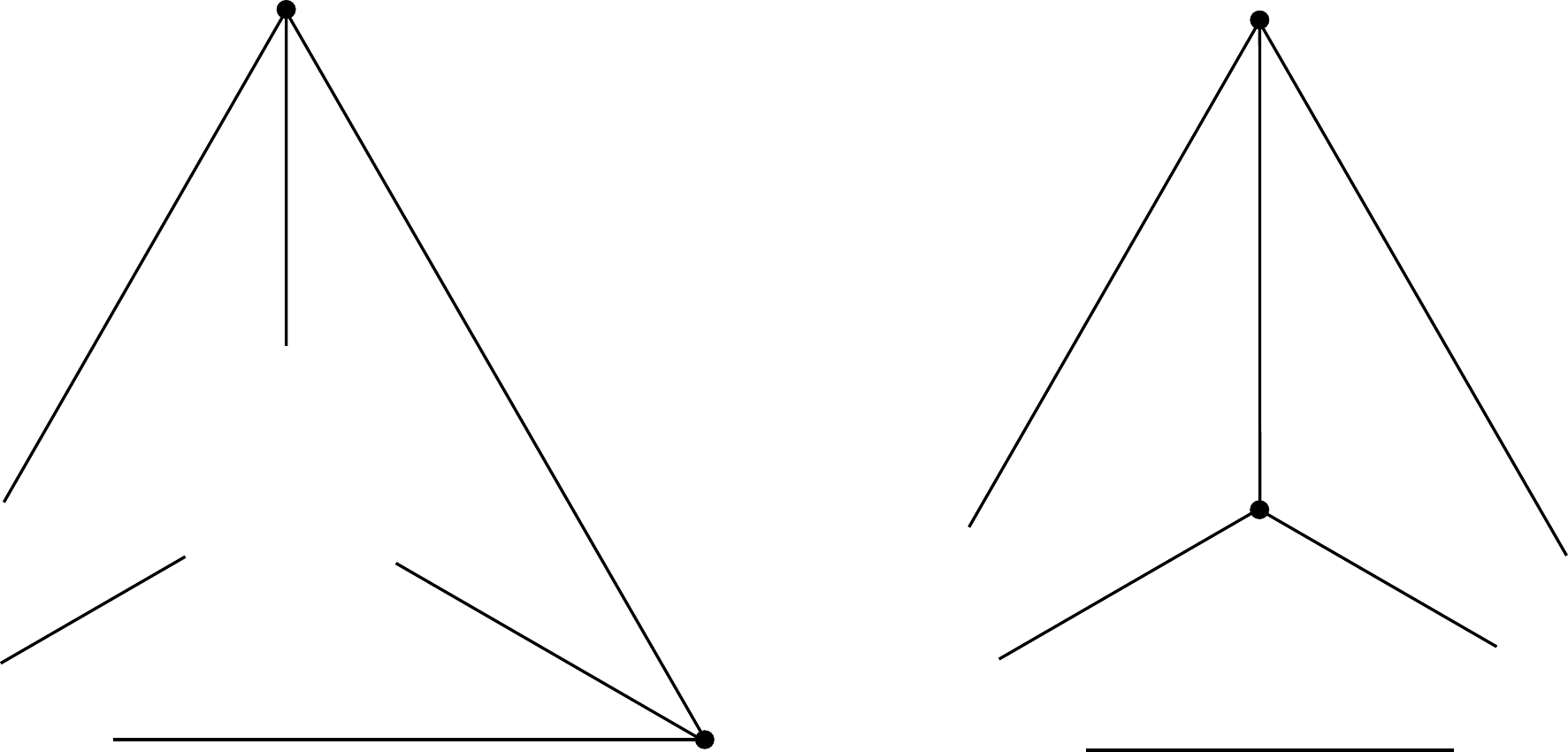}
\caption{Graph $K_4$ after removing two $Y$-subgraphs.}
\label{K4_removal}
\end{figure}
A $Y\times Y$-cycle involves $4$ particles, hence one has to calculate the number of all possible distributions of $n-4$ particles on those two components times the number of possible choices of the two $Y$-subgraphs. The number of all choices of the $Y$-subgraphs is ${4}\choose{2}$, while the number of possible distributions of $n-4$ particles on $2$ components is ${{n-4+2-1}\choose{2-1}}=n-3$. Hence, the contribution from $Y\times Y$ cycles reads
\begin{equation*}
{{4}\choose{2}}(n-3)=6(n-3),\ n\geq4.
\end{equation*}
Adding the contribution from $O\times Y$-cycles, the rank of the second homology group is then given by
\begin{equation*}
\beta_2(C_n(K_4))=3+6(n-3)=6n-15,\ n\geq3.
\end{equation*}

\paragraph*{Higher homology groups}\ The product generators of higher homologies are even simpler than in the case of the second homology. There are only basis cycles of $Y\times Y\times\dots\times Y$-type. After removing three and four $Y$-graphs, $K_4$ graph always disintegrates into $4$ and $6$ parts respectively. Taking into account the distributions of free particles, we get the following formulae for the Betti numbers.
\begin{gather*}
\beta_3(C_n(K_4))={{4}\choose{3}}{{n-6+4-1}\choose{4-1}}=4{{n-3}\choose{3}},\ n\geq6\\
\beta_4(C_n(K_4))={{4}\choose{4}}{{n-8+6-1}\choose{6-1}}={{n-3}\choose{5}},\ n\geq8.
\end{gather*}
Because there are maximally four $Y$-graphs, group $H_5(C_n(K_4),\mathbb{Z})$ is zero.

\subsubsection{General wheel graphs}\label{sec:other-wheel}
In Table \ref{morse_wheel} we list Betti numbers of configuration spaces of wheel graphs of order $5,\ 6$ and $7$ that were calculated using the discrete Morse theory.
\begin{table}[ht]
\centering
\begin{tabular}{c|c|c|c|c}
$\Gamma$ & $n$ & $\beta_2(D_n(\Gamma))$ & $\beta_3(D_n(\Gamma))$ & $\beta_4(D_n(\Gamma))$ \\
\hline \hline
\multirow{4}{*}{$W_5$} & 3 & 8 & 0  & -\\
				& 4 & 22 & 0 & 0  \\
				& 5 & 34 & 4 & 0  \\
				& 6 & 46 & 30 & 0 \\
				& 7 & 58 & 90 & 0 \\
				& 8 & 70 & 196 & 13 \\
\hline
\multirow{4}{*}{$W_6$} & 3 & 15 & 0 & - \\
				& 4 & 40 & 0 & 0 \\
				& 5 & 60 & 15 & 0 \\
				& 6 & 80 & 90 & 0 \\
				& 7 & 100 & 250 & 5 \\
\hline
\multirow{4}{*}{$W_7$} & 3 & 24 & 0 & - \\
				& 4 & 63 & 0 & 0 \\
				& 5 & 93 & 36 & 0 \\
				& 6 & 123 & 197 & 0  \\
				& 7 & 153 & 527 & 24 \\ 
\hline
\end{tabular}
\caption{Betti numbers of configuration spaces for chosen wheel graphs computed using the discrete Morse theory. In all cases the calculated groups were torsion-free.}
\label{morse_wheel}
\end{table}

\paragraph*{Second homology}\ Since there are no pairs of disjoint $O$-cycles in wheel graphs, we have 
\[\beta_2(D_2(W_m))=0.\]
When $n=3$, all product cycles are the $O\times Y$-cycles. Their number is $(m-1)(m-3)$, because there are $m-1$ choices of $Y$-subgraphs and $m-3$ cycles that are disjoint with a fixed $Y$-subgraph. Hence,
\[\beta_2(D_3(W_m))=(m-1)(m-3).\]
When $n=4$, we have to count the $Y\times Y$ cycles in. Let us divide the $Y\times Y$ cycles into two groups: i) cycles, where one of the subgraphs is $Y_h$ and ii) cycles, where both subgraphs lie on the perimeter. There are no relations between the cycles within group i) and no relations between the cycles within group ii). However, there are some relations between the cycles of type i) and type ii). The relations occur between cycles $Y_h\times Y$ and $Y'\times Y$ when subgraphs $Y_h$ and $Y$ do not share any edges of the graph (like on Fig. \ref{wheel-rel}b)). Then, as on Fig. \ref{rel_cycle_y}, cycles $c_{Y_h}$ and $c_{Y'}$ are in the same homology class in $D_2(W_m-Y)$, because they share the same $O$-cycle and they are connected by a path that is disjoint with $Y$. Therefore, by multiplying the relation by $c_{Y}$ we get that 
\[c_{Y_h}\times c_Y\sim c_{Y'}\times c_Y.\]
 \begin{figure}[ht]
\centering
\includegraphics[width=0.7\textwidth]{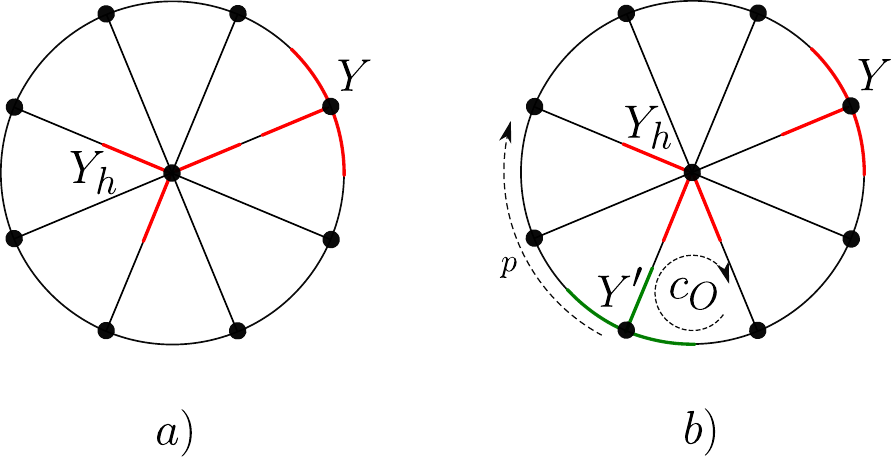}
\caption{Relations between different pairs of $Y_h\times Y$-cycles in a wheel graph. a) Cycles, where $Y_h$ and $Y$ share an edge of the graph are independent. b) Cycle, where $Y_h$ and $Y$ do not share any edges is in the same homology class as cycle $Y'\times Y$.}
\label{wheel-rel}
\end{figure}
If $m>4$, then for every pair $Y\times Y_h$ that does not share an edge, one can find subgraph $Y'$ on the perimeter which gives rise to such a relation. There are ${{m-1}\choose{2}}$ tori coming from $Y$-subgraphs from the perimeter. For a fixed $Y$-subgraph, the contribution from $Y\times Y_h$-cycles turns out to be equal to the number of independent cycles in the fan graph which is formed by removing subgraph $Y$ from the wheel graph \cite{HKRS}. This number is equal to $m-3$. Hence,
\[\beta_2(D_4(W_m))=2(m-1)(m-3)+{{m-1}\choose{2}}=\frac{(m-1)(5m-14)}{2}.\]

For numbers of particles greater than $4$, we have to take into account the distribution of free particles. Removing two $Y$-subgraphs from the perimeter may result with the decomposition of the wheel graph into at most two components. This happens iff two neighbouring $Y$-subgraphs have been removed. The number of nonequivalent ways of distributing the particles is $n-3$. The number of ways one can choose two neighbouring $Y$-subgraphs from the perimeter is $m-1$. This gives us the contribution of $(n-3)(m-1)$. Furthermore, removing a $Y$-subgraph from the hub and a subgraph from the perimeter always yields two nonequivalent ways of distributing the free particles. The first one being the edge $e$ joining the hub and the central vertex of $Y$, the second one being the remaining part of the graph, i.e. $W_m-(Y\sqcup Y_h\sqcup e)$. The contribution is $(n-3)(m-1)(m-3)$. Adding the contribution from $O\times Y$-cycles and from non-neighbouring $Y_p\times Y_p$-cycles, we get that the final formula for the second Betti number reads
\[\beta_2(D_n(W_m))=(n-2)(m-1)(m-3)+(m-1)(n-4)+{{m-1}\choose{2}},\ n\geq 4.\]

\paragraph*{Higher homologies}\ In computing the higher homology groups, we proceed in a similar fashion as in the previous section. However, the combinatorics becomes more complicated and in most cases it is difficult to write a single formula that works for all wheel graphs. Let us start with an example of $H_3(D_n(W_5))$. The possible types of product cycles are $O\times Y\times Y'$ and $Y\times Y'\times Y''$. Cycles of the first type arise in $W_5$ only when graphs $Y$ and $Y'$ are neighbouring subgraphs from the perimeter. There are four possibilities for such a choice of $Y$-subgraphs, hence 
\[\beta_3(D_5(W_5))=4.\]
When $n>5$, the free particles can be placed either on the edge joining the $Y$-subgraphs or on the connected part of $W_5$ that is created by removing subgraphs $Y$ and $Y'$. By arguments analogous to the ones presented in section \ref{sec:K4}, the distribution of free particles on the connected component containing cycle $O$ does not play a role. Hence, the contribution to $\beta_3$ is equal to the number of different distributions of free particles on the edge connecting $Y$ and $Y'$ and on the connected component. In other words, there are two bins and $n-5$ free particles. Hence, the total contribution from $O\times Y\times Y'$-cycles is $4(n-4)$. We split the contribution from $Y\times Y'\times Y''$-cycles into two groups. The first group consists of cycles only from perimeter ($Y_p\times Y'_p\times Y''_p$), for whom the combinatorial description is straightforward. The number of possible choices of $Y$-subgraphs is $4\choose{3}$ and it always results with the decomposition of $W_5$ into $3$ components. Hence, with $n-6$ free particles the number of independent $Y_p\times Y'_p\times Y''_p$-cycles is $4{{n-4}\choose{2}}$. In order to determine the number of independent cycles $Y_p\times Y'_p\times Y_h$ (two subgraphs from the perimeter and one from the hub), one has to consider different graphs that arise after removing two $Y$-subgraphs from the perimeter of $W_5$. The number of independent $Y_h$-cycles for a fixed choice of $Y_p$ and $Y_p'$ is the same as in a certain fan graph which is determined by the choice of the $Y_p$-subgraphs. Choosing $Y_p$ and $Y_p'$ to lie on the opposite sides of the diagonal of $W_5$, the resulting fan graph is the star graph $S_4$. The free  particles outside $Y_p$ and $Y_p'$ can always be moved to the $S_4$-subgraph. Hence, the contribution from such cycles is given by the number of independent $Y$-cycles in $S_4$ for $n-4$ particles. We denote this number by $\beta_1^{(n-4)}(S_4)$. The last group of cycles that we have to take into account are $Y_p\times Y'_p\times Y_h$, where $Y_p$ and $Y_p'$ are neighbouring subgraphs. The resulting fan graph is shown on Fig. \ref{w5-fan}. The $n-4$ particles that do not exchange on the perimeter subgraphs are distributed between the fan graph and the edge joining $Y_p$ and $Y_p'$. There have to be at least $2$ particles exchanging on a $Y_h$-subgraph of the fan graph. The number of independent $Y_h$-cycles for $k+2$ particles on the fan graph is given in the caption under Fig. \ref{w5-fan}. After summing all the above contributions, the final formula for the third Betti number reads
\begin{gather*}
\beta_3(D_n(W_5))=4(n-4)+4{{n-4}\choose{2}}+2\beta_1^{(n-4)}(S_4)+\\+4\sum_{k=0}^{n-6}\left(\beta_1^{(k+2)}(S_3)+{{k+3}\choose{k+1}}-1\right).
\end{gather*}
 \begin{figure}[ht]
\centering
\includegraphics[width=0.5\textwidth]{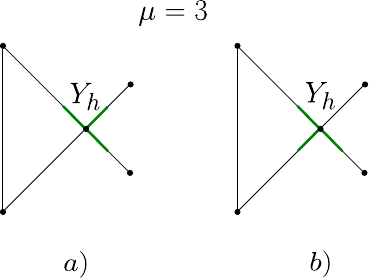}
\caption{The fan graph that is created after removing two neighbouring $Y$-subgraphs from the perimeter of $W_5$. It has $\mu=3$ leaves. There are two types of $Y$-cycles at the hub: a) cycles, where the $Y$-graph is spanned in three different leaves - the number of such cycles for $k+2$ particles is $\beta_1^{(k+2)}(S_3)$, b) cycles, where the $Y$-graph is spanned in two different leaves - the number of such cycles for $k+2$ particles is ${{k+3}\choose{k+1}}-1$, see \cite{HKRS}. }
\label{w5-fan}
\end{figure}
The fourth Betti number is easier to compute, because removing three $Y_p$-subgraphs always results with the same type of fan graph. This fan graph has no cycles, hence there are no $O\times Y\times Y\times Y$-cycles. Moreover, there is only one possible choice of four $Y$-subgraphs from the perimeter. This always results with the decomposition of $W_5$ into $5$ components. Choosing three $Y$-subgraphs from perimeter results with the decomposition of $W_5$ into $3$ components: a fan graph and $2$ edges. The number of independent $Y_h$ cycles in the fan graph is the same as in $S_4$. Taking into account the distribution of $n-6$ particles between the two edges and the fan graph, we have
\[\beta_4(D_n(W_5))={{n-4}\choose{4}}+4\sum_{k=0}^{n-8}(n-k-7)\beta_1^{(k+2)}(S_4),\ n\geq 8.\]
The top homology for $D_n(W_5)$ is $H_5$. Distributing $k+2$ particles on the central $S_4$ graph and the remaining particles on four free edges joining $Y_p$-subgraphs, we get
\[\beta_5(D_n(W_5))=\sum_{k=0}^{n-10}{{n-k-7}\choose{3}}\beta_1^{(k+2)}(S_4),\ n\geq 10.\]

Let us next generalise the above procedure to an arbitrary wheel graph $W_m$. The $d$th Betti number is zero whenever the number of particles is less than $2(d-1)+1=2d-1$. If $n=2d-1$ the only possible tori come from the products of $d-1$ $Y$-cycles and one $O$-cycle. The graph also cannot be too small, i.e. the condition $m-3\geq d-1$ must be satisfied. Otherwise, there is no cycle that is disjoint with $d-1$ $Y$-subgraphs. Hence,
\[\beta_d(D_{n}(W_m))=0\ {\rm if}\ n<2d-1\]
and 
\[\beta_d(D_{2d-1}(W_m))=0\ {\rm if}\ m<d+2.\]
Otherwise, for $n=2d-1$, if the graph is large enough, one has to look at all the possibilities of removing $Y$-subgraphs from the perimeter and what fan graphs are created. We are interested in the number of leaves ($\mu$) of the resulting fan graph. The number of cycles in such a fan graph with $\mu$ leaves is $m-1-\mu$. It is a difficult task to list all possible fan graphs for any $W_m$ in a single formula. The results for graphs up to $W_7$ are shown in Table \ref{tab:wheel-fan}. Using the notation from Table \ref{tab:wheel-fan}, the general formula for $\beta_d$ reads 
\[\beta_d(D_{2d-1}(W_m))=\sum_{{\bf n}:|{\bf n}|=d-1} N_{\bf n}(m-1-\mu_{\bf n}),\]
where $|{\bf n}|:=\sum_{i=1}^l n_i$.

\begin{table}[ht]
\centering
\begin{tabular}{c|c|c|c}
\multirow{2}{*}{$\Gamma$} & Groups of & Number of & Number of \\
					& $Y$-subgraphs - $\bf n$ & possible choices - $N_{\bf n}$ & leaves - $\mu_{\bf n}$ \\
\hline \hline
\multirow{5}{*}{$W_5$} & (1) & 4 & 1  \\
				& (1,1) & 2 & 4  \\
				& (2) & 4 & 3  \\
				& (3) & 4 & 4 \\
				& (4) & 1 & 4 \\
\hline
\multirow{7}{*}{$W_6$} & (1) & 5 & 2  \\
				& (1,1) & 5 & 4  \\
				& (2) & 5 & 3  \\
				& (2,1) & 5 & 5 \\
				& (3) & 5 & 4 \\
				& (4) & 5 & 5 \\
				& (5) & 1 & 5 \\
\hline
\multirow{11}{*}{$W_7$} & (1) & 6 & 2 \\
				& (1,1) & 9 & 4 \\
				& (2) & 6 & 3 \\
				& (1,1,1) & 2 & 6 \\
				& (2,1) & 12 & 5 \\
				& (3) & 6 & 4 \\
				& (2,2) & 3 & 6 \\
				& (3,1) & 6 & 6 \\
				& (4) & 6 & 5 \\
				& (5) & 6 & 6 \\
				& (6) & 1 & 6 \\
\hline
\end{tabular}
\caption{The possibilities of choosing a number of $Y$-subgraphs from the perimeter of a wheel graph. The groups of $Y$-subgraphs are denoted by sequences $(n_1,n_2,\dots,n_l)$, where $l+\sum_{i=1}^l n_i\leq m-1$. A group $n_i$ means that $n_i$ neighbouring $Y$-subgraphs were chosen. The groups have to be separated by at least one spoke. For a fixed set of groups there are many possibilities for distributing the remaining $Y$-subgraphs. The number of possibilities is written in the third column. The number of leaves of the resulting fan graph is written in the fourth column. It is independent on the distribution of the remaining $Y$-subgraphs and is given by $\mu_{\bf n}=\min\left(m-1,l+\sum_{i=1}^l n_i\right)$.}
\label{tab:wheel-fan}
\end{table}

\noindent For higher numbers of particles, one has to take into account the $Y\times Y\times\dots\times Y$ cycles and distribution of free particles. If $n=2d$, the free particles are only in $O\times Y\times Y\times\dots\times Y$-cycles, where they are distributed between the edges that come from removing a group of $Y$-subgraphs. Group $n_i$ gives $n_i-1$ edges. Hence, groups $(n_1,\dots,n_l)$ give $|{\bf n}|-l$ edges. The final formula reads 
\begin{gather*}
\beta_d(D_{2d}(W_m))={{m-1}\choose{d}}+\\+\sum_{{\bf n}:|{\bf n}|=d-1} N_{\bf n}\left((m-1-\mu_{\bf n})(d-1-\#{\bf n})+\beta_1^{(2)}\left(S_{\mu_{\bf n}}\right)+(\mu_{\bf n}-1)(m-1-\mu_{\bf n})\right),
\end{gather*}
where $\#{\bf n}$ is the number of groups in ${\bf n}$ (the length of vector $\bf n$). The contribution $\beta_1^{(2)}\left(S_{\mu_{\bf n}}\right)+(\mu_{\bf n}-1)(m-1-\mu_{\bf n})$ comes from the number of independent $Y_h$-cycles in the relevant fan graph. The general formula when $n>2d$ reads as follows.
\begin{gather}\label{wheel-full}
\beta_d(D_{n}(W_m))=\sum_{{\bf n}:|{\bf n}|=d-1} N_{\bf n}(m-1-\mu_{\bf n}){{n-d-\#{\bf n}}\choose{d-\#{\bf n}-1}}+\\ \nonumber+\sum_{{\bf n}:|{\bf n}|=d} N_{\bf n}{{n-d-\#{\bf n}}\choose{d-\#{\bf n}}}+ \\ \nonumber +
\sum_{{\bf n}:|{\bf n}|=d-1} N_{\bf n}\sum_{l=0}^{n-2d}\bigg{(}\beta_1^{(l+2)}\left(S_{\mu_{\bf n}}\right)+\left({{l+\mu_{\bf n}}\choose{l+1}}-1\right)(m-1-\mu_{\bf n})\bigg{)}\times \\ \nonumber \times {{n-d-\#{\bf n}-l-2}\choose{d-\#{\bf n}-2}}.
\end{gather}
The first sum describes the $O\times Y\times Y\times\dots\times Y$-cycles and the distribution of the free $n-2d+1$ particles. Second sum is the number of $Y\times Y\times\dots\times Y$-cycles, where all $Y$-subgraphs lie on the perimeter - there are $n-2d$ free particles. The last sum describes the number of independent $Y_h\times Y_p\times\dots\times Y_p$-cycles. Here we used the formula for the number of $Y_h$-cycles for $n$ particles on a fan graph with $\mu$ leaves and $m-1$ spokes \cite{HKRS}
\[n_Y^{(n)}(\mu,m-1)=\beta_1^{(n)}\left(S_{\mu_{\bf n}}\right)+\left({{n+\mu-2}\choose{n-1}}-1\right)(m-1-\mu).\]
Sometimes, in formula (\ref{wheel-full}), we get to evaluate ${{0}\choose{0}}=1$, ${{0}\choose{-1}}=0$, ${{-1}\choose{-1}}=1$.

The highest non-vanishing Betti number is $\beta_m$ and its value is the number of the possible distributions of $n-2m$ free particles between the central $S_m$ graph and the free $m-2$ edges on the perimeter.
\[\beta_m(D_n(W_m))=\sum_{k=0}^{n-2m}{{n-m-k-2}\choose{m-2}}\beta_1^{(k+2)}(S_{m-1}),\ n\geq 2m.\]

\subsection{Wheel graphs via \'{S}wi\k{a}tkowski discrete model}\label{sec:wheel-swiatkowski}
In this section we show that the homology of configuration spaces of wheel graphs is generated by product cycles. The strategy is to consider two consecutive vertex cuts that bring any wheel graph to the form of a linear tree.
 \begin{figure}[H]
\centering
\includegraphics[width=0.5\textwidth]{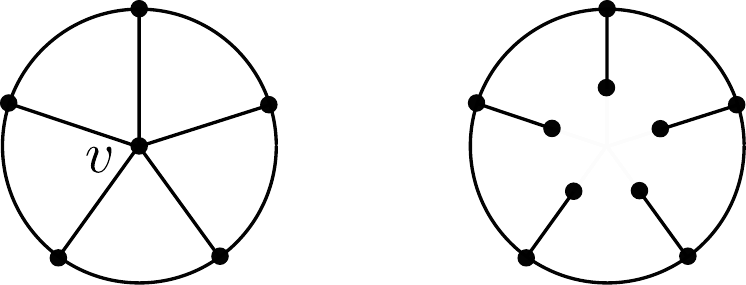}
\caption{Vertex blowup at the hub of wheel $W_{m+1}$ resulting with net graph $N_{m}$.}
\label{blowup-wheel1}
\end{figure}

 \begin{figure}[H]
\centering
\includegraphics[width=0.7\textwidth]{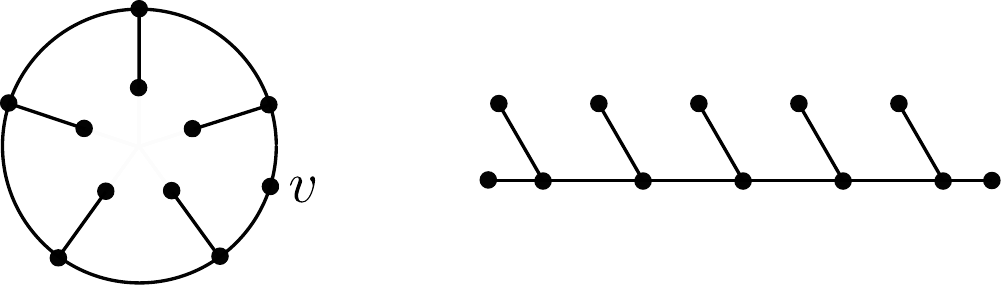}
\caption{Blowup of a vertex in net graph $N_{m}$ resulting with linear tree graph $T_{m}$.}
\label{blowup-wheel2}
\end{figure}

Throughout, we use the knowledge of generators of the homology groups for tree graphs to construct a set of generators for net graphs and wheel graphs. Translating the results of paper \cite{MS17} to the \'{S}wi\k{a}tkowski complex, we have that the generators of $H_d(S(T_{m}))$ are of the form 
\[c_d=c_{Y_1}\dots c_{Y_d}v_1\dots v_ke_1^{n_1}\dots e_l^{n_l},\]
subject to relations
\begin{equation}\label{tree-rel}
c_de\sim c_dv,\ {\rm if\ }e\cap v\neq\emptyset.
\end{equation}
This means that computing the rank od $H_d(S(T_{m}))$ boils down to considering all possible distributions of $n-2d$ free particles among the connected components of $T_m-(v_h(Y_1)\cup\dots\cup v_h(Y_d))$. By $v_h(Y_1)$ we denote the hub vertex of the $Y$-subgraph $Y_i$. Hence, $H_d(S(T_{m}))$ is freely generated by generators of the form
\begin{equation}\label{tree-basis}
[Y_1,\dots,Y_d,n_1,\dots,n_{2d+1}],\ n_1+\dots +n_{2d+1}=n-2d,
\end{equation}
where $n_i$ is the number of particles on $i$th connected component of $T_m-(v_h(Y_1)\cup\dots\cup v_h(Y_d))$. In the first step, we connect two endpoints of $T_{m}$ to obtain net graph $N_{m}$ (Fig. \ref{blowup-wheel2}).
\begin{lemma}\label{lemma:net}
The homology groups of $C_n(N_m)$ are freely generated by the product $Y$-cycles and the distributions of free particles on the connected components $N_m-(v_h(Y_1)\cup\dots\cup v_h(Y_d))$ which we denote by
\begin{equation}\label{net-basis}
[Y_1,\dots,Y_d,n_1,\dots,n_{2d}],\ n_1+\dots +n_{2d}=n-2d.
\end{equation}
The Betti numbers read
\[\beta_d(C_n(N_m))={{m}\choose{d}}{{n-1}\choose{2d-1}}.\]
\end{lemma}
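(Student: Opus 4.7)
The plan is to deduce the homology of $C_n(N_m)$ from that of $C_n(T_m)$ via the vertex-blowup long exact sequence (\ref{long-blowup}). Choose $v$ to be a degree-$2$ vertex lying on the cycle of $N_m$ (subdividing a cycle edge first if necessary, which preserves the homotopy type of $C_n(N_m)$). The blowup at $v$ severs the cycle and produces the linear tree $T_m$, with the two cycle half-edges at $v$ becoming pendant edges $e_1, e_2$ ending in new leaves. Because $|H(v)|=2$, the indexing set $H(v)\setminus\{h_0\}$ contains a single half-edge, and (\ref{long-blowup}) simplifies to
$$\cdots \to H_d(S_{n-1}(T_m)) \xrightarrow{\delta_{n,d}} H_d(S_n(T_m)) \to H_d(S_n(N_m)) \to H_{d-1}(S_{n-1}(T_m)) \xrightarrow{\delta_{n,d-1}} \cdots,$$
with connecting map $\delta[b] = e_1[b] - e_2[b]$.

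The crux of the argument is showing that $\delta_{n,d}$ is injective in every bidegree. By the result of \cite{MS17}, $H_d(S_n(T_m))$ is the free abelian group on the basis (\ref{tree-basis}) indexed by a choice of $d$ hubs $u_{i_1},\dots,u_{i_d}$ (hosting the $Y$-subgraphs $Y_1,\dots,Y_d$) together with a weak composition of $n-2d$ over the $2d+1$ components of $T_m \setminus \{u_{i_1},\dots,u_{i_d}\}$. Two of these components are the backbone end-segments, $A_0$ containing $e_1$ and $A_d$ containing $e_2$. Iterating relation (\ref{tree-rel}) shows that multiplication by $e_i$ shifts a particle into component $A_i$, giving
$$\delta[Y_1,\dots,Y_d;a_0,\dots,a_d,b_1,\dots,b_d] = [\dots,a_0+1,a_1,\dots,a_d,\vec b] - [\dots,a_0,a_1,\dots,a_d+1,\vec b].$$
Fixing the hubs and the remaining coordinates $(a_1,\dots,a_{d-1};\vec b)$ reduces $\delta$ to a finite-difference operator on the lattice segment of $(a_0,a_d)$ with prescribed sum. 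Equating the coefficient of each target basis element to zero produces telescoping equations $c_{k-1,s-k} = c_{k,s-k-1}$, and the boundary values ($k=0$ and $k=s$) correspond to non-existent source generators, forcing the chain to begin and end at zero. Hence $\ker\delta_{n,d}=0$.

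With $\ker\delta_{n,d-1}=0$ for every $d$, the long exact sequence collapses to the isomorphism $H_d(S_n(N_m)) \cong \coker(\delta_{n,d})$. The image of $\delta$ identifies the tree basis elements that differ only by moving one particle between $A_0$ and $A_d$, so classes in $\coker(\delta_{n,d})$ are parametrised by the hub choice together with $(s, a_1,\dots,a_{d-1}, \vec b)$ where $s:=a_0+a_d$ and $s + a_1+\dots+a_{d-1} + b_1+\dots+b_d = n-2d$. Under the reversal of the blowup, $A_0$ and $A_d$ reglue into the single cyclic arc of $N_m$ containing $v$, and $s$ becomes the number of particles on that arc, so the cokernel basis matches (\ref{net-basis}) exactly, establishing freeness. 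Counting gives $\binom{m}{d}$ choices of hubs and $\binom{n-1}{2d-1}$ weak compositions of $n-2d$ into $2d$ nonnegative parts, yielding the claimed rank. The chief obstacle is the injectivity step: one needs to confirm that multiplications by $e_1$ and $e_2$ land in genuinely distinct tree basis elements (ensured by $A_0 \neq A_d$) and that the resulting finite-difference operator has no hidden kernel, both of which come down to the elementary lattice telescope described above.
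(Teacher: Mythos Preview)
Your argument is essentially identical to the paper's: both blow up a degree-$2$ cycle vertex to pass from $N_m$ to $T_m$, invoke injectivity of the connecting map $\delta_{n,d}$ to identify $H_d(S_n(N_m))$ with $\coker(\delta_{n,d})$, and then read off the basis (\ref{net-basis}) as distributions over the $2d$ components obtained by merging the two end-segments of $T_m$. The only difference is cosmetic---where the paper asserts that the images of distinct tree basis elements are linearly independent and that the preimage is recovered ``by subtracting the particles from $n_1$ and $n_{d+1}$'', you make this explicit by recognising $\delta$ on each $(a_0,a_d)$-slice as a finite-difference operator with vanishing boundary terms, which is the same computation unpacked.
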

\begin{proof}
Long exact sequence corresponding to vertex blow-up from figure \ref{blowup-wheel2} reads
\begin{gather*}
\dots\xrightarrow{\Psi_{n,d+1}} H_d\left(S_{n-1}(T_m)\right)\xrightarrow{\delta_{n,d}}H_d\left(S_n(T_m)\right)\xrightarrow{\Phi_{n,d}}H_d\left(\tilde S^v_n(N_m)\right)\xrightarrow{\Psi_{n,d}} \\ \nonumber \xrightarrow{\Psi_{n,d}}H_{d-1}\left(S_{n-1}(T_m)\right)\xrightarrow{\delta_{n,d-1}}H_{d-1}\left(S_n(T_m)\right)\xrightarrow{\Phi_{n,d-1}}\dots,
\end{gather*}
Let us next show that the connecting homomorphism $\delta$ is in this case injective. Map $\delta_{n,d}$ acts on generators (\ref{tree-basis}) as
\begin{gather*}
\delta_{n,d}([Y_1,\dots,Y_d,n_1,\dots,n_{2d+1}])=[Y_1,\dots,Y_d,n_1+1,\dots,n_{2d+1}]+ \\ -[Y_1,\dots,Y_d,n_1,\dots,n_{2d+1}+1],
\end{gather*}
where $n_1$ and $n_{d+1}$ are respectively the numbers of particles on the leftmost and on the rightmost connected component of $T_m-(v_h(Y_1)\cup\dots\cup v_h(Y_d))$. One can check that vectors $\{[Y_1,\dots,Y_d,n_1+1,\dots,n_{2d+1}] -[Y_1,\dots,Y_d,n_1,\dots,n_{2d+1}+1]\}$ corresponding to different choices of $Y$-subgraphs of $T_m$ are linearly independent. Hence, any vector from $\im \delta_{n,d}$ can be uniquely decomposed in this basis and its preimage can be unambiguously determined by subtracting the particles from $n_1$ and $n_{d+1}$. By injectivity of $\delta$, 
\[H_d\left(\tilde S^v_n(N_m)\right)\cong\coker(\delta_{n,d}).\]
Hence, the rank of $H_d(S_n(N_m))$ is equal to $\rk({\coker}_{n,d})=\beta_d\left(S_n(T_m)\right)-\beta_d\left(S_{n-1}(T_m)\right)$. The Betti numbers of $S_n(T_m)$ can be computed by counting the distributions of $n-2d$ particles on $2d+1$ connected components multiplied by the number of $d$-subsets of $Y$-subgraphs of $T_m$. The result is 
\[\beta_d\left(S_n(T_m)\right)={{m}\choose{d}}{{n}\choose{2d}}.\]
The claim of the lemma follows directly from the above formula. The result is the same as the number of distributions of $n-2d$ particles on $2d$ connected components of $N_m-(v_h(Y_1)\cup\dots\cup v_h(Y_d))$.
\end{proof}
Let us next consider the homology sequence associated with the vertex blow-up from $W_{m+1}$ to $N_m$ (fig. \ref{blowup-wheel1}).
\begin{gather*}
\dots\xrightarrow{\Psi_{n,d+1}} \bigoplus_{h\in H(v)-\{h_0\}}H_d\left(S_{n-1}(N_m)\right)\xrightarrow{\delta_{n,d}}H_d\left(S_n(N_m)\right)\xrightarrow{\Phi_{n,d}}H_d\left(\tilde S^v_n(W_{m+1})\right)\xrightarrow{\Psi_{n,d}} \\ \nonumber \xrightarrow{\Psi_{n,d}}\bigoplus_{h\in H(v)-\{h_0\}}H_{d-1}\left(S_{n-1}(N_m)\right)\xrightarrow{\delta_{n,d-1}}H_{d-1}\left(S_n(N_m)\right)\xrightarrow{\Phi_{n,d-1}}\dots,
\end{gather*}
We next describe the kernel of map $\delta$. Our aim is to show that it is free abelian which in turn gives us that the short exact sequences for $H_d(S_n(W_{m+1}))$ split and yield $H_d(S_n(W_{m+1}))\cong \coker(\delta_{n,d})\oplus\ker(\delta_{n-1,d})$. Map $\delta_{n,d}$ assigns to generators (\ref{net-basis}) of $H_d(S_n(W_{m+1}))$ the differences of generators derived from a given generator by adding one particle to a connected component of $N_m-(v_h(Y_1)\cup\dots\cup v_h(Y_d))$. In order to write down the action of map $\delta$, let us first establish some notation. The connected components of $N_m-(v_h(Y_1)\cup\dots\cup v_h(Y_d))$ are either isomorphic to edges or to linear tree graphs. The number of connected components that are edges which have one vertex of degree one in $N_m$ is equal to $d$. The number of the remaining connected components is always equal to $d$, but their type depends on the distribution of subgraphs $Y_1,\dots,Y_d$ in $N_m$. The situations that are relevant for the description of $\ker\delta$ are those, where a particle is added by map $\delta$ to two connected components which contain an edge which before the blow-up was adjacent to the hub of $W_{m+1}$. There are at most $2d$ such components, as removing the hub-vertices of two neighbouring $Y$-subgraphs of $N_m$ yields a connected component of the edge type which is not adjacent to the hub of $W_{m+1}$. We label these components by numbers $1,\dots,l$ (we always have $d\leq l\leq 2d$) and the occupation numbers of these components are $n_1,\dots,n_l$. We choose component $1$ to be the component adjacent to edge $e(h_0)$ and increase the labels in the clockwise direction from the component with label $1$. The remaining components are labelled by numbers $l+1,\dots,2d$. Map $\delta$ acts on basis elements of $\bigoplus_{h\in H(v)-\{h_0\}}H_d\left(S_{n-1}(N_m)\right)$ as follows. 
\begin{gather*}
\delta_{n,d}\left([Y_1,\dots,Y_d,n_1,\dots,n_{p_i},\dots,n_{2d}]_i\right)= \\ =[Y_1,\dots,Y_d,n_1+1,\dots,n_{p_i},\dots,n_{2d}]-[Y_1,\dots,Y_d,n_1,\dots,n_{p_i}+1,\dots,n_{2d}],
\end{gather*}
where $[Y_1,\dots,Y_d,n_1,\dots,n_{p_i},\dots,n_{2d}]_i$ is a generator corresponding to the $h_i$-component of $\bigoplus_{h\in H(v)-\{h_0\}}H_d\left(S_{n-1}(N_m)\right)$ and $p_i$ is the label of the connected component which contains edge $e(h_i)$. One easily observes that if $p_i=1$, i.e. edges $e(h_0)$ and $e(h_i)$ belong to the same connected component of  $N_m-(v_h(Y_1)\cup\dots\cup v_h(Y_d))$, generator $[Y_1,\dots,Y_d,n_1,\dots,n_{p_i},\dots,n_{2d}]_i\in \ker\delta_{n,d}$. Such an element is represented in $S_n(W_{m+1})$ as cycle $c_{Y_1}\dots c_{Y_d}c_O$, where $O$ is the cycle in $W_{m+1}$ which contains edges $e(h_0)$, $e(h_i)$ and the hub of $W_{m+1}$. Similarly, element \[[Y_1,\dots,Y_d,n_1,\dots,n_{p_i},\dots,n_{2d}]_i-[Y_1,\dots,Y_d,n_1,\dots,n_{p_j},\dots,n_{2d}]_j\]
is in the kernel of $\delta_{n,d}$ whenever edges $e(h_i)$ and $e(h_j)$ belong to the same connected component of  $N_m-(v_h(Y_1)\cup\dots\cup v_h(Y_d))$. The last type of elements of $\ker\delta_{n,d}$ are combinations of generators that when acted upon by $\delta_{n,d}$, compose to the boundary of a $Y$-cycle centred at the hub of $W_{m+1}$. Such kernel elements correspond to cycles $c_{Y_1}\dots c_{Y_d}c_{Y_h}$ in $S_n(W_{m+1})$, where $Y_h$ is a $Y$-cycle, whose hub-vertex is the hub-vertex of $W_{m+1}$. The precise form of such kernel elements is the following.
\begin{gather*}
[Y_1,\dots,Y_d,n_1,\dots,n_{p_i},\dots,n_{p_j}+1,\dots]_i-[Y_1,\dots,Y_d,n_1+1,\dots,n_{p_i},\dots,n_{p_j},\dots]_i+\\+[Y_1,\dots,Y_d,n_1+1,\dots,n_{p_i},\dots,n_{p_j},\dots]_j-[Y_1,\dots,Y_d,n_1,\dots,n_{p_i}+1,\dots,n_{p_j},\dots]_j,
\end{gather*}
where $i<j$. In order to manage the relations between the above kernel elements, we use the already mentioned fact that they are in a one-to-one correspondence with $1$-cycles ($O$-cycles and $Y$-cycles) in a configuration space of the disconnected graph $W_{m+1}-(v_h(Y_1)\cup\dots\cup v_h(Y_d))$. More specifically, the disconnected graph $W_{m+1}-(v_h(Y_1)\cup\dots\cup v_h(Y_d))$\footnote{$W_{m+1}-(v_h(Y_1)\cup\dots\cup v_h(Y_d))$ is a disconnected topological space. We give this space the structure of a graph by adding a vertex to the open end of each open edge.} is a disjoint sum of a number of edges and of one fan graph. We regard the $1$-cycles ($O$-cycles or $Y$-cycles) at the hub as generators of the first homology group of the configuration space of the fan graph multiplied by different distributions of particles on the disjoint edge-components of $W_{m+1}-(v_h(Y_1)\cup\dots\cup v_h(Y_d))$. Fan graphs are planar, hence by equation (\ref{eq:1st-hom}) there is no torsion in $\ker\delta_{n,d}$. Hence, $H_d(S_n(W_{m+1}))$ is torsion-free and short exact sequence for $H_d(S_n(W_{m+1}))$ gives in this case
\[\beta_d(S_n(W_{m+1}))\cong \beta_d(S_{n}(N_{m}))-m \beta_d(S_{n-1}(N_{m}))+\rk(\ker\delta_{n,d})+\rk(\ker\delta_{n,d-1}).\]
The computation of ranks of kernels of maps $\delta_{n,d}$ is a combinatorial task which has been accomplished using the correspondence with cycles in configuration spaces of fan graphs in subsection \ref{sec:other-wheel}.

\subsection{Graph $K_{3,3}$}\label{sec:k33}
Graph $K_{3,3}$ is shown on Fig. \ref{K33}. We will draw graph $K_{3,3}$ in two ways: 1) immersion in $\RR^2$, Fig. \ref{K33}a), ii) embedding in $\RR^3$, Fig. \ref{K33}b).
 \begin{figure}[H]
\centering
\includegraphics[width=0.7\textwidth]{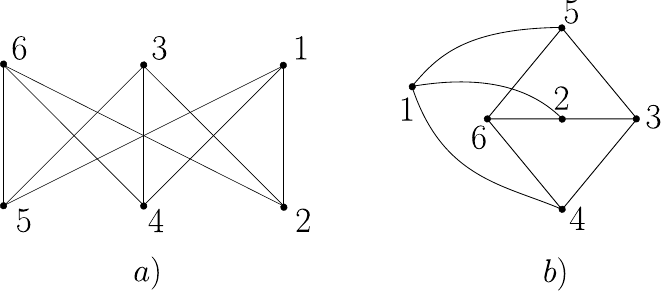}
\caption{Graph $K_{3,3}$.}
\label{K33}
\end{figure} 
Graph $K_{3,3}$ has the property that all its vertices are of degree three. High homology groups of graphs with such a property have been studied in \cite{Knudsen}. In particular, we have the following result.
\begin{theorem}
Let $\Gamma$ be a simple graph, whose all vertices have degree $3$. Denote by $N$ the number of vertices of graph $\Gamma$ and label the vertices by labels $1,\dots, N$. Moreover, denote by $\mc{Y}=\{Y_1,\dots Y_N\}$ the set of $Y$-subgraphs of $\Gamma$ such that the hub of $Y_k$ is vertex $k$. Group $H_N(S_n(\Gamma))$ is freely generated by product cycles
\[e_1^{n_1}\dots e_K^{n_K}\bigotimes_{Y\in\mc{Y}}c_Y,\ n_1+\dots+e_K=n-2N.\]
Group $H_{N-1}(S_n(\Gamma))$ is generated by product cycles of the form
\[e_1^{n_1}\dots e_K^{n_K}v\bigotimes_{Y\in\mc{\tilde Y}}c_Y,\ n_1+\dots+e_K=n-2(N-1),\]
where $\mc{\tilde Y}\subset \mc{Y}$  is such that $|\mc{\tilde Y}|=N-1$, and $v\in V(\Gamma)$ is the unique vertex that satisfies $v\cap(\cup_{Y\in\mc{\tilde Y}}Y)=\emptyset$.
The above generators are subject to relations
\[e_1^{n_1}\dots e_j^{n_j}\dots e_K^{n_K}v\bigotimes_{Y\in\mc{\tilde Y}}c_Y\sim e_1^{n_1}\dots e_j^{n_j+1}\dots e_K^{n_K}\bigotimes_{Y\in\mc{\tilde Y}}c_Y,\]
whenever $e_j\cap v\neq\emptyset$.
\end{theorem}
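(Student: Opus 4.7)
The plan is to exploit the tensor product structure of the \'{S}wi\k{a}tkowski complex over the edge polynomial ring $\ZZ[E]$, reducing the computation to local data at each degree-three vertex and then assembling via a K\"unneth argument.

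\emph{Step 1: Local-to-global decomposition.} For each vertex $v$ of degree three, set $W_v := \ZZ[E]\otimes_{\ZZ}S_v$, regarded as a two-term chain complex of free $\ZZ[E]$-modules concentrated in degrees $0$ and $1$, with boundary $\partial h = e(h)\,\emptyset_v - v$ extended $\ZZ[E]$-linearly. Unpacking the Eilenberg--Zilber rule that defines the \'{S}wi\k{a}tkowski boundary gives an isomorphism of chain complexes of $\ZZ[E]$-modules
\[
S(\Gamma)\;\cong\;\bigotimes_{v\in V(\Gamma)} W_v,
\]
with the tensor product taken over $\ZZ[E]$. This turns the global problem into a local one at each vertex, coupled by a K\"unneth assembly.

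\emph{Step 2: Local homology at a degree-three vertex.} With incident edges $e_0,e_1,e_2$ and corresponding half-edges $h_0,h_1,h_2$, the cycle equation $\partial(\sum_i p_i h_i)=0$ becomes the pair $\sum_i p_i=0$ and $\sum_i p_i e_i=0$ in $\ZZ[E]$. Coprimality of $e_0-e_1$ and $e_0-e_2$ in the polynomial ring forces the solution to be a $\ZZ[E]$-multiple of $(e_1-e_2,\,e_2-e_0,\,e_0-e_1)$, whence $H_1(W_v)=\ZZ[E]\cdot c_{Y_v}$ is $\ZZ[E]$-free of rank one, generated by the local $Y$-cycle. A direct cokernel calculation then gives
\[
H_0(W_v)\;\cong\;\ZZ[E]\big/\langle e_i-e_j : i,j\in\{0,1,2\}\rangle,
\]
with the identification $[v]=[e_i]$ in the quotient for every edge $e_i$ incident to $v$.

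\emph{Step 3: K\"unneth assembly in the top two degrees.} Because each $W_v$ is a complex of free $\ZZ[E]$-modules and each $H_1(W_v)$ is itself free, every \Tor-term in the iterated K\"unneth spectral sequence that contains even a single $H_1(W_v)$-factor vanishes. In total degree $N$ every factor must equal $H_1(W_v)$ (each $W_v$ has homological amplitude one), so the spectral sequence collapses and
\[
H_N(S(\Gamma))\;\cong\;\bigotimes_{v} H_1(W_v)\;=\;\ZZ[E]\cdot\prod_{v} c_{Y_v}.
\]
Restricting to the $n$-particle grading, where each $c_{Y_v}$ accounts for two particles and the polynomial factor supplies the remaining $n-2N$, recovers the claimed free basis. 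In total degree $N-1$ exactly one factor must be $H_0(W_k)$ with the remaining $N-1$ factors $H_1(W_v)$'s (free), giving
\[
H_{N-1}(S(\Gamma))\;\cong\;\bigoplus_{k\in V(\Gamma)} H_0(W_k)\otimes\bigotimes_{v\neq k} H_1(W_v)\;\cong\;\bigoplus_{k}\frac{\ZZ[E]}{\langle e_i^{(k)}-e_j^{(k)}\rangle}\cdot\prod_{v\neq k}c_{Y_v}.
\]
Choosing $[v_k]$ as the generator of $H_0(W_k)$ produces cycle representatives $\mathbf e^{\mathbf n} v_k \prod_{v\neq k} c_{Y_v}$, and the identity $[v_k]=[e_i^{(k)}]$ in $H_0(W_k)$ translates verbatim into the stated relation $e\cdot(\cdots)\sim v\cdot(\cdots)$ whenever an edge $e$ meets the vertex $v$.

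\emph{Main obstacle.} The only delicate point is justifying the collapse of the K\"unneth spectral sequence over the non-PID ring $\ZZ[E]$. This rests on two facts: each $H_1(W_v)$ is $\ZZ[E]$-free, so any $\Tor_p$ (for $p\geq 1$) involving even one $H_1(W_v)$-factor is zero; and each $H_0(W_v)$ is cut out by a length-two regular sequence, so pairwise and iterated $\Tor$'s between the $H_0(W_v)$'s over $\ZZ[E]$ also vanish. Together these force every higher-Tor contribution in bidegrees summing to $N$ or $N-1$ to be trivial, giving the $E_2$-degeneration in exactly the degrees we need.
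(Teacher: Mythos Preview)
The paper does not give its own proof of this theorem; it quotes the result from An--Drummond-Cole--Knudsen \cite{Knudsen}. So there is no ``paper's proof'' to compare against, and your proposal should be judged on its own.

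Your local-to-global decomposition $S(\Gamma)\cong\bigotimes_{v}W_v$ over $\ZZ[E]$ is correct, and the local computation in Step~2 is fine: for a degree-three vertex the kernel $H_1(W_v)=\ZZ[E]\cdot c_{Y_v}$ is free of rank one (your gcd argument works because $\ZZ[E]$ is a UFD and the three linear forms are pairwise non-associate irreducibles), and $H_0(W_v)=\ZZ[E]/I_v$ with $I_v$ generated by the regular sequence $(e_1-e_0,e_2-e_0)$. The top-degree claim $H_N=\bigotimes_v H_1(W_v)$ also goes through: iterating K\"unneth, the only possible higher-$\Tor$ contributions in total degree $N$ are $\Tor_p(H_{N-p}(C^{(N-1)}),H_0(W_N))$, and for $p=1,2$ these vanish since $H_{N-1}(C^{(N-1)})$ is free and pairwise $\Tor_{\geq 1}(H_0(W_j),H_0(W_k))=0$ (the four generators of $I_j+I_k$ always form a regular sequence, even for adjacent $j,k$).

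The gap is in your treatment of $H_{N-1}$. Your assertion that ``pairwise and iterated $\Tor$'s between the $H_0(W_v)$'s over $\ZZ[E]$ also vanish'' is false once three or more vertices close up a cycle in $\Gamma$. Take $K_4$: modulo $I_1+I_2$ all five edges meeting vertex $1$ or $2$ collapse to a single variable $s$, and then one of the two generators of $I_3$ becomes zero, so $\Tor_1\big(\ZZ[E]/(I_1+I_2),\,\ZZ[E]/I_3\big)\cong\ZZ[E]/(I_1+I_2+I_3)\neq 0$. Consequently $H_{N-3}(C^{(N-1)})$ is not merely a direct sum of terms of the form $(\text{free})\otimes H_0(W_j)\otimes H_0(W_k)$; it carries extra pieces, and $\Tor_2$ of such a piece against $H_0(W_N)$ can be nonzero (for $K_4$ one finds $\Tor_2\big(\ZZ[E]/(I_1+I_2+I_3),\,\ZZ[E]/I_4\big)\cong\ZZ[E]/\sum_v I_v\neq 0$). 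So the potential contribution $E^2_{2,N-3}$ to $H_{N-1}$ is not ruled out by your argument. The theorem is still true---one can check the rank formula against the paper's Morse-theoretic values for $K_4$---but to close the argument you must either show that the relevant $d_2$-differential kills this term, or bypass the spectral sequence altogether with a direct chain-level computation of cycles and boundaries in degree $N-1$.
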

As we show in section \ref{sec:2nd-product}, the second homology group of configuration spaces of such graphs is also generated by product cycles. Later in this section, by comparing the ranks of homology groups computed via the discrete Morse theory, we argue that $H_4(C_n(K_{3,3}))$ is also generated by product cycles. Interestingly, in $H_3(C_n(K_{3,3}))$ there is a new non-product generator. Using this knowledge, we explain the relations between the product and non-product cycles that give the correct rank of $H_3(C_n(K_{3,3}))$.

\paragraph*{Second homology group}\ There are no pairs of disjoint cycles in $K_{3,3}$, hence the product part for $n=2$ is empty. When $n=3$, there are $12$ $O\times Y$-cycles. This can be seen by choosing the $Y$-graph centered at vertex $1$ on Fig. \ref{K33}b) - there are $2$ cycles disjoint with such a $Y$-subgraph. There are $6$ $Y$-subgraphs in $K_{3,3}$, hence we get the number of $O\times Y$-cycles. One checks by a straightforward calculation that $8$ of them are independent. Hence,
\[\beta_2(D_3(K_{3,3}))=8.\]
When $n=4$, there are new product cycles of the $Y\times Y$-type. There are ${6\choose 2}=15$ cycles of this type, however there are relations between them. Such relations between the $Y\times Y$-cycles arise when one of the cycles is in relation with a different $Y$-cycle. This happens only when we have a situation as on Fig. \ref{rel_cycle_y}. Therefore, cycles of the $Y\times Y$-type, where the hubs of the $Y$-subgraphs, are connected by an edge, are all independent (Fig. \ref{K33_subgraphs}a)). The number of such cycles is $9$. The relations occur between $Y\times Y$-cycles, where the hubs of the subgraphs are not connected by an edge (Fig. \ref{K33_subgraphs}b)). There are $6$ such cycles. The number of relations is $4$. To see this, consider $Y$-subgraph, whose hub is vertex $1$ (Fig. \ref{K33}). Denote this subgraph by $Y_1$. It is straightforward to see that in graph $K_{3,3}-Y_1$ we have $c_{Y_3}\sim c_{Y_6}$. Hence, 
\[(c_{Y_1}\otimes c_{Y_3})\sim (c_{Y_1}\otimes c_{Y_6}).\]
Analogous relations for $Y$-subgrphs that lie on the same side of the $K_{3,3}$ graph as $Y_1$ (see Fig. \ref{K33}a)) read
\[(c_{Y_3}\otimes c_{Y_1})\sim (c_{Y_3}\otimes c_{Y_6}),\ (c_{Y_6}\otimes c_{Y_3})\sim (c_{Y_6}\otimes c_{Y_1}).\]
From the above equations only two are independent. Similar situation happens for relations between pairs of graphs from the other side. The complete set of relations reads 
\[(c_{Y_1}\otimes c_{Y_3})\sim (c_{Y_1}\otimes c_{Y_6})\sim (c_{Y_3}\otimes c_{Y_6}),\ (c_{Y_2}\otimes c_{Y_4})\sim (c_{Y_2}\otimes c_{Y_5})\sim (c_{Y_4}\otimes c_{Y_5}).\]
Therefore,
\[\beta_2(D_4(K_{3,3}))=8+9+2=19.\]
For $n>4$, we have to take into account the distribution of free particles. Whenever two non-neighbouring $Y$-subgraphs are considered, all distributions of free particles are equivalent (Fig. \ref{K33_subgraphs}b)). When the subgraphs are adjacent, there are two different parts of $K_{3,3}$, where the particles can be distributed, see Fig. \ref{K33_subgraphs}a). This gives the formula
\[\beta_2(D_n(K_{3,3}))=8+2+9(n-3)=9n-17,\ n\geq 4.\]
 \begin{figure}[H]
\centering
\includegraphics[width=0.8\textwidth]{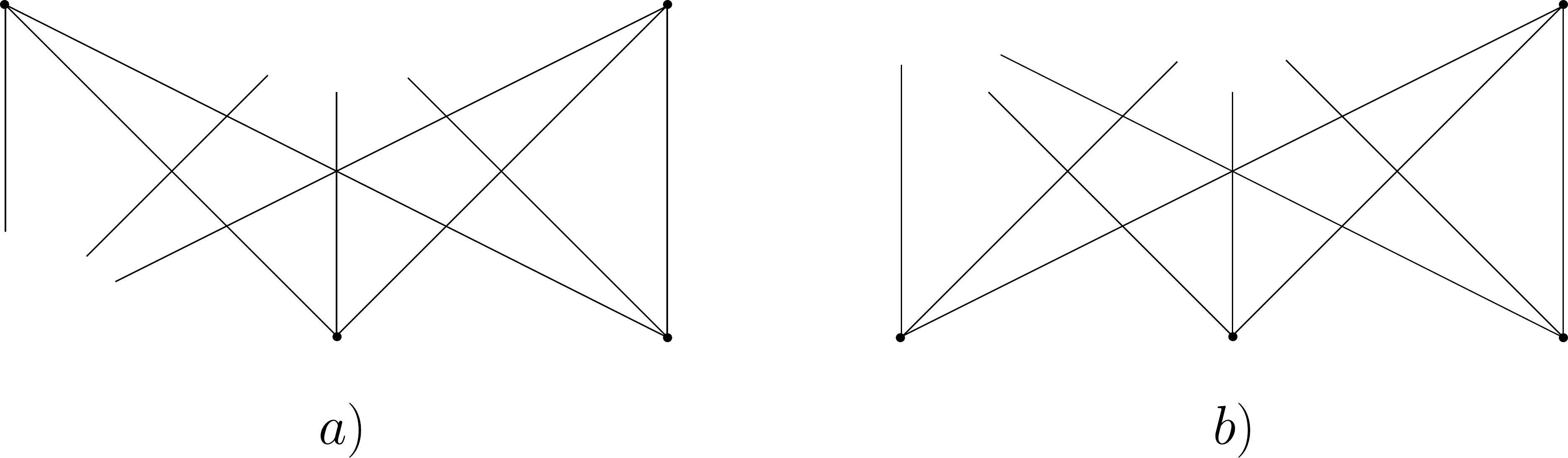}
\caption{Graph $K_{3,3}$ after removing two $Y$-subgraphs.}
\label{K33_subgraphs}
\end{figure}

\paragraph*{Higher homology groups}\ Let us first look at the third homology group. The are no product cycles for $n=4$ however, from the Morse theory for the subdivided graph from Fig. \ref{K33morse4p} we have  
\[\beta_3(D_4(K_{3,3}))=1.\]
 \begin{figure}[H]
\centering
\includegraphics[width=0.5\textwidth]{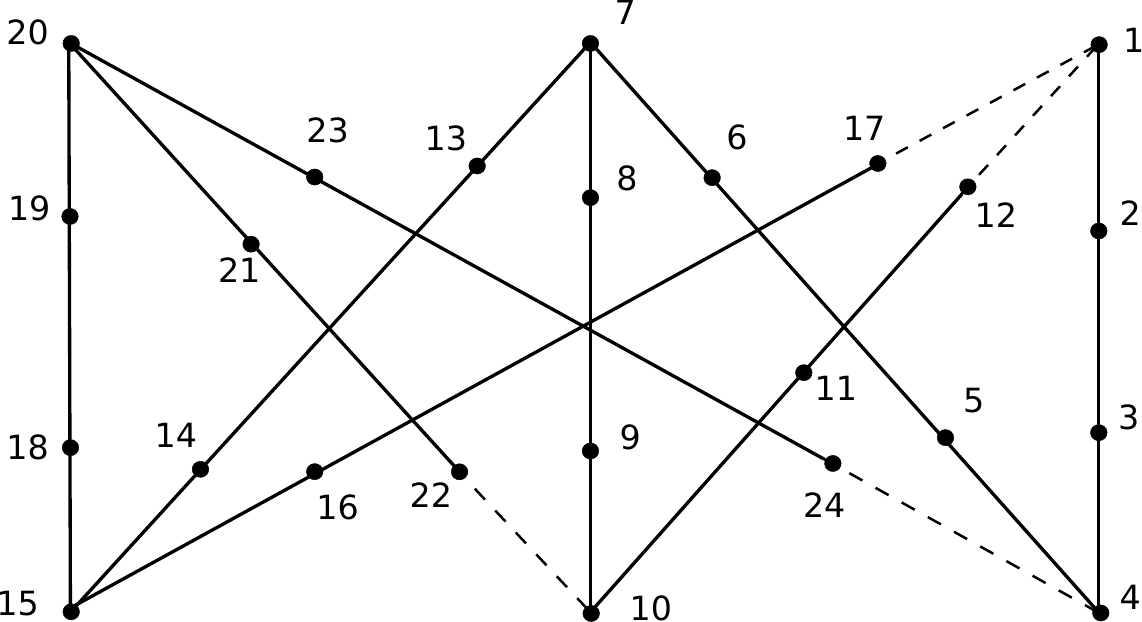}
\caption{Graph $K_{3,3}$ sufficiently subdivided for $n=4$. The deleted edges are marked with dashed lines.}
\label{K33morse4p}
\end{figure}
The Morse complex has dimension $3$. The generator of $H_3(D_4(K_{3,3}))$ is isomorphic to a closed $3$-manifold with Euler characteristic $\chi=11$. The cycle on the level of the Morse complex has the form
\begin{gather*}
c=\left\{e_1^{12},e_4^{24},e_{10}^{22},2\right\}-\left\{e_1^{17},e_4^{24},e_{10}^{22},2\right\}-\left\{e_1^{12},e_{10}^{22},e_{15}^{18},16\right\}+\left\{e_1^{17},e_4^{24},e_{10}^{22},11\right\}+\\\left\{e_1^{17},e_7^{13},e_{10}^{22},8\right\}+\left\{e_1^{12},e_4^{24},e_{15}^{18},16\right\}-\left\{e_4^{24},e_7^{13},e_{10}^{22},8\right\}-\left\{e_1^{12},e_4^{24},e_{10}^{22},5\right\}.
\end{gather*}
\noindent For $n=5$, we have the $Y\times Y\times O$-cycles. These are the cycles, where the $Y$-subgraphs are adjacent. For every pair of adjacent $Y$ subgraphs there is an unique $O$-cycle. An example of such a cycle is 
\[c_{Y_1}\times c_{Y_2}\times \left(\{e_3^4\}+\{e_4^6\}-\{e_5^6\}-\{e_3^5\}\right).\]
The number of all such cycles is equal to the number of pairs of adjacent $Y$-subgraphs which is $9$. Adding the properly embedded generator of $H_3(D_4(K_{3,3}))$, we get
\[\beta_3(D_5(K_{3,3}))=10.\]
For $n\geq 6$, all $Y\times Y\times Y$-cycles are independent. Consider two ways of choosing three $Y$-subgraphs. The first way is to remove two $Y$-graphs from the same side and one from the opposite side. This results with the partition of $K_{3,3}$ into three components (Fig. \ref{K33_subgraphs3}a)). Removing three $Y$-graphs from the same side splits $K_{3,3}$ into three parts (Fig. \ref{K33_subgraphs3}b)). Therefore, 
\[\beta_3(D_n(K_{3,3}))=1+9(n-4)+{6\choose 3}{{n-4}\choose{2}},\ n\geq 6.\]

 \begin{figure}[H]
\centering
\includegraphics[width=0.8\textwidth]{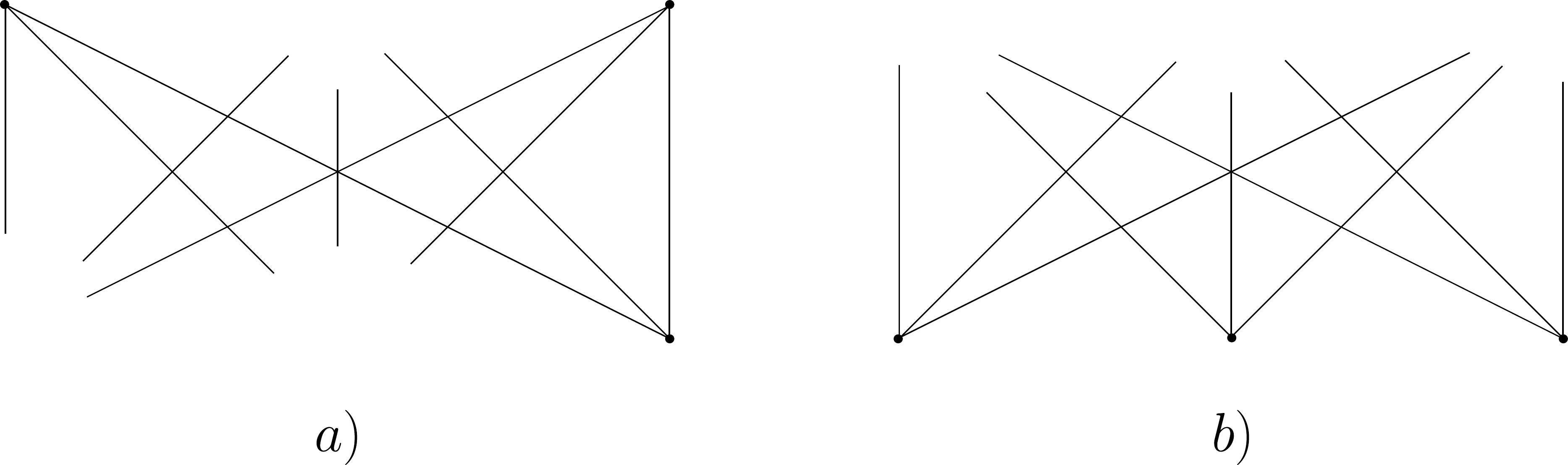}
\caption{Graph $K_{3,3}$ after removing three $Y$-subgraphs.}
\label{K33_subgraphs3}
\end{figure}

\noindent The product contribution to higher homology groups requires considering different choices of $Y$-subraphs. There are no $Y\times Y\times\dots\times Y\times O$-cycles in $H_p(D_n(K_{3,3}))$ for $p\geq 4$. As direct computations using discrete Morse theory show, there are also no non-product generators (see table \ref{morse_table}). Therefore, only $Y\times Y\times\dots\times Y$-cycles contribute to $H_p(D_n(K_{3,3}))$ for $p\geq 4$. Removing four $Y$-graphs from $K_{3,3}$ always results with the splitting into $5$ parts, removing five $Y$-graphs gives $7$ parts and removing all six $Y$-graphs gives $9$ parts. Summing up,
\begin{gather*}
\beta_4(D_n(K_{3,3}))={6\choose4}{{n-4}\choose{4}},\quad \beta_5(C_n(K_{3,3}))={6\choose5}{{n-4}\choose{6}}, \\
\beta_6(C_n(K_{3,3}))={{n-4}\choose{8}}. \\
\end{gather*}
All homology groups higher than $H_6$ are zero for any number of particles.

\subsection{Triple tori in $C_n(K_{2,p})$}\label{sec:K_2p}
In this section we study a family of graphs, where some cycles generating the homology groups of the $n$-particle configuration space are not product. This is the family of complete bipartite graphs $K_{2,p}$ (see figure \ref{fig:k2p}a). The first interesting graph from this family is  $K_{2,4}$. As we show below, its $3$-particle configuration space gives rise to a $2$-cycle which is a triple torus. It turns out that such triple tori together with products of $Y$ cycles generate the homology groups of $C_n(K_{2,p})$. The most convenient discrete model for studying $C_n(K_{2,p})$ is the \'{S}wi\k{a}tkowski model.
 \begin{figure}[ht]
\centering
\includegraphics[width=0.9\textwidth]{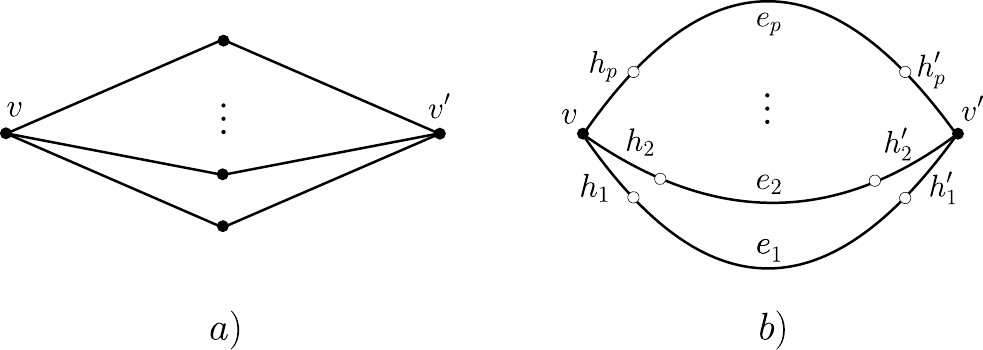}
\caption{a) Graph $K_{2,p}$. b) Graph $\Theta_p$.}
\label{fig:k2p}
\end{figure}
In fact, we study the \'{S}wi\k{a}tkowski configuration space of graph $\Theta_p$ (see \ref{fig:k2p}b) which is topologically equivalent to $K_{2,p}$, but it has the advantage that its discrete configuration space is of the optimal dimension. Because there are no $3$-cells in $S_n(\Theta_p)$, hence automatically we get that
\[H_i(C_n(K_{2,p}))=0\ {\rm for\ }i\geq 3.\]
This in turn means that $H_2(C_n(K_{2,p}))$ as the top homology group is a free group. The first homology group can be computed using the methods of papers \cite{HKRS,KoPark}.
\begin{lemma}
The first homology group of $C_n(K_{2,p})$ is equal to $\ZZ^{p(p-1)}$ for $n\geq 2$ and $p-1$ for $n=1$.
\end{lemma}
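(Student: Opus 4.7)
The plan is to handle $n=1$ separately by direct computation on the graph, and for $n \geq 2$ to pass to the homotopy-equivalent graph $\Theta_p$, use the \'{S}wi\k{a}tkowski complex, and apply the vertex blow-up long exact sequence from the previous section. The case $n=1$ is immediate, because $C_1(K_{2,p}) = K_{2,p}$ and the first Betti number of $K_{2,p}$ equals $|E(K_{2,p})| - |V(K_{2,p})| + 1 = 2p - (p+2) + 1 = p-1$.

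For $n \geq 2$, first note that $K_{2,p}$ is planar (draw it as two hubs joined by $p$ parallel arcs), so by equation (\ref{eq:1st-hom}) the first homology is torsion-free and only the rank must be determined. Next, blow up one of the two degree-$p$ vertices $u$ of $\Theta_p$; the blown-up graph $\Gamma_u$ is the star tree $S_p$, whose configuration-space homology is completely known from \cite{MS17}. Because $S_p$ is a tree with a single essential vertex, $H_d(S_n(S_p)) = 0$ for $d \geq 2$, and because configuration spaces of connected graphs are connected, the connecting map $\delta_{n,0}$ vanishes identically. Hence the long exact sequence (\ref{long-blowup}) collapses to the short exact sequence
$$0 \to \coker(\delta_{n,1}) \to H_1(S_n(\Theta_p)) \to \ZZ^{p-1} \to 0,$$
which splits by freeness of $\ZZ^{p-1}$, yielding $H_1(C_n(K_{2,p})) \cong \coker(\delta_{n,1}) \oplus \ZZ^{p-1}$. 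The task then reduces to showing $\coker(\delta_{n,1}) \cong \ZZ^{(p-1)^2}$, so that the total rank becomes $(p-1)^2 + (p-1) = p(p-1)$.

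The main obstacle is the explicit computation of $\coker(\delta_{n,1})$. Using the formula $\delta[c]_h = e(h_0)[c] - e(h)[c]$, the map $\delta_{n,1}$ acts on the $Y$-cycle generators of $\bigoplus_{p-1} H_1(S_{n-1}(S_p))$ by shifting a single free particle between two edges incident to $u$. The subtlety is that $H_1(S_n(S_p))$ is not freely generated by the naive $Y$-cycle basis but carries additional alternating-sum relations indexed by $4$-subsets of the $p$ edges (already visible in the direct check that the $\binom{4}{3}=4$ naive $Y$-cycles give only $3$ independent classes in $H_1(S_2(S_4))$), so tracking these relations through $\delta_{n,1}$ and extracting the rank of the cokernel is the combinatorial heart of the proof. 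A convenient cross-check is to compute $\chi(S_n(\Theta_p))$ directly from the cell counts of the \'{S}wi\k{a}tkowski complex and to use $H_2(S_n(\Theta_p)) = \ker(\delta_{n,1})$, coming from the same long exact sequence, to independently verify the resulting rank.
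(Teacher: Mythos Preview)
Your overall strategy via the vertex blow-up long exact sequence is sound and is more detailed than what the paper does: the paper gives no proof at all, merely citing \cite{HKRS,KoPark}. However, there is a genuine gap in your argument, and it interacts with a typo in the paper.

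The lemma as printed is wrong: the correct rank is $p(p-1)/2$, not $p(p-1)$. You can see this from the paper itself. Immediately after the lemma, the paper computes the Euler characteristic of $S_n(\Theta_p)$ and then writes equation~(\ref{eq:h2k2p}) using $\chi = 1 - \beta_1 + \beta_2$; comparing that formula with the Euler characteristic shows that the paper is actually using $\beta_1 = p(p-1)/2$. For an independent check take $n=2$: the $2$-cells of $S_2(\Theta_p)$ are $h_i h'_j$, and the coefficient of the distinct $1$-cell $e_i h'_j$ in $\partial\!\left(\sum_{i,j}\alpha_{ij}h_i h'_j\right)$ is exactly $\alpha_{ij}$, so $H_2(S_2(\Theta_p))=0$ and $\beta_1 = 1-\chi = p(p-1)/2$.

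This means your target $\coker(\delta_{n,1})\cong\ZZ^{(p-1)^2}$ is false. Your set-up gives the correct short exact sequence, and $\ker(\delta_{n,0})=\ZZ^{p-1}$ is right, so what you actually need is $\coker(\delta_{n,1})\cong\ZZ^{\binom{p-1}{2}}$, since $\binom{p-1}{2}+(p-1)=p(p-1)/2$. And indeed this is what the cokernel computation yields: the relations $(e_0-e_i)[c]=0$ for all $i$ identify all edge-multiplications in $H_1(S_n(S_p))$, so the cokernel is generated by the classes of the $Y$-cycles $c_{ijk}$ of the star, and these have rank exactly $\binom{p-1}{2}$ (this is the $n=2$ star Betti number you already quoted). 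Had you carried out the ``combinatorial heart'' rather than deferring it, you would have found $\binom{p-1}{2}$, not $(p-1)^2$, and caught the typo. Your Euler-characteristic cross-check is genuinely circular here, since $\beta_2 = \rk\ker(\delta_{n,1})$ and $\beta_1$ are linked by the same exact sequence you are using; it cannot independently confirm the value of~$\beta_1$.
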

By counting the number of $0$-, $1$- and $2$-cells in $S_n(K_{2,p})$, we compute the Euler characteristic (see also \cite{Gal01}).
\begin{lemma}
The Euler characteristic of $S_n(K_{2,p})$ for $n\geq 3$ and $p\geq 3$ is 
\[\chi=(p-1)^2{{n-3+p}\choose{p-1}}-2(p-1){{n-2+p}\choose{p-1}}+{{n-1+p}\choose{p-1}}.\]
\end{lemma}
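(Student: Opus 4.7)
The plan is to compute the Euler characteristic directly by enumerating cells of $S_n(\Theta_p)$, which is homotopy equivalent to $C_n(K_{2,p})$ (so $\chi$ is the same). Let $u,w$ denote the two essential vertices of $\Theta_p$ and $e_1,\dots,e_p$ the $p$ parallel edges. A canonical basis cell of the form (\ref{cells}) contains $d$ half-edges located at pairwise distinct vertices; since $\Theta_p$ has only two vertices, necessarily $d\leq 2$, so $S_n(\Theta_p)$ is concentrated in dimensions $0,1,2$. Hence
\[\chi(S_n(\Theta_p))=c_0-c_1+c_2,\]
where $c_d$ is the number of $d$-cells.

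Next, I would enumerate cells by specifying, independently at each of $u,w$, one of the three options $\{\emptyset,v,h\in H(v)\}$, together with a monomial in $e_1,\dots,e_p$ whose total degree absorbs the remaining particles. Using the identity $\#\{\text{monomials of degree }k\text{ in }p\text{ variables}\}=\binom{k+p-1}{p-1}$ and $|H(u)|=|H(w)|=p$, a direct bookkeeping gives
\[c_0=\binom{n+p-1}{p-1}+2\binom{n+p-2}{p-1}+\binom{n+p-3}{p-1},\]
\[c_1=2p\binom{n+p-2}{p-1}+2p\binom{n+p-3}{p-1},\]
\[c_2=p^2\binom{n+p-3}{p-1},\]
corresponding to: no half-edges (with $0$, $1$ or $2$ essential vertices also carrying a free particle); a single half-edge at $u$ or $w$ with $p$ choices of half-edge (and the other essential vertex either free or absent); and one half-edge at each of $u,w$, respectively.

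Forming $c_0-c_1+c_2$ and collecting by binomial, the coefficients of $\binom{n+p-1}{p-1}$, $\binom{n+p-2}{p-1}$, $\binom{n+p-3}{p-1}$ collapse to $1$, $2-2p=-2(p-1)$, and $1-2p+p^2=(p-1)^2$ respectively, which is exactly the asserted formula. No essential combinatorial obstacle arises; one only has to verify that the hypotheses $n\geq 3$ and $p\geq 3$ make all binomial arguments nonnegative, and to notice that the default convention of reducing only degree-$1$ vertices is vacuous for $\Theta_p$, since both its vertices have degree $p\geq 3$. A quick sanity check at $p=2$ (where $\Theta_p\simeq S^1$ and $\chi$ should vanish) and $p=1$ (where $\Theta_p$ is contractible and $\chi=1$) confirms that the same formula gives the correct values, reinforcing confidence in the bookkeeping.
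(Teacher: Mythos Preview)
Your proof is correct and follows exactly the approach the paper indicates: a direct count of $0$-, $1$- and $2$-cells in the \'Swi\k{a}tkowski complex $S_n(\Theta_p)$, using that at each of the two essential vertices one chooses an element of $\{\emptyset,v\}\cup H(v)$ and the remaining particles are absorbed by an edge monomial. The paper only states that the Euler characteristic is obtained ``by counting the number of $0$-, $1$- and $2$-cells'' (with a reference to Gal) without spelling out the bookkeeping, so your write-up is in fact more detailed than the paper's own treatment; your sanity checks at $p=1,2$ are a nice addition, though strictly outside the stated range.
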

\noindent On the other hand, $\chi(S_n(K_{2,p}))=1-\beta_1(S_n(K_{2,p}))+\beta_2(S_n(K_{2,p}))$. Therefore, we compute the second Betti number of $C_n(K_{2,p})$ as
\begin{gather}\label{eq:h2k2p}
\beta_2(C_n(K_{2,p}))=(p-1)^2{{n-3+p}\choose{p-1}}-2(p-1){{n-2+p}\choose{p-1}}+\\ \nonumber+{{n-1+p}\choose{p-1}}+\frac{p(p-1)}{2}-1\ {\rm for\ }n\geq 3{\rm\ and\ } p\geq 3.
\end{gather}
In the remaining part of this section we describe the generators of $H_2(C_n(K_{2,p}))$ and the relations that lead to the above formula. We represent them in terms of $2$-cycles in $S_n(\Theta_p)$.
\begin{example}{\bf Generators of $H_2(S_n(\Theta_3))$.} 
Group $H_2(S_n(\Theta_3))$ is generated by products of $Y$-cycles at vertices $v$ and $v'$. More precisely, consider the following two $Y$-cycles
\begin{gather*}
c_{123}=e_1(h_2-h_3)+e_2(h_3-h_1)+e_3(h_1-h_2),\\
c'_{123}=e_1(h_2'-h_3')+e_2(h_3'-h_1')+e_3(h_1'-h_2').
\end{gather*}
Group $H_2(S_n(\Theta_3))$ is freely generated by cycles 
\[c_{123}c'_{123}e_1^{n_1}e_2^{n_2}e_3^{n_3}.\]
This can be seen by comparing the number of cycles of the above form with $\beta_2(S_n(\Theta_3))$ from formula \ref{eq:h2k2p}. In both cases the answer is the number of distributions of $n-2$ particles among edges $e_1,\ e_2,\ e_3$ (the problem of distributing $n-2$ indistinguishable balls into $3$ distinguishable bins) which is ${{n-2}\choose{2}}=\frac{1}{2}(n-2)(n-3)$.
\end{example}
\noindent From now on, we denote the $Y$-cycles as
\begin{gather}\label{eq:y-cycle}
c_{ijk}=e_i(h_j-h_k)+e_j(h_k-h_i)+e_k(h_i-h_j),\ i<j<k, \\ \nonumber
c'_{ijk}=e_i(h'_j-h'_k)+e_j(h'_k-h'_i)+e_k(h'_i-h'_j),\ i<j<k.
\end{gather}
Cycle $c_{ijk}$ is the $Y$-cycle of the $Y$-subgraph, whose hub vertex is $v$ and which is spanned on edges $e_i,e_j,e_k$. Cycle $c'_{ijk}$ corresponds to an analogous $Y$-subgraph, whose hub is $v'$.
\begin{example}{\bf The generator of $H_2(S_3(\Theta_4))$.}
Formula (\ref{eq:h2k2p}) tells us that $\beta_2(C_3(K_{2,4}))=1$. The corresponding generator in $S_3(\Theta_4)$ has the following form.
\begin{equation*}
c_{\Theta}=-(h_1-h_2)c'_{134}+(h_1-h_3)c'_{124}-(h_1-h_4)c'_{123}.
\end{equation*}
By expanding the $Y$-cycles, one can see that the above chain is a combination of all $2$-cells of $S_3(\Theta_4)$, hence, $C_n(K_{2,4})$ has the homotopy type of a closed $2$-dimensional surface. Its Euler characteristic is equal to $-4$, hence this is a surface of genus $3$. By the classification theorem of surfaces \cite{GX13}, we identify $C_n(K_{2,4})$ to have the homotopy type of a triple torus (fig. \ref{fig:triple-torus}).
 \begin{figure}[ht]
\centering
\includegraphics[width=0.4\textwidth]{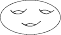}
\caption{A triple torus.}
\label{fig:triple-torus}
\end{figure}
\end{example}
\noindent From now on, we denote the $\Theta$-cycles as
\begin{equation}\label{eq:theta-cycle}
c_{ijkl}=-(h_i-h_j)c'_{ikl}+(h_i-h_k)c'_{ijl}-(h_i-h_l)c'_{ijk},\ i<j<k<l.
\end{equation}
Cycle $c_{ijkl}$ involves cells from $S_3(\Theta_4)$ for $\Theta_4$ being the subgraph of $\Theta_p$ spanned on edges $e_i,e_j,e_k,e_l$. Using the notation set in equations (\ref{eq:theta-cycle}) and (\ref{eq:y-cycle}), we propose the following generators of $H_2(S_n(\Theta_p))$.
\begin{gather*}
c_{ijk}c'_{rst}e_1^{n_1}\dots e_p^{n_p},\ i<j<k,\ r<s<t,\ n_1+\dots+n_p=n-4,\\
c_{ijkl}e_1^{n_1}\dots e_p^{n_p},\ i<j<k<l,\ n_1+\dots+n_p=n-3.
\end{gather*}
Let us start with $n=3$. The key to describe the relations between the $\Theta$-cycles spanned on different $\Theta_4$ subgraphs is to consider graph $\Theta_5$.
\begin{prop}
The $\Theta$-cycles in graph $\Theta_5$ satisfy the following relation
\begin{equation}\label{theta-rel}
c_{1234}-c_{1235}+c_{1245}-c_{1345}+c_{2345}=0.
\end{equation}
\end{prop}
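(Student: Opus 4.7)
The plan is to verify the identity by direct chain-level computation, after first rewriting the $\Theta$-cycles in a manifestly antisymmetric form. The key preliminary observation is a four-term relation among the $Y$-cycles at $v'$: for any four distinct indices $i<j<k<l$,
\[c'_{jkl}-c'_{ikl}+c'_{ijl}-c'_{ijk}=0 \quad \text{as a chain in } S_2(\Theta_p).\]
This is proved by expanding each $c'_{...}$ using (\ref{eq:y-cycle}) and collecting the coefficient of each $e_m$; for $m\in\{i,j,k,l\}$ the half-edges at $v'$ telescope, and for $m\notin\{i,j,k,l\}$ the edge $e_m$ does not appear at all.

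Next I would use this identity to put the $\Theta$-cycle in antisymmetric form. Expanding the definition (\ref{eq:theta-cycle}) and grouping by $h_i$ gives
\[c_{ijkl}=h_i\bigl(-c'_{ikl}+c'_{ijl}-c'_{ijk}\bigr)+h_j c'_{ikl}-h_k c'_{ijl}+h_l c'_{ijk},\]
and the preliminary identity turns the coefficient of $h_i$ into $-c'_{jkl}$. Therefore
\[c_{ijkl}=-h_i c'_{jkl}+h_j c'_{ikl}-h_k c'_{ijl}+h_l c'_{ijk},\]
which is totally alternating in $(i,j,k,l)$. (As a sanity check, one can also verify that with this formula $\partial c_{ijkl}=-e_i c'_{jkl}+e_j c'_{ikl}-e_k c'_{ijl}+e_l c'_{ijk}$, using the preliminary identity a second time; this confirms $c_{ijkl}$ is a cycle and makes the structure of the argument transparent.)

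The third step is to plug the antisymmetric form into $\Sigma:=c_{1234}-c_{1235}+c_{1245}-c_{1345}+c_{2345}$ and collect terms by the coefficient of each $h_a$, $a\in\{1,\ldots,5\}$. For fixed $a$, only the four summands $c_{\{1,\ldots,5\}\setminus\{m\}}$ with $m\neq a$ contribute to $h_a$. A direct sign check shows that the resulting coefficient of $h_a$ is, up to an overall sign, precisely an alternating four-term sum of $c'_{S}$ where $S$ ranges over the four $3$-subsets of $\{1,\ldots,5\}\setminus\{a\}$. This is exactly an instance of the preliminary identity applied to the four indices $\{1,\ldots,5\}\setminus\{a\}$, so it vanishes. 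Since this holds for every $a$, we conclude $\Sigma=0$.

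The only real obstacle is the sign bookkeeping in step three; conceptually, the identity is transparent because the assignments $[ijk]\mapsto c'_{ijk}$ and $[ijkl]\mapsto c_{ijkl}$ extend to antisymmetric maps out of the simplicial chain complex of the standard simplex on $\{1,\ldots,p\}$, and the proposition is the image of the simplicial boundary relation $\partial[12345]=0$. Making this chain-map framing precise up front reduces the proof to checking antisymmetry (immediate from the formula in step two) together with the preliminary four-term identity.
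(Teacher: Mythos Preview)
Your proof is correct. The paper states this proposition without proof, treating it as a direct chain-level verification; your argument supplies exactly that, and the rewriting of $c_{ijkl}$ in the totally alternating form $-h_i c'_{jkl}+h_j c'_{ikl}-h_k c'_{ijl}+h_l c'_{ijk}$ via the four-term $Y$-cycle identity makes the computation clean. The simplicial chain-complex framing at the end is a nice conceptual bonus that the paper does not make explicit.
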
 
\noindent In graph $\Theta_p$, many relations of the form (\ref{theta-rel}) can be written down by choosing different $\Theta_5$ subgraphs. The linearly independent ones are picked by choosing the corresponding $\Theta_5$-subgraphs that are spanned on edge $e_1$ and some other four edges of $\Theta_p$. Such a choice can be made in ${p-1}\choose{4}$ ways. Subtracting the number of linearly independent relations from the number of all $\Theta_4$ subgraphs, we get
\[\beta_2(C_3(K_{2,p}))={{p}\choose{4}}-{{p-1}\choose{4}}={{p-1}\choose{3}}.\]
Increasing the number of particles to $n=4$ introduces products of $Y$-cycles and new relations. First of all, by proposition \ref{theta-dist} different distributions of additional particles in the $\Theta$-cycle can be realised are combinations of different products of $Y$-cycles.
\begin{prop}\label{theta-dist}
In graph $\Theta_4$, we have the following relations
\begin{gather*}
(e_1-e_2)c_{1234}=c_{124}c'_{123'}-c_{123}c'_{124},\\
(e_1-e_3)c_{1234}=c_{123}c'_{134'}-c_{134}c'_{123},\\
(e_1-e_4)c_{1234}=c_{124}c'_{134'}-c_{134}c'_{124}.
\end{gather*}
\end{prop}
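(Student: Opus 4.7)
The three identities in the proposition are equalities of $2$-chains in the \'{S}wi\k{a}tkowski complex $S_n(\Theta_4)$, which has no cells of dimension $\geq 3$---since $\Theta_4$ has only two vertices $v,v'$ and each factor $S_v$ contributes at most one degree-one generator per cell. Consequently the identities cannot be proved modulo a boundary; they must be verified directly at the chain level. My plan is to first rewrite $c_{1234}$ in a manifestly $S_4$-antisymmetric form, and then to match coefficients of each basis $2$-cell $h_a h'_b$ on both sides.

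The key preliminary fact is the linear relation $c'_{123}-c'_{124}+c'_{134}-c'_{234}=0$ among the four $Y$-cycles at $v'$ spanned on triples of $\{1,2,3,4\}$. This follows by expanding each $c'_{ijk}$ into monomials $e_m h'_n$ and observing that every such monomial cancels; it is the exact analog of the identity $c_{123}-c_{124}+c_{134}-c_{234}=0$ at $v$. Substituting $c'_{234}=c'_{123}-c'_{124}+c'_{134}$ into the $h_1$-coefficient of $c_{1234}$ (which by the definition equals $-c'_{134}+c'_{124}-c'_{123}$) yields the symmetric representation
\begin{equation*}
c_{1234}\;=\;-h_1 c'_{234}+h_2 c'_{134}-h_3 c'_{124}+h_4 c'_{123},
\end{equation*}
from which the coefficient of a basis cell $h_a h'_b$ in $(e_1-e_2)c_{1234}$ is $(\pm1)(e_1-e_2)\cdot(\text{coeff.\ of }h'_b\text{ in }c'_{\widehat{a}})$, where $\widehat{a}\subset\{1,2,3,4\}$ is the complementary triple; each such coefficient of $h'_b$ is itself an edge-difference $e_r-e_s$.

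For the right-hand side, I use the compact expression $c_{ijk}=-(h_i e_{jk}+h_j e_{ki}+h_k e_{ij})$ with $e_{ij}:=e_i-e_j$, and likewise for $c'_{ijk}$. The coefficient of $h_a h'_b$ in $c_{124}c'_{123}-c_{123}c'_{124}$ is then a difference of two products of edge-differences, which simplifies by a routine identity of the form $(e_\alpha-e_\beta)(e_\gamma-e_\delta)-(e_{\alpha'}-e_{\beta'})(e_{\gamma'}-e_{\delta'}) = \pm(e_1-e_2)(e_r-e_s)$, matching the LHS. There are 16 cases indexed by $(a,b)\in\{1,\dots,4\}^2$, but most vanish automatically (whenever $h'_b$ does not appear in the relevant $c'$-factor) and the surviving cases all follow a single pattern; the generic check, for $h_1 h'_2$, reduces to $(e_4-e_2)(e_1-e_3)-(e_3-e_2)(e_1-e_4)=(e_1-e_2)(e_4-e_3)$. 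The second and third identities follow from the first by the permutations $2\leftrightarrow 3$ and $2\leftrightarrow 4$ of indices, using the $S_4$-covariance guaranteed by the symmetric form of $c_{1234}$.

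The only real obstacle is careful sign tracking, since the number of index permutations multiplies the chances of error; the symmetric form of $c_{1234}$ (rather than the defining asymmetric one, which singles out the index $1$) is essential here, as it makes both the $S_4$-action on indices and the relationship to product $Y$-cycles transparent.
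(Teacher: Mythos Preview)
The paper gives no proof of this proposition; it is stated as a direct computational fact and used immediately. Your approach---chain-level verification by comparing coefficients of the basis $2$-cells $h_ah'_b$, streamlined by the antisymmetric rewriting $c_{1234}=-h_1c'_{234}+h_2c'_{134}-h_3c'_{124}+h_4c'_{123}$---is the natural one and essentially the only one available, since (as you correctly observe) $S_n(\Theta_4)$ has no $3$-cells and hence no nontrivial $2$-boundaries.

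There is, however, a gap in your execution: the sample check you present does not actually close. For the coefficient of $h_1h'_2$ on the left-hand side, the $h_1$-part of $c_{1234}$ is $h_1(-c'_{134}+c'_{124}-c'_{123})$, and extracting $h'_2$ gives $(e_1-e_4)-(e_1-e_3)=e_3-e_4$; thus the LHS coefficient is $(e_1-e_2)(e_3-e_4)$. Your RHS computation correctly produces $(e_4-e_2)(e_1-e_3)-(e_3-e_2)(e_1-e_4)=(e_1-e_2)(e_4-e_3)$, which is the \emph{negative}. The discrepancy is not local to this one cell; it persists throughout (e.g.\ the $h_2h'_1$ coefficient gives $(e_1-e_2)(e_4-e_3)$ on the left versus $(e_1-e_2)(e_3-e_4)$ on the right). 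This indicates the proposition as printed carries a global sign error---note also the stray prime in ``$c'_{123'}$''---and your write-up should flag the discrepancy rather than assert the check succeeds. Your permutation argument for deducing the second and third identities is sound in principle (antisymmetry of $c_{1234}$ and of $c_{ijk}$ under index transpositions does the work), but it inherits the same sign issue: applying $2\leftrightarrow3$ to the first line and tracking signs gives $(e_1-e_3)c_{1234}=c_{134}c'_{123}-c_{123}c'_{134}$, again the negative of what is printed.
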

\noindent Hence, all $\Theta$-cycles generate a subgroup of $H_2(S_n(\Theta_p))$ which is isomorphic to $\ZZ^{{p-1}\choose{3}}$. The last type of relations we have to account for\footnote{We do not mention here the typical relations between different $Y$-cycles on $Y$-subgraphs of the $S_p$ graphs which are met while computing the first homology group of the configuration spaces of star graphs (see \cite{HKRS}). Such relations are also inherited by the products of $Y$-cycles.} are the new relations between products of $Y$-cycles.
\begin{prop}\label{prod-rel}
In graph $\Theta_5$, products of $Y$-cycles satisfy
\[c_{123}c'_{145}+c_{145}c'_{123}+c_{125}c'_{134}+c_{134}c'_{125}-(c_{124}c'_{135}+c_{135}c'_{124})=0.\]
\end{prop}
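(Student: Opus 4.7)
The plan is to verify the claimed identity directly as an equality of $2$-chains in $S_n(\Theta_5)$, exploiting the product structure of the \'{S}wi\k{a}tkowski complex $S(\Theta_p) = \ZZ[E]\otimes S_v\otimes S_{v'}$. Since $\Theta_5$ has only two vertices of degree $>1$, there are no $3$-cells, so equality of $2$-cycles in homology is literally equality as chains.

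First I would unpack each $Y$-cycle into elementary "wedges". Defining $\omega_{ij} := e_i h_j - e_j h_i$ in $\ZZ[E]\otimes S_v$ and $\omega'_{ij} := e_i h'_j - e_j h'_i$ in $\ZZ[E]\otimes S_{v'}$ (so $\omega_{ij}=-\omega_{ji}$), a direct expansion of formula (\ref{eq:y-cycle}) gives
\[c_{ijk} = \omega_{ij} + \omega_{jk} + \omega_{ki},\qquad c'_{ijk} = \omega'_{ij} + \omega'_{jk} + \omega'_{ki}.\]
Then I would introduce the symmetrised products
\[Z_{ij;kl} := \omega_{ij}\,\omega'_{kl} + \omega_{kl}\,\omega'_{ij},\]
which enjoy the symmetries $Z_{ij;kl}=Z_{kl;ij}=-Z_{ji;kl}=-Z_{ij;lk}$, and rewrite each symmetrised product $c_{abc}c'_{rst}+c_{rst}c'_{abc}$ appearing in the proposition as the sum of the nine $Z_{pq;uw}$ obtained by pairing an oriented edge of the triangle $abc$ with an oriented edge of $rst$.

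The next step is the bookkeeping. The six products on the left-hand side expand into $54$ $Z$-terms. Canonicalising each to the form $Z_{ij;kl}$ with $i<j$, $k<l$ and $(i,j)\leq_{\rm lex}(k,l)$, all coefficients of $Z_{ij;kl}$ with $\{i,j\}\cap\{k,l\}\neq\emptyset$ vanish after cancellation, while the surviving terms organise themselves by the "missing index" $m\in\{1,\dots,5\}$. For each $m$, the contribution is (up to a global sign $\epsilon_m=\pm 1$) exactly
\[\epsilon_m\bigl(Z_{ab;cd}-Z_{ac;bd}+Z_{ad;bc}\bigr),\]
where $\{a,b,c,d\}=\{1,\dots,5\}\setminus\{m\}$ with $a<b<c<d$. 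It then suffices to prove the Pl\"ucker-type identity $Z_{ab;cd}-Z_{ac;bd}+Z_{ad;bc}=0$ on every $4$-subset. Using the symmetric tensor $s_{pq}:=h_p\otimes h'_q+h_q\otimes h'_p$, one computes
\[Z_{ij;kl}=e_ie_k\,s_{jl}-e_ie_l\,s_{jk}-e_je_k\,s_{il}+e_je_l\,s_{ik},\]
and a short tabulation shows that each of the six monomials $e_\alpha e_\beta\,s_{\gamma\delta}$ occurring in $Z_{ab;cd}-Z_{ac;bd}+Z_{ad;bc}$ appears with coefficient $0$. This is nothing but the classical Pl\"ucker relation applied to the pair of $2\times 4$ matrices whose columns are $(e_p,h_p)$ and $(e_p,h'_p)$ respectively.

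The main obstacle is the combinatorial accounting in the second paragraph: one has to juggle three sign sources simultaneously, namely the orientations of the triangle edges $(p,q)\in\partial\{a,b,c\}$, the antisymmetry $Z_{ij;kl}=-Z_{ji;kl}$, and the pair-swap $Z_{ij;kl}=Z_{kl;ij}$. The cleanest route is to write out the $9+9+9$ canonicalised terms of the three symmetrised products $[123|145]$, $[125|134]$ and $[124|135]$ in separate tables and only then sum; after the intersecting-pair terms drop out, the regrouping into five Pl\"ucker blocks (one per missing element of $\{1,\dots,5\}$) is immediate, and no deeper input is needed beyond the Pl\"ucker identity, which is the one genuinely non-trivial ingredient.
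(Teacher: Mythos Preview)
Your proof is correct. The paper actually states this proposition without proof, leaving it as a direct chain-level verification; your argument supplies exactly that, and with more structure than a brute-force expansion.

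The decomposition $c_{ijk}=\omega_{ij}+\omega_{jk}+\omega_{ki}$ and the introduction of the symmetrised $Z_{ij;kl}$ is a genuine simplification: it makes the cancellation of all intersecting-pair terms transparent and reduces the remaining fifteen disjoint-pair terms to five copies of the quadratic Pl\"ucker relation $Z_{ab;cd}-Z_{ac;bd}+Z_{ad;bc}=0$, one for each missing index. Your computation $Z_{ij;kl}=e_ie_k\,s_{jl}-e_ie_l\,s_{jk}-e_je_k\,s_{il}+e_je_l\,s_{ik}$ is exactly the $2\times 2$ minor identity, so the Pl\"ucker step is immediate. One small remark: you do not need to appeal to two separate $2\times 4$ matrices; a single bilinear form (the $s_{pq}$) suffices, as your own tabulation shows. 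Overall the argument is cleaner than a naked $54$-term expansion and correctly identifies why the relation holds at the level of chains (no $3$-cells in $S_n(\Theta_5)$, hence homological equality is chain equality).
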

\noindent Again, many relations of type \ref{prod-rel} can be written by picking different $\Theta_5$ subgraphs. Similarly as in the case of relations \ref{theta-rel}, the linearly independent ones are chosen by fixing $e_1$ to be the common edge of the $\Theta_5$ subgraphs. Hence, the number of linearly independent relations is ${p-1}\choose{4}$. In particular, we have
\[\beta_2(C_4(K_{2,p}))={{p-1}\choose{3}}+\left(\beta_1(C_2(S_p))\right)^2-{{p-1}\choose{4}},\]
where $\left(\beta_1(C_2(S_p))\right)^2$ is the number of independent product cycles after taking into the account the relations within the two opposite star subgraphs. All the above relations are inherited by the cycles in $S_n(\Theta_p)$ after multiplying them by a suitable polynomial in the edges of $\Theta_p$. In this way, they yield equation (\ref{eq:h2k2p}).

\subsection{When is $H_2(C_n(\Gamma))$ generated only by product cycles?}\label{sec:2nd-product}
In this section we prove the following theorem.
\begin{theorem}\label{thm:2nd-hom}
Let $\Gamma$ be a simple graph, for which $|\{v\in V(\Gamma):\ d(v)>3\}|=1$. Then group $H_2(C_n(\Gamma))$ is generated by product cycles.
\end{theorem}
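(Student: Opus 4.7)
My plan is to work in the \'{S}wi\k{a}tkowski complex $S_n(\Gamma)$ and apply the vertex-blowup long exact sequence (\ref{long-blowup}) at the unique vertex $v\in V(\Gamma)$ of degree greater than $3$. Blowing up $v$ splits it into $d(v)$ degree-$1$ leaves, so the resulting graph $\Gamma_v$ has maximum degree at most $3$. Using the identification $H_2(\tilde S^v_n(\Gamma))\cong H_2(S_n(\Gamma))$ and restricting (\ref{long-blowup}) to $d=2$, I would extract the short exact sequence
\[
0\to \coker(\delta_{n,2})\to H_2(S_n(\Gamma))\to \ker(\delta_{n,1})\to 0,
\]
and then show that both $\coker(\delta_{n,2})$ and $\ker(\delta_{n,1})$ admit generating sets whose lifts to $H_2(S_n(\Gamma))$ consist of product cycles.

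For the kernel, I would use the description of $H_1(S_{n-1}(\Gamma_v))$ via $O$- and $Y$-cycles (equation (\ref{eq:1st-hom})) together with the explicit form $\delta[b_h]=(e(h_0)-e(h))[b_h]$ of the connecting map. Every kernel class is a collection $\{b_h\}_h$ of $1$-cycles in $\Gamma_v$ whose weighted sum of edge differences becomes a boundary in $S_n(\Gamma_v)$. Each such relation lifts canonically in $S_n(\Gamma)$ either to an $O$-cycle through $v$ or to a $Y$-cycle centred at $v$, multiplied by an independent $1$-cycle or a particle distribution in $\Gamma_v$ -- that is, precisely to a product cycle. For the cokernel, the key input is generation of $H_2(S_n(\Gamma_v))$ by product cycles of $\Gamma_v$. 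Once that is available, quotienting by $\im\delta_{n,2}$ only identifies product cycles that differ by the distribution of a free particle across the split at $v$, an identification that keeps the generating set within the class of product cycles.

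The main obstacle is the generation statement for $H_2(S_n(\Gamma_v))$ when $\Gamma_v$ has maximum degree $3$. The base case $n=2$ is covered by Barnett--Farber \cite{BF09}. For larger $n$ I would argue by induction on $n$, exploiting the fact that in a maximum-degree-$3$ graph each essential vertex carries a unique local $Y$-subgraph, so the only new cycle classes introduced by additional particles approaching such a vertex are $Y$-cycles. Combined with the $\Theta$-relation (\ref{theta3-rel}) and the $c_{AB}^{(1)}+c_2-c_Y$ identity (\ref{y-ab-rel}), which allow one to trade different distributions of free particles inside an $O$- or $Y$-cycle for $Y\times Y$ products, this should reduce an arbitrary $2$-chain to a sum of product cycles modulo boundaries. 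Once the generation statement for $\Gamma_v$ is secured, feeding it through the blowup exact sequence above delivers Theorem \ref{thm:2nd-hom}.
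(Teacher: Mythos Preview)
Your overall framework matches the paper exactly: blow up the unique high-degree vertex $v$, extract the short exact sequence $0\to\coker(\delta_{n,2})\to H_2(S_n(\Gamma))\to\ker(\delta_{n,1})\to 0$, and show both ends are generated by product cycles. Your treatment of $\ker(\delta_{n,1})$ also agrees: the paper identifies kernel elements with $O$-cycles through $v$ and $Y$-cycles centred at $v$, each tensored with a disjoint $1$-cycle, and builds an explicit splitting $f:\ker(\delta_{n,1})\to H_2(S_n(\Gamma))$ landing in product classes.

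The substantive divergence is in how you establish product generation of $H_2(S_n(\Gamma_v))$ for the max-degree-$3$ graph $\Gamma_v$. You propose induction on $n$ with Barnett--Farber as base case, but you supply no mechanism for the inductive step: nothing in the toolkit compares $H_2(S_n(\Gamma_v))$ with $H_2(S_{n-1}(\Gamma_v))$ for the \emph{same} graph, and the observation that ``additional particles only introduce $Y$-cycles'' is a statement about $H_1$, not a control on $H_2$. The relations (\ref{theta3-rel}) and (\ref{y-ab-rel}) let you rewrite known product cycles in terms of other product cycles; they do not give a reduction procedure for an arbitrary $2$-cycle. The paper instead keeps $n$ fixed and inducts on the graph: it subdivides each edge outside a spanning tree $T$, then iterates the \emph{same} blowup long exact sequence at the resulting degree-$2$ vertices (Lemma~\ref{lemma:deg3induction}), dropping $\beta_1$ by one at each step until the base case is the tree $T$, where product generation is known from \cite{MS17}. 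At a degree-$2$ vertex the connecting map is easy to analyse because the $\Theta$-relations in $H_1$ of a max-degree-$3$ graph can never be written in the form $ce-ce'=\partial b$, so every kernel element arises from a path disjoint from the cycle and hence lifts to a product $c\otimes c_O$. Your route might be completable via edge-stabilisation arguments, but as written the inductive step on $n$ is a gap; the paper's graph-induction closes it using only the blowup machinery already in play.
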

\noindent In the proof we use the \'Swi\k{a}tkowski discrete model. The strategy of the proof is to first consider the blowup of the vertex of degree greater than $3$ and prove theorem \ref{thm:2nd-hom} for graphs, whose all vertices have degree at most $3$. For such a graph, we choose a spanning tree $T\subset \Gamma$. Next, we subdivide once each edge from $E(\Gamma)-E(T)$. We prove the theorem inductively by showing in lemma \ref{lemma:deg3induction} that the blowup at an extra vertex of degree $2$ does not create any non-product generators. The base case of induction is obtained by doing the blowup at every vertex of degree $2$ in $\Gamma-T$.  This way, we obtain graph which is isomorphic to tree $T$ and we use the fact that for tree graphs the homology groups of $S_n(T)$ are generated by products of $Y$-cycles.
\begin{lemma}\label{lemma:deg3induction}
Let $\Gamma$ be a simple graph, whose all vertices have degree at most $3$. Let $T$ be a spanning tree of $\Gamma$. Let $v\in V(\Gamma)$ be a vertex of degree $2$ and $\Gamma_v$ the graph obtained from $\Gamma$ by the vertex blowup at $v$. If $H_2(S_n(\Gamma_v))$ is generated by product cycles, then $H_2(S_n(\Gamma))$ is also generated by product cycles.
\end{lemma}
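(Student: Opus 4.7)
My plan is to exploit the long exact sequence (\ref{long-blowup}) associated with the vertex blowup $\Gamma \to \Gamma_v$. Since $\deg_{\Gamma}(v) = 2$, the set $H(v) \setminus \{h_0\}$ contains a single half-edge $h_1$, so every direct sum in the sequence collapses to one term. The relevant portion yields the short exact sequence
\[0 \to \coker(\delta_{n,2}) \to H_2(\tilde S^v_n(\Gamma)) \to \ker(\delta_{n,1}) \to 0,\]
with $\delta_{n,d}[b] = (e(h_0) - e(h_1)) [b]$. Since $\tilde S^v_n(\Gamma)$ and $S_n(\Gamma)$ have isomorphic homology, it suffices to show that the image of $\Phi_{n,2}$ and the preimages under $\Psi_{n,2}$ of generators of $\ker(\delta_{n,1})$ can both be realised by product cycles.

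For the cokernel I would simply observe that by hypothesis $H_2(S_n(\Gamma_v))$ is generated by products of $Y$-cycles and $O$-cycles in $\Gamma_v$, and since the blowup only splits a vertex (removing no edges), every such $Y$- or $O$-subgraph of $\Gamma_v$ is canonically one of $\Gamma$. Hence $\Phi_{n,2}$ sends product cycles to product cycles, and $\coker(\delta_{n,2})$ is generated by product cycles.

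The core of the argument is the analysis of the kernel. The key computational identity is that for any 1-chain $p \in S(\Gamma_v)$ with $\partial p = e(h_0) - e(h_1)$, the chain $(h_0 - h_1) - p$ is closed in $S(\Gamma)$ and equals the $O$-cycle $c_O$ of the simple cycle $O$ in $\Gamma$ through $v$ formed by $e(h_0)$, the path $p$, and $e(h_1)$. Given $[b] \in \ker(\delta_{n,1})$ we have $(e(h_0)-e(h_1))\cdot b = \partial \omega$ for some $\omega \in S_n(\Gamma_v)$, and the corresponding lift in $H_2(\tilde S^v_n(\Gamma))$ is represented by $(h_0 - h_1) \cdot b - \omega$. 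If $\omega$ can be chosen of the form $p \cdot b$ with $\mathrm{Supp}(p) \cap \mathrm{Supp}(b) = \emptyset$ at the vertex and half-edge level, then the lift equals exactly the product cycle $c_O \cdot b$.

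The main obstacle is establishing that every kernel class admits such a \emph{disjoint} representation. I would decompose an arbitrary kernel element into basic $Y$- and $O$-cycle pieces using the description of $H_1$ from section \ref{sec:first-homology} (relations (\ref{y-ab-rel}) and the $\Theta$-relations) and argue the equivalence: $[b] \in \ker(\delta_{n,1})$ iff $\mathrm{Supp}(b)$ does not separate $e(h_0)$ from $e(h_1)$ in $\Gamma_v$, since if it did then the distribution chain $(e(h_0)-e(h_1))\cdot b$ would detect the separation and represent a nonzero class in $H_1(S_n(\Gamma_v))$. The hypothesis $\deg v = 2$ is essential here because it reduces the obstruction to a single-path condition rather than a more complicated combinatorial structure arising from higher-valency blowups. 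Granted this equivalence, each generator of $\ker(\delta_{n,1})$ would lift to a product cycle $c_O \cdot b$ in $H_2(S_n(\Gamma))$, completing the proof.
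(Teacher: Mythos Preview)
Your proposal is correct and follows essentially the same route as the paper: the long exact sequence for the degree-$2$ blowup, the resulting short exact sequence in $\coker(\delta_{n,2})$ and $\ker(\delta_{n,1})$, and the identification of kernel elements with generators $b$ admitting a path in $\Gamma_v$ from $e(h_0)$ to $e(h_1)$ disjoint from ${\rm Supp}(b)$, so that the lift is the product $c_O\cdot b$. The paper phrases the last step as constructing a splitting homomorphism $f:[c]\mapsto[c\otimes c_{O_{p(c)}}]$ and justifies the existence of the disjoint path by noting that among the relations in $H_1$ only the particle-distribution relation (not the $\Theta$-relation) can produce $[ce]=[ce']$; your separation criterion is the same observation in different words.
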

\begin{proof}
Long exact sequence corresponding to the vertex blow-up reads
\begin{gather*}
\dots\xrightarrow{\Psi_{n,3}} H_2\left(S_{n-1}(\Gamma_v)\right)\xrightarrow{\delta_{n,2}}H_2\left(S_n(\Gamma_v)\right)\xrightarrow{\Phi_{n,2}}H_2\left(\tilde S^v_n(\Gamma)\right)\xrightarrow{\Psi_{n,2}} \\ \nonumber \xrightarrow{\Psi_{n,2}}H_{1}\left(S_{n-1}(\Gamma_v)\right)\xrightarrow{\delta_{n,1}}H_{1}\left(S_n(\Gamma_v)\right)\xrightarrow{\Phi_{n,1}}\dots.
\end{gather*}
We aim to show that the corresponding long exact sequence
\[0\rightarrow \coker\left(\delta_{n,2}\right)\xrightarrow{}H_2\left(\tilde S^v_n(\Gamma)\right)\xrightarrow{}\ker\left(\delta_{n,1}\right)\rightarrow 0\]
splits. To this end, we construct a homomorphism $f:\ \ker\left(\delta_{n,1}\right)\rightarrow H_2\left(\tilde S^v_n(\Gamma)\right)$ such that $\Psi_{n,2}\circ f=id_{\ker(\delta_{n,1})}$. In the construction we use the explicit knowledge of elements of $\ker\left(\delta_{n,1}\right)$. Such a knowledge is accessible, as we know the generating set of $H_{1}\left(S_{n-1}(\Gamma_v)\right)$ - because all vertices of $\Gamma_v$ have degree at most $3$, it consists of $Y$-cycles and $O$-cycles, subject to the $\Theta$-relations (equations (\ref{theta3-rel}) and (\ref{ab-y-rel})) and the distribution of free particles which say that $[ce]=[cv]$ whenever $v$ is a vertex of $e$. Recall that cycle $c$ represents element of $\ker\left(\delta_{n,1}\right)$ whenever $[ce]=[ce']$, where $e$ and $e'$ are the edges incident to vertex $v$. This happens if and only if cycles $ce$ and $ce'$ are related by a $\Theta$-relation or a particle-distribution relation. However, because all vertices of $\Gamma$ have degree at most $3$, it is not possible to write the $\Theta$ relations in the form $ce-ce'=\partial(b)$ for any $c$. Hence, cycles $ce$ and $ce'$ must be related by the particle distribution, i.e. there exists a path in $\Gamma_v$ which is disjoint with ${\rm Supp}(c)$ and which joins edges $e$ and $e'$. The desired homomorphism $f$ is constructed as follows. For a generator $c$ of $H_{1}\left(S_{n-1}(\Gamma_v)\right)$, find path $p(c)$ which joins $e$ and $e'$ and is disjoint with ${\rm Supp}(c)$. Having found such a path, we complete it to a cycle $O_{p(c)}$ in a unique way by adding to $p$ vertex $v$ and edges $e,e'$. From cycle $O_{p(c)}$ we form the $O$-cycle $c_{O_{p(c)}}$ (see definition \ref{definition:o-cycle}). Homomorphism $f$ is established after choosing the set of independent generating cycles and paths that are disjoint with them. It acts as $f:\ [c]\mapsto [c\otimes c_{O_{p(c)}}]$. Clearly, we have $\Psi_{n,2}([c\otimes c_{O_{p(c)}}])=[c]$ by extracting from $c_{O_{p(c)}}$ the part which contains half-edges incident to $v$. 

This way, we obtained that $H_2\left(\tilde S^v_n(\Gamma)\right)\cong \ker\left(\delta_{n,1}\right)\oplus\coker\left(\delta_{n,2}\right)$ and that elements of $\ker\left(\delta_{n,1}\right)$ are represented by product $c_O\otimes c_Y$ cycles. By the inductive hypopaper, elements of $\coker\left(\delta_{n,2}\right)$ are the product cycles that generate $H_2\left(S_{n-1}(\Gamma_v)\right)$ subject to relations $ce\sim ce'$.
\end{proof}

The last step needed for the proof of theorem \ref{thm:2nd-hom} is showing that the blowup of $\Gamma$ at the unique vertex of degree greater than $3$ does not create any non-product cycles. Here we only sketch the proof of this fact which is analogous to the proof of lemma \ref{lemma:deg3induction}. Namely, using the knowledge of relations between the generators of $H_{1}\left(S_{n-1}(\Gamma_v)\right)$, one can show that the elements of $\ker\left(\delta_{n,1}\right)$ are of two types: i) the ones that are of the form $\partial(c\otimes b_{p(c)})$, where $[c]\in H_{1}\left(S_{n-1}(\Gamma_v)\right)$ and $b_{p(c)}$ is the $1$-cycle corresponding to path $p(c)\subset \Gamma_v$ which is disjoint with ${\rm Supp}(c)$ and whose boundary are edges incident to $v$, ii) pairs of cycles of the form $(c(e_j-e_0),c(e_0-e_j))$, where $e_0, e_i, e_j$ are edges incident to $v$ and $[c]\in H_{1}\left(S_{n-2}(\Gamma_v)\right)$. Such pairs are mapped by $\delta_{n,1}$ to $c\otimes\left((e_j-e_0)(e_0-e_i)+(e_0-e_i)(e_0-e_j)\right)$ which is equal to $\partial(c\otimes c_{0ij})$, where $c_{0ij}$ is the $Y$-cycle corresponding to the $Y$-graph in $\Gamma$ centred at $v$ and spanned by edges $e_0,e_i,e_j$. Next, in order to show splitting of the homological short exact sequences, we consider a homomorphism $f:\ \ker\left(\delta_{n,1}\right)\rightarrow H_2\left(\tilde S^v_n(\Gamma)\right)$, for which $\Psi_{n,2}\circ f=id_{\ker(\delta_{n,1})}$. Such a homomorphism maps $[c]$ to $[c\otimes c_{O_{p(c)}}]$, where $O_{p(c)}$ is the cycle which contains path $p(c)$ and vertex $v$. Pairs $([c(e_j-e_0)],[c(e_0-e_j)])$ are mapped by $f$ to cycles $c\otimes c_{0ij}$. We obtain that $H_2\left(\tilde S^v_n(\Gamma)\right)\cong \ker\left(\delta_{n,1}\right)\oplus\coker\left(\delta_{n,2}\right)$, where the generators of $\ker\left(\delta_{n,1}\right)$ are in a one-to-one correspondence with the product homology classes of $H_2\left(\tilde S^v_n(\Gamma)\right)$ described above. Elements of $\coker\left(\delta_{n,2}\right)$ are also represented by product cycles. These cycles are the generators of $H_2\left(S_{n}(\Gamma_v)\right)$ subject to relations $ce_0\sim ce_i$, $i=1,\dots,d(v)$, where $e_0,e_1,\dots,e_{d(v)}$ are edges incident to $v$.

The task of characterising all graphs, for which $H_2(S_n(\Gamma))$ is generated by product cycles requires taking into account the existence of non-product generators from section \ref{sec:K_2p}. As we show in section \ref{sec:K_2p} the existence of pairs of vertices of degree greater than $3$ in the graph implies that there may appear some multiple tori in the generating set of $H_2(C_n(\Gamma))$ stemming from subgraphs isomorphic to graph $K_{2,4}$. Furthermore, the class of graphs, for which higher homologies of $C_n(\Gamma)$ are generated by product cycles is even smaller. Recall graph $K_{3,3}$ whose all vertices have degree $3$, but $H_3(C_n(K_{3,3}))$ has one generator which is not a product of $1$-cycles (see section \ref{sec:k33}). 

\section{Summary}
In the first part of this paper, we explained that quantum statistics on a topological space $X$ are classified by conjugacy classes of unitary representations of the fundamental group of the configuration space $C_n(X)$. Conversely, every unitary representation of the graph braid group gives rise to a flat complex vector bundle over space $C_n(X)$. We interpret different isomorphism classes of flat complex vector bundles over $C_n(X)$ as fundamentally different families of particles. Among these families we find for example bosons, corresponding to the trivial flat bundle, and fermions that may correspond to a non-trivial flat bundle. Interestingly, there also exist intermediate possibilities called anyons who can live on a trivial as well as on a non-trivial bundle. The existence of more than two isomorphism classes is {\it a priori} possible. However interesting and desirable, an explicit construction of non trivial flat bundles for configuration spaces of $X=\RR^2$ or $X=\RR^3$ is difficult, hence some simplified mathematical models are needed. This motivates the study of configuration spaces of particles on graphs which are computationally more tractable. Topological invariants that give a coarse grained picture of the structure of the set of isomorphism classes of flat complex vector bundles over $C_n(X)$ are the homology groups of configuration spaces. In particular, we point out the important role of Chern characteristic classes that map the flat vector bundles to torsion components of the homology groups of $C_n(X)$ with coefficients in $\ZZ$. In the second part of this paper, we compute homology groups of configuration spaces of certain families of graphs. We summarise the computational results as follows.
\begin{itemize}
\item Configuration spaces of tree graphs, wheel graphs and complete bipartite graphs $K_{2,p}$ have no torsion in their homology. This means that the set of flat bundles over configuration spaces of such graphs has a simplified structure, namely every flat vector bundle is stably equivalent to a trivial vector bundle. Hence, these families of graphs are good first candidates for a class of simplified models for studying the properties of non-abelian statistics.
\item Computation of the homology groups of configuration spaces of some small canonical graphs via the discrete Morse theory shows that in some cases there is a $\ZZ_2$-torsion in the homology. However, we were not able to provide an example of a graph which has a torsion component different than $\ZZ_2$ in the homology of its configuration space.
\item It is a difficult task to accomplish a full description of the homology groups of graph configuration spaces using methods presented in this work. One fundamental obstacle is that such a task requires the knowledge of possible embeddings of $d$-dimensional surfaces in $C_n(\Gamma)$ which generate the homology. However, cycles generating the homology in dimension $2$ of graph configuration spaces have the homotopy type of tori or multiple tori. This fact allowed us to find all generators of the second homology group of configuration spaces of a large family of graphs in section \ref{sec:2nd-product}.
\end{itemize}

Let us next summarise the relevance of distinguishing between trivial and non-trivial bundles. 
\begin{itemize}
\item {\bf Trivial bundles}. They are relevant in the context of quantum computing where one is interested mainly in universality of unitary representations of braid groups and the dimension of the representations grow exponentially with the number of particles. In that context, the fact that one can have different isomorphism classes of vector bundles does not seem to play a significant role. In fact, it is even better not to have many isomorphism classes. If we know that there is just one isomorphism class (all bundles are isomorphic to the trivial bundle) then all representations of the braid group are related to each other via the isomorphism of the corresponding bundles and the problem of classifying them should become more tractable. As we show in this paper, this happens when one considers high-dimensional representations (stable range) of graph braid groups where there is no torsion in the homology of $C_n(\Gamma)$.
\item {\bf Non-trivial bundles}. They become relevant in situations where the rank of the bundle is not too high, i.e. if the considered bundles are not in the stable range. The corresponding representations of braid groups appear in the effective models of non-abelian Chern-Simons particles which are point-like sources mutually interacting via a topological non-Abelian Aharonov-Bohm effect. A model of such particles constrained to move on graphs would be constructed by defining a separate Chern-Simons hamiltonian for each cell of the closure of $C_n(\Gamma)$ viewed as a subset of $\Gamma^{\times n}$. The non-abelian braiding would show up as proper gluing conditions for the wave-functions on the boundaries of cells from $C_n(\Gamma)$ while studying self-adjoint extensions of such a hamiltonian. The moduli space of flat $U(n)$ bundles over $C_n(\Gamma)$ is the space of possible parameters that appear as the gluing conditions (see e.g. \cite{BE92}). This area is still quite unexplored and some more progress has to be made to see how this theory works explicitly for concrete graphs.
\end{itemize}

Finally, the {\it a priori} large number of possible unitary representations of braid groups can be cut down by taking into account the anyonic fusion rules \cite{honeycomb}, i.e. by applying the framework of modular tensor categories. Unitary representations of the braid group that are described by modular tensor categories, are defined by specifying the fusion rules, the so-called $F$-matrix that ensures associativity of fusion and the $R$-matrix that describes braiding of pairs of particles \cite{honeycomb}. Ocneanu rigidity \cite{rigidity} asserts that for anyons on the plane, there is only a finite number of representations of the braid group that arise from the above construction. However, graphs in principle provide more freedom, as pairs of particles may braid differently in different parts of the graph.There are two main differences between braiding particles on the plane and braiding particles on a graph: i) the generating braids are more complicated than just the ones that correspond to braiding of pairs of neighbouring particles, ii) the variety of relations is much bigger than in the case of particles on the plane. In other words, one can have more than just one $R$-matrix for particles on a graph. This property of graph braid groups can be seen already on the level of its abelian representations \cite{HKRS}. Namely, consider a $2$-connected graph which consists of a number of $3$-connected components that are connected to each other (Fig. \ref{fig:abelian}). The theory of abelian representations of graph braid groups \cite{HKRS} tells us that a pair of particles can exchange as bosons or fermions, depending in which component of the graph the exchange takes place. An analogous situation will take place in the case of non-abelian representations -- one can assign different $R$-matrices to different components of the graph.
 \begin{figure}[H]
\centering
\includegraphics[width=0.7\textwidth]{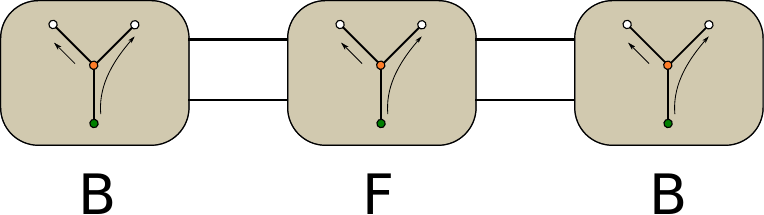}
\caption{A graph that consists of three $3$-connected components (depicted as boxes). The schematically pictured abelian representation of $Br_2(\Gamma)$ is such that exchange of particles in each of the components results with a fermionic (F) or bosonic (B) phase factor. The situation is more complicated for non-abelian representations, but the general characteristic survives - one can choose different $R$-matrices for each of the components.}
\label{fig:abelian}
\end{figure}
One expects that the fusion rules will nevertheless significantly restrict the number of admissible representations of graph braid groups. 

\begin{acknowledgements} 
TM acknowledges the financial support of the National Science Centre of Poland -- grants {\it Etiuda} no. $2017/24/T/ST1/00489$ and {\it Preludium} no. $2016/23/N/ST1/03209$. AS was supported by the National Science Centre of Poland grant {\it Sonata Bis} no. $2015/18/E/ST1/00200$.
\end{acknowledgements}

\end{document}